\newtheorem{theorem}{Theorem}[section]
\newtheorem*{theorem*}{Theorem}
\newtheorem{lemma}[theorem]{Lemma}
\newtheorem{corollary}[theorem]{Corollary}
\theoremstyle{definition}
\newtheorem{definition}[theorem]{Definition}
\theoremstyle{remark}
\newtheorem{remark}{Remark} \DeclarePairedDelimiter\abs{\lvert}{\rvert} \DeclarePairedDelimiter\ceil{\lceil}{\rceil} \DeclarePairedDelimiter\floor{\lfloor}{\rfloor} \DeclarePairedDelimiter\norm{\lVert}{\rVert}
\renewcommand{\AA}{\mathcal{A}}
\let\EE\relax  \newcommand{\EE}{\mathcal{E}}
\newcommand{\FF}{\mathcal{F}}
\newcommand{\II}{\mathcal{I}}
\newcommand{\LL}{\mathcal{L}}
\newcommand{\NN}{\mathcal{N}}
\newcommand{\OO}{\mathcal{O}}
\let\PP\relax  \newcommand{\PP}{\mathcal{P}}
\renewcommand{\SS}{\mathcal{S}}
\newcommand{\VV}{\mathcal{V}}
\newcommand{\XX}{\mathcal{X}}
\newcommand{\xx}{\mathbf{x}}
\newcommand{\yy}{\mathbf{y}}
\newcommand{\eps}{\varepsilon}
\newcommand{\real}{\mathbb{R}}
\newcommand{\nat}{\mathbb{N}}
\newcommand{\natPos}{\nat_+}
\newcommand{\st}{\middle|} \DeclareMathOperator{\dist}{dist}
\DeclareMathOperator{\del}{del}
\mathchardef\mhyph="2D
\newcommand{\pr}[1]{\Pr\left[#1\right]}
\newcommand{\myO}[1]{\OO\!\left(#1\right)}
\newcommand{\myOmega}[1]{\Omega\!\left(#1\right)}
\newcommand{\myTheta}[1]{\Theta\!\left(#1\right)}
\newcommand{\myOTilde}[1]{\tilde{\OO}\!\left(#1\right)}
\newcommand{\modelnamemath}[1]{\operatorname{\mathsf{#1}}}
\newcommand{\modelname}[1]{$\modelnamemath{#1}$}
\newcommand{\congest}{\modelname{CONGEST}\xspace}
\newcommand{\local}{\modelname{LOCAL}\xspace}
\newcommand{\detlcl}{\modelname{det-LOCAL}\xspace}
\newcommand{\detcgt}{\modelname{det-CONGEST}\xspace}
\newcommand{\randlcl}{\modelname{rand-LOCAL}\xspace}
\newcommand{\qlcl}{\modelname{quantum-LOCAL}\xspace}
\newcommand{\philcl}{$\varphi$\modelname{-LOCAL}\xspace}
\newcommand{\nslcl}{\modelname{NS-LOCAL}\xspace}
\newcommand{\nslclFull}{non-signaling \local}
\renewcommand{\det}{{\modelnamemath{det}}}
\newcommand{\rand}{{\modelnamemath{rand}}}
\newcommand{\cluster}{{\normalfont\texttt{cluster}}\xspace}
\newcommand{\leader}{\mathtt{leader}}
\newcommand{\problem}{\PP} \newcommand{\outcome}{\mathrm{O}} \newcommand{\lclCHR}[1]{\mathcal{L}\mathcal{X}_{#1}} \newcommand{\rjoin}[1]{\star_{#1}}
\DeclareMathOperator{\diamop}{diam}
\newcommand{\diam}[1]{\diamop\!\left(#1\right)}
\DeclareMathOperator{\girthop}{girth}
\newcommand{\girth}[1]{\girthop\!\left(#1\right)}
\DeclareMathOperator{\projop}{pr}
\newcommand{\proj}[1]{\projop_{#1}}   \newcommand{\inptData}{x} \newcommand{\id}{\text{id}} \newcommand{\lbl}{\lambda} \newcommand{\inLabels}{\Sigma_{\text{in}}} \newcommand{\inLabel}{\lbl_{\text{in}}} \newcommand{\inLblRes}{\bar{\lbl}_{\text{in}}}  \newcommand{\outLabels}{\Sigma_{\text{out}}} \newcommand{\outLabel}{\lbl_{\text{out}}} \newcommand{\indOutLabel}[1]{\lbl_{\text{(out,#1)}}} \newcommand{\resIndOutLabel}[1]{\bar{\lbl}_{\text{(out,#1)}}} \newcommand{\outLblRes}{\bar{\lbl}_{\text{out}}} \newcommand{\outPr}{p} \newcommand{\lvlCR}{t} \newcommand{\eulerT}[1]{\varphi(#1)} \newcommand{\nc}[1]{NC\!\left(#1\right)}  \newcommand{\gridW}{n_1} \newcommand{\gridH}{n_2}  
\begin{document}

\newcommand{\myemail}[1]{\,$\cdot$\, {\small #1}}
\newcommand{\myaff}[1]{\,$\cdot$\, {\small #1}\par\medskip}

\newenvironment{myabstract}
{\list{}{\listparindent 1.5em
        \itemindent    \listparindent
        \leftmargin    0cm
        \rightmargin   0cm
        \parsep        0pt}\item\relax}
{\endlist}

\newenvironment{mycover}
{\list{}{\listparindent 0pt
        \itemindent    \listparindent
        \leftmargin    0cm
        \rightmargin   1.5cm
        \parsep        0pt}\raggedright
    \item\relax}
{\endlist}

\begin{mycover}
{\huge\bfseries\boldmath No distributed quantum advantage for approximate graph coloring \par}
\bigskip
\bigskip

\textbf{Xavier Coiteux-Roy}
\myaff{School of Computation, Information and Technology, Technical University of Munich, Germany \,$\cdot$\, Munich Center for Quantum Science and Technology, Germany}

\textbf{Francesco d'Amore}
\myaff{Aalto University, Finland}

\textbf{Rishikesh Gajjala}
\myaff{Aalto University, Finland \,$\cdot$\, Indian Institute of Science, India}

\textbf{Fabian Kuhn}
\myaff{University of Freiburg, Germany}

\textbf{Fran{\c c}ois Le Gall}
\myaff{Nagoya University, Japan}

\textbf{Henrik Lievonen}
\myaff{Aalto University, Finland}

\textbf{Augusto Modanese}
\myaff{Aalto University, Finland}

\textbf{Marc-Olivier Renou}
\myaff{Inria, Université Paris-Saclay, Palaiseau, France \,$\cdot$\, CPHT, Ecole Polytechnique, Institut Polytechnique de Paris, Palaiseau, France}

\textbf{Gustav Schmid}
\myaff{University of Freiburg, Germany}

\textbf{Jukka Suomela}
\myaff{Aalto University, Finland}

\bigskip
\end{mycover}

\begin{myabstract}
\noindent\textbf{Abstract.}
We give an almost complete characterization of the hardness of $c$-coloring $\chi$-chromatic graphs with distributed algorithms, for a wide range of models of distributed computing.
In particular, we show that these problems do not admit any distributed quantum advantage. To do that:
\begin{enumerate}
    \item We give a new distributed algorithm that finds a $c$-coloring in $\chi$-chromatic graphs in $\tilde{\mathcal{O}}(n^{\frac{1}{\alpha}})$ rounds, with $\alpha = \bigl\lfloor\frac{c-1}{\chi - 1}\bigr\rfloor$.
    \item We prove that any distributed algorithm for this problem requires $\Omega(n^{\frac{1}{\alpha}})$ rounds.
\end{enumerate}
Our upper bound holds in the classical, deterministic $\mathsf{LOCAL}$ model, while the near-matching lower bound holds in the \emph{non-signaling} model.
This model, introduced by Arfaoui and Fraigniaud in 2014, captures all models of distributed graph algorithms that obey physical causality; this includes not only classical deterministic $\mathsf{LOCAL}$ and randomized $\mathsf{LOCAL}$ but also $\mathsf{quantum}$-$\mathsf{LOCAL}$, even with a pre-shared quantum state.

We also show that similar arguments can be used to prove that, e.g., 3-coloring 2-dimensional grids or $c$-coloring trees remain hard problems even for the non-signaling model, and in particular do not admit any quantum advantage.
Our lower-bound arguments are purely graph-theoretic at heart; no background on quantum information theory is needed to establish the proofs.
\end{myabstract}
 \thispagestyle{empty}
\newpage

\pagenumbering{arabic}

\section{Introduction}

In this work, we settle the distributed computational complexity of approximate graph coloring, for deterministic, randomized, and quantum versions of the \local model of distributed computing.

In brief, the setting is this: We have an input graph $G$ with $n$ nodes. Each node is a computer and each edge represents a communication link. Computation proceeds in synchronous rounds: each node sends a message to each of its neighbors, receives a message from each of its neighbors, and updates its own state. After $T$ rounds, each node has to stop and announce its own output, and the outputs have to form a proper $c$-coloring of the input graph $G$. If the chromatic number of $G$ is $\chi$, in this setting it is trivial to find a $\chi$-coloring in $T = \OO(n)$ rounds, as in $\OO(n)$ rounds all nodes can learn the full topology of their own connected component and they can locally find an optimal coloring by brute force without any further communication. But the key questions are: How well can we color graphs in $T \ll n$ rounds? And how much does it help if we use quantum computers that can exchange quantum information, possibly with a pre-shared entangled state?

\subsection{Main result}

We show that for all constants $c$, $\chi$, and $\alpha$, it is possible to find a $c$-coloring of a $\chi$-colorable graph in $T = \tilde{\OO}(n^{1/\alpha})$ communication rounds if and only if
\[
  \alpha \le \floor*{\frac{c-1}{\chi - 1}}.
\]
For example, if the graph is bipartite ($\chi = 2$), this means that the complexity of $2$-coloring is $\tilde{\Theta}(n)$ rounds, $3$-coloring is $\tilde{\Theta}(\sqrt{n})$ rounds, and $4$-coloring is $\tilde{\Theta}(n^{1/3})$ rounds. Here we use $\tilde{\OO}$ and $\tilde{\Theta}$ to hide polylogarithmic factors, that is, our results are tight up to polylogarithmic factors.

Perhaps the biggest surprise is that this result holds for a wide range of models of distributed computing: the answer is the same for deterministic, randomized, and quantum versions of the \local model, and it holds even if the algorithm has access to shared randomness or pre-shared quantum state (as long as the quantum state is prepared before we reveal the structure of graph $G$).

In particular, we show that there is \emph{no distributed quantum advantage for approximate graph coloring} in the context of the \local model, at least up to polylogarithmic factors.

\subsection{Significance and motivation}

Our work is directly linked to two lines of research: understanding the quantum advantage in distributed settings, and the complexity of distributed graph coloring in classical settings.

\paragraph{Distributed quantum advantage.}

There is a long line of work \cite{elkin2014,legall2018,magniez2022,wu2022,wang2022,izumi2019,censorhillel2022,izumi2020,apeldoorn2022} on quantum advantage in the \congest model---this is a bandwidth-limited version of the \local model. However, much less is known about quantum advantage in the \local model.

Earlier work by \textcite{gavoille2009} and \textcite{arfaoui2014} on \qlcl brought primarily bad news: they showed that many classical \local model lower bounds still hold in the \qlcl model. The quantum advantage demonstrated by \cite{gavoille2009} was limited to constant factors or required pre-shared quantum resources. The major breakthrough was the recent work by \textcite{legall2019} that demonstrated that there is a problem that can be solved in only $2$ rounds using quantum communication, whereas solving it in the classical setting requires \(\myOmega{n}\) rounds.

However, the problem from \textcite{legall2019} is very different from the classical problems commonly studied in the field of distributed graph algorithms, and most importantly, it is not a \emph{locally checkable} problem. Locally checkable problems are graph problems in which the task is to find a feasible solution subject to local constraints---perhaps the best-known example of such a problem is graph coloring. A lot of recent work on the classical \local model has focused on locally checkable problems, and there is nowadays a solid understanding of the landscape of the distributed computational complexity of such problems for the classical models---see, e.g., \cite{balliu18lcl-complexity,balliu18almost-global,Chang2019,Ghaffari2018,balliu20lcl-randomness,fischer17sublogarithmic,Rozhon2019,brandt16lll,chang16exponential,ghaffari17distributed,balliu19mm,Brandt2019automatic}. However, what is wide open is how \qlcl changes the picture.

A major open problem is whether there is \emph{any} locally checkable graph problem that can be solved asymptotically faster in \qlcl in comparison with the classical randomized \local model, and it has been conjectured that no such problem exists \cite{suomela2023open}. In this work we provide more
evidence in support of this conjecture: we show that various problems related to
graph coloring do not admit any significant quantum advantage.

\paragraph{Hardness of distributed coloring.}

In a very recent work, \textcite{akbari23online-local} studied the notion of locality in three different settings: distributed, dynamic, and online graph algorithms. They showed that for locally checkable problems in rooted regular trees the three notions of locality coincide, but more generally the notions are distinct. The prime example of a problem that separates the models is $3$-coloring bipartite graphs: the distributed locality (i.e., round complexity) of the problem is $\Omega(\sqrt{n})$ \cite{brandt2017}, but the online locality is $\OO(\log n)$ \cite{akbari23online-local}. While this demonstrates that there is large gap between distributed and online settings, this also highlights a blind spot in our understanding of seemingly elementary questions in the classical \local model: What, exactly, is the distributed complexity of $3$-coloring bipartite graphs? Can we solve it in $\tilde{\OO}(\sqrt{n})$ rounds? And, more generally, what is the distributed complexity of $c$-coloring $\chi$-colorable graphs?

Given the prominent role graph coloring plays in distributed graph algorithms, the state of the art is highly unsatisfactory---the upper and lower bounds are far from each other, even if we consider the seemingly elementary question of coloring bipartite graphs:
\begin{itemize}
  \item As mentioned above, the complexity of $3$-coloring bipartite graphs is known to be somewhere between $\Omega(\sqrt{n})$ and $\OO(n)$. \textcite{brandt2017} show that $3$-coloring $2$-dimensional grids requires $\Omega(\sqrt{n})$ rounds, and even though they study toroidal grids (which are not necessarily bipartite), the same result can be adapted to also show that $3$-coloring bipartite graphs requires $\Omega(\sqrt{n})$ rounds. It is not known if this is tight; to the best of our knowledge, there is no upper bound other than the trivial $\OO(n)$-round algorithm.
  \item The complexity of $4$-coloring bipartite graphs is only known to be somewhere between $\Omega(\log{n})$ and $\OO(n)$. \Citeauthor{linial1992}'s \cite{linial1992} lower bound for coloring trees applies, so we know that the complexity has to be at least $\Omega(\log n)$, but beyond that very little is known. The lower bound construction from \cite{brandt2017} cannot be used here since it is easy to $4$-color grids. To come up with a nontrivial upper bound, it would be tempting to use network decompositions in the spirit of \textcite{barenboim2012}, but we are lacking network decomposition algorithms with suitable parameters, and in any case this approach cannot produce $4$-colorings or $5$-colorings in $o(\sqrt{n})$ rounds.
\end{itemize}
In this work we solve all these open questions, up to polylogarithmic factors, for the general task of $c$-coloring $\chi$-colorable graphs. We show that there is plenty of room for improvement in both upper and lower bounds. For example, in the case of $4$-coloring bipartite graphs, the right bound turns out to be $\tilde{\Theta}(n^{1/3})$, which is far from what can be achieved with the state of the art outlined above.

\subsection{Contributions in more detail}

We will now describe all of our results and contributions in more detail; we refer to \cref{sec:ideas} for an overview of the proof ideas and to \cref{sec:coloring:ub,sec:lb-technique} for the proofs.

\subsubsection{Classical upper bound (Section \ref{sec:coloring:ub})}

Let us start with the upper bound. We design new distributed algorithms with the following properties:
\begin{restatable}{theorem}{thmMainResultUB}\label{thm:upperbound_full}There exists a \detlcl algorithm $\AA_\det$ and a \randlcl algorithm $\AA_\rand$ that, given a parameter $\alpha \in \nat$,
  find a proper vertex coloring with $\alpha(\chi-1) + 1$ colors in any graph with chromatic number $\chi$, as follows:
	\begin{itemize}
		\item $\AA_\det$ runs in
		$
			\OO\bigl(n^{1/\alpha} \log^{3-1/\alpha} n\bigr) \cdot (\log\log n)^{\OO(1)}
		$
    rounds.
    \item $\AA_\rand$ runs in
		$
			\OO\bigl(n^{1/\alpha} \log^{2-1/\alpha} n\bigr)
		$
		rounds and succeeds with probability $1 - 1/\poly(n)$.
	\end{itemize}
\end{restatable}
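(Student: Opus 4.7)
The plan is to build the coloring algorithm on top of a network decomposition. First I would invoke a classical distributed clustering procedure (e.g., the deterministic Awerbuch--Goldberg--Luby--Plotkin construction, with a randomized variant for $\AA_\rand$) to compute, in $\myOTilde{n^{1/\alpha}}$ rounds, a partition of $V$ into $\alpha$ color classes $V_1,\ldots,V_\alpha$ such that each $V_i$ is a disjoint union of clusters of strong diameter $\myOTilde{n^{1/\alpha}}$, with no edge connecting two distinct clusters inside the same class. The deterministic versus randomized guarantees available for the underlying decomposition are what account for the $\log^{3-1/\alpha} n$ versus $\log^{2-1/\alpha} n$ gap in the final statement.

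Next, I would split the target palette $\{1,2,\ldots,\alpha(\chi-1)+1\}$ into $\alpha$ overlapping sub-palettes $P_i=\{(i-1)(\chi-1)+1,\ldots,i(\chi-1)+1\}$ of size $\chi$, designed so that consecutive sub-palettes share exactly one color. The algorithm then processes the classes sequentially: in phase $i$, every cluster $C\subseteq V_i$ works in parallel, collecting its topology together with the colors assigned in earlier phases to all external neighbors of its boundary vertices, choosing locally a proper coloring of $C$ that uses only colors from $P_i$ and is consistent with those boundary colors, and broadcasting the result back. Because the already-used colors lie in $P_1\cup\cdots\cup P_{i-1}$ and this set intersects $P_i$ only in $\min P_i$, each boundary vertex of $C$ has at most one forbidden color in $P_i$, so the cluster-coloring step reduces to a constrained list-coloring problem on the $\chi$-chromatic induced subgraph $C$.

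The main obstacle is establishing that such a constrained list coloring of $C$ always exists. A purely local, cluster-by-cluster argument is delicate: a clique $K_\chi$ sitting entirely on the boundary with every vertex externally constrained to avoid $\min P_i$ would be infeasible in isolation. I would therefore strengthen either the decomposition or the cluster-coloring step so that such obstructions cannot arise, e.g.\ by allowing the leader to choose both the $\chi$-coloring of $C$ and the permutation mapping its color classes onto $P_i$, and arguing structurally that some such choice is always compatible with every boundary constraint. Verifying this feasibility claim is the heart of the argument; once it is in place, the inductive invariant that after phase $i$ the vertices of $V_1\cup\cdots\cup V_i$ carry a valid partial coloring using only colors from $P_1\cup\cdots\cup P_i$ follows immediately, and the final coloring uses exactly $\alpha(\chi-1)+1$ colors.

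Finally, I would assemble the round count: $\myOTilde{n^{1/\alpha}}$ rounds for the decomposition plus $O(\mathrm{diam})=\myOTilde{n^{1/\alpha}}$ rounds for each of the $\alpha$ coloring phases (gathering the cluster topology and the colored halo, running local computation at the leader, and broadcasting), for a total of $\myOTilde{n^{1/\alpha}}$ rounds once the $\alpha$ factor is absorbed into polylogarithmic terms. Tracking the polylog factors carefully through the deterministic and randomized decomposition primitives then yields the two explicit bounds stated for $\AA_\det$ and $\AA_\rand$.
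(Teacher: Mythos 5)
There is a genuine gap, and you have in fact put your finger on it yourself: the feasibility of the constrained cluster coloring is not established, and the fix you sketch does not work. In your sequential scheme a cluster $C$ of class $V_i$ must produce a proper $\chi$-coloring from $P_i$ in which every boundary vertex with an external neighbor already colored $\min P_i$ avoids that shared color. If $C$ contains a copy of $K_\chi$ whose vertices are all so constrained, no such coloring exists, and letting the leader additionally choose the permutation of the color classes onto $P_i$ does not help: every proper $\chi$-coloring of $K_\chi$ uses all $\chi$ colors of $P_i$, including $\min P_i$, no matter how the classes are permuted. So the ``heart of the argument,'' as you call it, is missing, and the statement you would need is simply false for the receive-constraints formulation you set up.

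The paper resolves this by reversing the direction of the constraint rather than strengthening it. All $\alpha$ palettes share a single hidden color $-1$ (instead of your chain of pairwise overlaps), and a cluster $C$ of any non-final cluster color does not adapt to colors committed by its neighbors; instead it proactively $\chi$-colors the extended set $C \cup N(C)$ using $[\chi-1]\cup\{-1\}$ and then \emph{uncolors} exactly those halo vertices that received $-1$. The resulting partial coloring covers all of $C$, is proper, and has the property that no vertex outside the colored set is adjacent to a $-1$-colored vertex (the Hiding Lemma); hence the shared color can never create a conflict across cluster boundaries, and vertices colored by several clusters just pick any non-$(-1)$ color they were offered. Because clusters now color into their $1$-neighborhoods, same-colored clusters must be at distance at least $4$, which the paper obtains by computing the $(\alpha,d)$-decomposition on $G^3$; this also lets all clusters act in parallel, avoiding the extra factor $\alpha$ that your sequential phases would incur when $\alpha$ is not polylogarithmic. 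Your proposal contains neither the cluster-extension/uncoloring step nor the $G^3$ separation, and without them the approach as described cannot be completed.
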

We note that the algorithms do not need to know $\chi$; it is sufficient to know $\alpha$ and $n$. As a corollary, we can, e.g., $3$-color bipartite graphs in $\tilde{\OO}(\sqrt{n})$ rounds by setting $\alpha = 2$.

The number of colors $c = \alpha(\chi-1) + 1$ may look like a rather unnatural expression, and there does not seem to be a priori any reason to expect that this would be tight---however, as we will see, this is indeed exactly the right number.

\subsubsection{Non-signaling model}

Our main goal is to show that the algorithms in \cref{thm:upperbound_full} are optimal (up to polylogarithmic factors), not only in the classical models but also in, e.g., all reasonable variants of the \qlcl model. To this end, we work in the \emph{non-signaling model}, as defined by \textcite{arfaoui2014}; this is essentially equivalent to the \philcl model defined earlier by \textcite{gavoille2009}.

The non-signaling model is a characterization of output distributions that do not violate the \emph{no-signaling from the future} principle or, equivalently, the \emph{causality} principle \cite{dariano2017}. 
To better understand this, suppose we have some classical \randlcl algorithm
$\AA$ that runs in $T$ rounds and outputs a vertex coloring.
Let $p(G)$ be the output distribution of $\AA$ when run on a graph $G$.
The key observation is that this distribution is not arbitrary---in particular,
it must satisfy the following property:

\begin{definition}[Non-signaling distribution, informal version]\label{def:causality-intro-informal}
  The output distribution $p(G)$ of $\AA$ is non-signaling beyond distance $T$ if the following holds:
  Let \(V\) be a set of nodes of a graph \(G\) with \(\abs{V} = n\).
  Fix a subset of nodes $U \subseteq V$ and consider $p(G) {\restriction_U}$, the restriction of $p(G)$ to $U$. 
  Let $G[U,T]$ be the graph induced by the radius-$T$ neighborhood of $U$ in $G$. 
  Now modify $G$ outside $G[U,T]$ to obtain a different $n$-node graph $G'$
  while preserving $G[U,T] = G'[U,T]$. 
  Then $p(G){\restriction_U} = p(G'){\restriction_U}$.
\end{definition}

Put otherwise, changes more than distance $T$ away from $U$ cannot influence the output distribution of $U$. It is not hard to see that this holds for \detlcl and \randlcl, even if the algorithm has access to shared randomness. 
But what makes this notion particularly useful is that it is satisfied also by the \qlcl model, even with a pre-shared quantum state \cite{arfaoui2014,gavoille2009}. 
Informally, a system that violates the non-signaling property would violate causality and enable faster-than-light communication, which is something that quantum physics does not allow.

We use \nslcl to refer to the non-signaling model. We say that $\AA$ is an \nslcl algorithm that runs in $T$ rounds if it produces an output distribution that is non-signaling beyond distance~\(T\). 
We will then prove statements of the form \enquote{any \nslcl algorithm for this
problem requires at least $T$ rounds.} As a corollary, this gives a $T$-round
lower bound for \detlcl, \randlcl, and \qlcl, even if we have access to shared
randomness and pre-shared quantum states. 
This also puts limits on the existence of so-called finitely dependent colorings \cite{holroyd2016,holroyd2018}.

\subsubsection{Non-signaling lower bounds (Section \ref{sec:lb-technique})}

The precise version of our lower bound result states that, for every large
enough number of nodes $n$, there exists a \(\chi\)-chromatic graph on \(n\)
nodes that is hard to color in \nslcl.

\begin{restatable}{theorem}{thmMainResultLB}\label{thm:lb-coloring}
  Let \(\chi \ge 2\), \(c \ge \chi\) be integers, and let \(\alpha = \floor{\frac{c-1}{\chi-1}}\). Let \(\varepsilon \in (0, \frac{\alpha-1}{\alpha})\) be a real value, and let \(n \in \nat\) with 
  \[
      n \ge \ceil*{\frac{\log \varepsilon^{-1}}{\log (1 + \frac{1}{\alpha}) }}\cdot \frac{(6\chi + 1)^ {\alpha
  +1} - 1}{6}.
  \] 
  Suppose $\AA$ is an \nslcl algorithm for \(c\)-coloring graphs in the family
  \(\FF\) of \(\chi\)-chromatic graphs of \(n\) nodes with success probability
  \(q > \varepsilon\).
  Then the running time of $\AA$ is at least
  \[
      T = \Omega\Biggl(\frac{1}{\chi^{1 + \frac{1}{\alpha}}} \cdot \biggl(\frac{n}{\log \varepsilon^{-1}}\biggr)^{\frac{1}{\alpha}}\Biggr).
  \]
\end{restatable}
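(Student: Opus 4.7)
The plan is to exhibit, for every valid $n$, a $\chi$-chromatic graph $H$ on $\Theta(n)$ vertices on which any non-signaling $c$-coloring algorithm $\AA$ with success probability $>\varepsilon$ must use $\Omega\bigl((n/\log\varepsilon^{-1})^{1/\alpha}/\chi^{1+1/\alpha}\bigr)$ rounds. I would build $H$ recursively in $\alpha$ levels. Set $m \approx \bigl(n/\log\varepsilon^{-1}\bigr)^{1/\alpha} / \chi^{1+1/\alpha}$. A level-$1$ gadget is a primitive $\chi$-chromatic graph whose proper colorings are highly symmetry-constrained (essentially a $\chi$-clique). A level-$k$ gadget is a linear chain of $m$ level-$(k-1)$ gadgets, spliced together by short $\chi$-chromatic ``interface'' subgraphs that force a prescribed color-substitution between adjacent sub-gadgets. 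The top-level graph $H = H_\alpha$ has $\Theta(n)$ vertices, chromatic number $\chi$, and diameter $\Theta(m \chi^{1+1/\alpha})$.

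Assume towards contradiction that $\AA$ has round complexity $T < c_0 m$ for a small absolute constant $c_0$ and success probability $q > \varepsilon$ on $\FF$. The key lemma I would prove is a \emph{non-signaling palette-shrinking step}. Within a level-$k$ gadget there exist two $\chi$-chromatic extensions of the far exterior that agree on every vertex within distance $T$ of a chosen internal probe set $U$, but impose distinct color-identifications on $U$ in any proper $c$-coloring. By \cref{def:causality-intro-informal}, the marginal $p(G){\restriction_U}$ is identical in both extensions; hence its support lies in the intersection of both extensions' valid color classes on $U$, cutting the usable palette at $U$ by at least $\chi - 1$ colors. Iterating this lemma across the $\alpha$ nesting levels reduces the usable palette at the innermost probe from $c$ down to at most $c - \alpha(\chi - 1) \le \chi - 1$ colors, which cannot properly color the base $\chi$-clique, so the marginal puts mass~$0$ on any proper coloring at that probe.

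To amplify the failure probability to $1 - \varepsilon$, I would plant $L = \Theta(\log\varepsilon^{-1})$ disjoint probes throughout $H$, each independently witnessing the above palette collapse on its own marginal; a union-bound-style calculation then forces the global success probability to be at most $\varepsilon$, contradicting $q > \varepsilon$. The total vertex count needed to accommodate $L$ disjoint probes of $\alpha$ nested levels is exactly the geometric expression $\lceil \log\varepsilon^{-1}/\log(1 + 1/\alpha)\rceil \cdot \frac{(6\chi+1)^{\alpha+1}-1}{6}$ in the statement, and the per-level factor $6\chi+1$ reflects the slack in the interface gadgets needed to realize the two distinct $\chi$-chromatic extensions.

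\textbf{Main obstacle.} The principal difficulty is that non-signaling controls only \emph{marginal} distributions on individual subsets, while the contradiction looks like it wants joint consistency across many probes. I would sidestep this by always arguing about supports of single-probe marginals: each palette-shrinking step is applied to one probe's marginal in isolation, and the final contradiction is extracted on a single probe whose support has been driven to the empty set; the multi-probe union bound only needs marginals and never joint correlations. A secondary technical task, largely calculational rather than conceptual, is verifying that each interface gadget genuinely admits the two distinct $\chi$-chromatic patchings while keeping $H$ itself $\chi$-chromatic; this is what pins down the precise $\chi^{1+1/\alpha}$ factor in the denominator of $T$.
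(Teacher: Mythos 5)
There is a genuine gap—in fact two. First, your central ``palette-shrinking'' lemma is a support argument: from the equality of the two marginals on $U$ you conclude that the \emph{support} lies in the intersection of the two extensions' valid color classes. That inference is only sound if the algorithm outputs a proper coloring with certainty; the theorem assumes success probability merely $q>\varepsilon$, where $\varepsilon$ may be tiny, so the marginal's support can contain arbitrary garbage and each ``shrinking'' step yields nothing (and if you replace supports by probability mass, the losses accumulate additively across the $\alpha$ levels and you can only reach a contradiction when $q$ is very close to $1$). In addition, the interface gadgets you need—two $\chi$-chromatic modifications beyond distance $T$ of a probe that force \emph{distinct} color-identifications on the probe in every proper $c$-coloring, with gadget size polynomial of the right degree in $T$—are asserted, not constructed; their existence with the correct size-versus-radius trade-off is precisely the graph-theoretic heart of the problem. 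The paper does not argue inside a $\chi$-chromatic graph at all: it takes Bogdanov's graph, which is locally $\chi$-chromatic at radius $r$ but has chromatic number exceeding $c$, so the algorithm fails on it with probability $1$, and then proves covering and same-size embedding properties (\cref{def:cheating-graph}, \cref{lem:gadget:partition}, \cref{cor:approx-graph-col:cheatgraph-family}) that let the non-signaling property transfer that certain failure to a genuine $\chi$-chromatic instance on exactly $n$ vertices.

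Second, your amplification step is not valid in \nslcl. To push the failure probability up to $1-\varepsilon$ you need a \emph{lower} bound on the probability of the union of the $L$ per-probe failure events; a union bound gives an upper bound, and without independence (which non-signaling outcomes, e.g.\ with pre-shared entanglement, do not provide) all you get from $L$ marginals is $\Pr[\text{fail}]\ge\max_j \Pr[\text{fail at probe }j]$, which does not grow with $L$. This is exactly issue (ii) the paper isolates, and its \cref{thm:lb-technique} resolves it by a dedicated inductive conditional-probability argument: run the outcome on $N$ disjoint copies of the cheating graph (padded to size $n$), use that failure is certain on each copy to pick, copy by copy and conditioned on the previous copies not having failed, a cover element whose conditional failure probability is at least $1/k$, obtaining $\Pr[\cup]\ge 1-(1-1/k)^N$, and only then invoke non-signaling once to move to the valid instance $H_{\xx_N}$ of the same size. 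Your plan to ``extract the contradiction on a single probe whose support has been driven to the empty set'' cannot coexist with success probability only $>\varepsilon$, and your multi-probe step has no mechanism replacing the paper's certain-failure anchor and adaptive choice of events.
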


A key observation is that, if the parameters $\chi$, $c$, and $\varepsilon$ in
\cref{thm:lb-coloring} are constants, then \(T = \Omega(n^{\frac{1}{\alpha}})\),
which matches the upper bound in \cref{thm:upperbound_full} up to
polylogarithmic factors.
In particular, there is at best polylogarithmic room for any distributed quantum
advantage.

While \cref{thm:lb-coloring} implies bounds for coloring bipartite graphs in general, we will also use our techniques to prove bounds for specific bipartite graphs.
By prior work, it is known that $3$-coloring $2$-dimensional grids is hard in the \detlcl model \cite{brandt2017,aboulker2019}. 
We show that this also holds for the non-signaling model:

\begin{restatable}{theorem}{thmCslLclLbGrids}
\label{thm:csl-lcl-lb-grids}
    Let \(\varepsilon \in (0,\frac{3}{4})\) and \(N = \ceil{\log(\varepsilon^{-1}) / \log ( \frac{4}{3})}\). Let \(\gridW, \gridH \in \nat\) with 
    \(
        \floor{\frac{\gridW}{N}} \ge 5 \text{ and } \floor{\frac{\gridH}{N}} \ge 5.  
    \)
    Suppose $\AA$ is an \nslcl algorithm that 3-colors \(\gridW \times \gridH\) grids with probability \(q > \varepsilon\).
    Then, the running time of $\AA$ is at least
    \[
        T = \Omega\biggl(\frac{\min(\gridW, \gridH)}{\log \varepsilon^{-1}}\biggr).
    \]
\end{restatable}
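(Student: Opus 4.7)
The plan is to specialize the probability-chaining technique behind \cref{thm:lb-coloring} to the grid setting, using the topological rigidity of $3$-colorings on $2$-dimensional grids that \textcite{brandt2017} exploited for the classical $\Omega(\sqrt{n})$ lower bound. The combinatorial kernel is that any proper $3$-coloring of a grid assigns a cyclic orientation to each $2 \times 2$ face, with adjacent faces necessarily carrying opposite orientations. Hence a valid $3$-coloring is globally pinned down by an element of a small finite set (essentially a color at one corner together with a parity), and fixing a specific global phase restricts colorings of a chosen witness region to a strict subset.

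The main argument constructs a chain of $N+1$ grids $G_0, G_1, \ldots, G_N$, all of size $\gridW \times \gridH$, together with a small witness patch $U$ in their common interior, so that consecutive grids agree inside the radius-$T$ ball of $U$ but outside that ball they are modified to successively rule out one additional admissible global phase. By the non-signaling property of $\AA$, the restrictions $p(G_i)\restriction_U$ coincide across all $i$, yet the probability mass supported on globally valid colorings must shrink by a factor of at least $3/4$ per step (one of at most four surviving classes is eliminated). Iterating gives $q \le (3/4)^N$, so $q > \varepsilon$ forces $N < \log(\varepsilon^{-1})/\log(4/3)$. Fitting $N$ mutually $2T$-separated modifications inside the grid requires $\min(\gridW, \gridH) = \Omega(NT)$, which rearranges to $T = \Omega(\min(\gridW, \gridH)/\log \varepsilon^{-1})$; the constant-$5$ slack in the hypothesis $\floor{\gridW/N}, \floor{\gridH/N} \ge 5$ is exactly what is needed to place the concentric frames of modifications around $U$.

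The main obstacle is engineering the chain of grid instances. Unlike the general setting of \cref{thm:lb-coloring}, where one may freely attach subgraphs far from $U$, here every $G_i$ must remain a genuine $\gridW \times \gridH$ grid. A natural approach is to place $N$ concentric rectangular frames around $U$ at mutual distance exceeding $2T$ and to let $G_i$ and $G_{i+1}$ differ by a parity-shifting modification confined to one frame---for instance, by adjusting the alignment of an inner annulus of cells---so that the induced orientation constraint on $U$ advances by one step in the finite phase group. The most delicate step is ensuring that each such modification eliminates exactly one of the handful of surviving coloring classes, yielding the clean decay ratio $3/4$ rather than a weaker one; handling the outer boundary of the grid (where the topological orientation argument is cleanest on tori but the theorem is stated for plain grids) inherits the same subtlety that \textcite{brandt2017} confronted in the classical setting.
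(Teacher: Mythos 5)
There is a genuine gap, and it lies at the very heart of your plan: the claimed rigidity of $3$-colorings of plain grids does not exist. A $\gridW \times \gridH$ grid has exponentially many proper $3$-colorings (not every face is even rainbow-colored, so not every face carries an orientation), and there is no finite set of ``global phases'' that a coloring of a witness patch $U$ determines or is determined by. Consequently, your chain $G_0,\dots,G_N$ of genuine grids, all agreeing on the radius-$T$ view of $U$, cannot force the success probability to decay: every $G_i$ is $3$-colorable, colorings of grids are locally flexible, and a modification far from $U$ never ``eliminates one of at most four surviving classes,'' because no such classes exist. The non-signaling property only equates the marginals $p(G_i){\restriction_U}$; it gives you no handle on the global validity probability of $G_{i+1}$ unless you have an instance on which the problem is \emph{unsolvable}, which a grid never is. This is precisely why the paper does not argue within the family of grids at all: it constructs a cheating graph that is locally grid-like but globally not $3$-colorable, namely an odd quadrangulation of the Klein bottle (\cref{def:kb-quadrangulation}, \cref{lem:kb-gadget-chr-num}, using \cref{thm:odd-quadrangulation}), covers it by $k=4$ subgraphs whose radius-$T$ neighborhoods embed into a genuine $\gridW\times\gridH$ grid (\cref{lem:kb-gadget-partition}, \cref{cor:grid:cheatgraph-family}), and then invokes the general \nslcl chaining lemma \cref{thm:lb-technique} on $N$ disjoint copies to force failure probability at least $1-(1-1/4)^N$ on an actual grid instance; the factor $(3/4)^N$ you aim for comes from $k=4$ in that machinery, not from any phase-elimination on grids.

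A secondary but related problem is that you propose to adapt the argument of \textcite{brandt2017}; the paper explicitly observes that this argument is primarily algorithmic and appears fundamentally incompatible with the \nslcl model (no induction over an algorithm's local decisions is available when the ``algorithm'' is only an abstract non-signaling outcome, possibly with pre-shared entanglement). Your boundary caveat about tori versus plain grids points at a real subtlety, but the resolution in the paper is again structural: the impossible instance lives outside the grid family (the Klein bottle gadget), while only its locally-grid-like pieces are transplanted into grids, so the grid boundary never interferes with the impossibility argument. To repair your proposal you would essentially have to reconstruct this ingredient: some graph on which $3$-coloring is impossible yet every radius-$T$ ball matches a grid, which is exactly the cheating-graph route the paper takes.
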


This result is easiest to interpret in the case of a square grid, i.e., $\gridW = \gridH$. Then the lower bound (for constant $\varepsilon$) is simply $\Omega(\sqrt{n})$, where \(n = \gridW \cdot \gridH\), and this is trivially tight since the diameter of the grid is $\OO(\sqrt{n})$; hence the problem can be solved in $\OO(\sqrt{n})$ rounds with a \detlcl algorithm. In particular, there is no room for distributed quantum advantage (beyond possibly constant factors).

Finally, we revisit the classical result by \textcite{linial1992} about the
hardness of coloring trees.
We show that essentially the same lower bound holds in the non-signaling model:

\begin{restatable}{theorem}{thmTreesLb}
\label{thm:trees:lb}
    Let \(c \ge 2\) be an integer, and \(\varepsilon \in (0,1)\).
    Suppose $\AA$ is an \nslcl algorithm that \(c\)-colors trees of size \(n \in \nat\) with probability \(q > \varepsilon\).
    Then, for infinitely many \(n\), as long as \(\varepsilon > e^{-n}\), the running time of $\AA$ is at least
    \[
        T = \Omega\bigl(\log _ c n - \log_c \log \varepsilon^{-1}\bigr).
    \]
\end{restatable}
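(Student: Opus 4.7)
The plan is to exhibit, for infinitely many $n$, a tree on $n$ nodes that requires $\Omega(\log_c n)$ rounds for any \nslcl algorithm to $c$-color with success probability exceeding $\varepsilon$. The approach adapts the classical tree-coloring lower bound of \textcite{linial1992} to the non-signaling setting.

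I would first construct a balanced rooted tree $\TT_D$ of depth $D$ with branching factor $b \ge c$ (for instance, $b = c$ or $b = c+1$), so that $\TT_D$ has $n = \Theta(b^D)$ nodes and thus $D = \Theta(\log_c n)$. The key structural feature is the abundance of \emph{locally isomorphic} subtrees: for $T \le D - O(1)$, any two internal nodes sufficiently far from the root and from the leaves have radius-$T$ neighborhoods that are pairwise isomorphic once IDs are abstracted away. This is the combinatorial hook that will let us apply non-signaling in a uniform way across the tree.

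Next, I would exploit non-signaling to show that the joint output distribution on any edge $(u,v)$ of $\TT_D$ depends only on the structure within distance $T$ of $\{u,v\}$. By the previous isomorphism observation, this forces the algorithm's local output distribution to be essentially the same at every "deep enough" internal edge. I would then run a recursive probabilistic argument on the depth of the tree: at each internal node, the children's colors must all avoid the parent's color, and the conditional success probability per level is bounded away from $1$. Aggregating over the $D - O(T)$ effectively independent levels of recursion yields a global success probability of at most $(1 - \Omega(1))^{D - O(T)}$. Setting this $\le \varepsilon$ and solving for $T$ gives the claimed bound $T = \Omega(\log_c n - \log_c \log \varepsilon^{-1})$. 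The additive $\log_c \log \varepsilon^{-1}$ correction arises naturally from the $\log$ of $\varepsilon^{-1}$ in this inequality.

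The main obstacle is making the recursive / level-by-level argument rigorous, since the \nslcl model permits arbitrary correlations among outputs via shared randomness; the proof cannot simply treat different subtrees as probabilistically independent. The right technical tool is a careful use of non-signaling to argue that the \emph{marginal} distribution on any small edge-set is pinned down by local structure, followed by a union bound or a martingale-style analysis across the $\Theta(D)$ levels of the tree. Threading this argument through the many correlated failure events, while still matching the base-$c$ logarithm in the statement, is the technically delicate part.
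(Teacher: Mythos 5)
Your proposal has a genuine gap, and it is in the central step, not the technicalities you flag at the end. You propose to derive the contradiction from the algorithm's behavior on a legitimate, properly colorable instance (a balanced tree of branching factor $b \approx c$), arguing that identical local views force identical edge marginals and that therefore ``the conditional success probability per level is bounded away from $1$.'' Nothing in the non-signaling definition supports that per-level failure claim, and it is false in general: knowing that the marginal output distribution on every small edge set is pinned down by the local structure is perfectly compatible with the joint output being a proper coloring with probability $1$ (this is exactly the phenomenon of finitely dependent colorings \cite{holroyd2016,holroyd2018}). Moreover, the marginals are not even forced to coincide across isomorphic-looking edges, since a non-signaling outcome may depend arbitrarily on the identifiers inside the $T$-view. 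Worse, your choice of hard instance cannot work at all: on bounded-degree trees, $c$-coloring with $c\ge\Delta$ colors admits randomized \local algorithms running in time exponentially smaller than $\log n$, and any such algorithm yields a non-signaling outcome of small locality that succeeds with high probability on your balanced tree; hence no argument that only inspects the outcome on that instance can yield the claimed $\Omega(\log_c n)$ bound. The hardness of $c$-coloring trees comes from trees of large degree ($d=\Theta(c^2)$), and the failure must be sourced from an instance on which the problem is globally \emph{unsolvable}, not from the tree itself.

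The paper's proof does exactly this. Following \textcite{linial1992}, it takes a Ramanujan graph (\cref{lem:trees:high-girth}): a $d$-regular graph with $d>4c^2$, girth larger than $2T+2$, and chromatic number exceeding $c$, so that every radius-$T$ view is a tree while $c$-coloring the whole graph is impossible, forcing failure with probability $1$. This graph is shown to be a cheating graph in the sense of \cref{def:cheating-graph} (\cref{lem:trees:cheatgraph-family}), with the subgraph cover given by the $k$ radius-$1$ balls; the correlation issue you mention (no independence between far-apart parts in \nslcl) is resolved not by a martingale argument on the tree but by the selection argument of \cref{thm:lb-technique}, which finds, among $N$ disjoint copies of the cheating graph, one covered subgraph per copy whose union fails with probability at least $1-(1-1/k)^N$. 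These $N$ radius-$T$ neighborhoods (which are trees) are then embedded in a legitimate tree on $n=kN$ nodes of small height, and non-signaling transfers the failure probability to this valid instance; choosing $N=k\log\varepsilon^{-1}$ with $k\sim\sqrt{n/\log\varepsilon^{-1}}$ and $T=\Theta(\log_c k)$ gives the stated bound. To repair your proof you would need to replace the balanced tree and the per-level analysis by this (or an equivalent) globally infeasible, locally tree-like construction together with an explicit mechanism for amplifying failure under arbitrary correlations.
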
  

\section{Key new ideas and techniques}\label{sec:ideas}

In this section, we will give an informal overview of the key new ideas and techniques that we use to prove \cref{thm:upperbound_full,thm:lb-coloring,thm:csl-lcl-lb-grids,,thm:trees:lb}. We refer to \cref{sec:coloring:ub,sec:lb-technique} for the formal proofs.

\subsection{Classical upper bound (Section \ref{sec:coloring:ub})}

\paragraph{Background and prior work.}

The only existing distributed algorithm for solving the approximate coloring
problem in general graphs that we are aware of is a folklore algorithm based on
\emph{network decompositions} \cite{awerbuch89,linialsaks93}.
For parameters $\alpha$ and $d$, an \emph{$(\alpha, d)$-network decomposition}
is a partition of the nodes $V$ of a graph $G=(V,E)$ into clusters of (weak)
diameter at most $d$ together with a proper $\alpha$-coloring of the cluster
graph;
recall that the \emph{weak diameter} of a cluster $C\subseteq V$ is the maximum
distance in $G$ between any two nodes in $C$.

Given such a decomposition, it is not hard to see how to color graphs of
chromatic number $\chi$ with $\alpha \chi$ colors in $d$ rounds by using
disjoint color palettes for clusters of different colors:
For every $i \in [\alpha]$, the nodes in a cluster of color $i$ use colors from
the palette $\{i, \alpha+i, 2\alpha+i, \dotsc\}$.
Each such cluster $C$ is colored by having a leader node collect the entire
topology of the cluster and then brute forcing an optimal coloring $\varphi_C$
of the cluster.
Since the graph has chromatic number $\chi$, coloring $\varphi_C$ uses at most $\chi$
colors.
Finally, the leader broadcasts the coloring and assigns each node $v$ in $C$ the
color $\alpha(\varphi_C(v)-1) + i$.
This results in a proper coloring with at most $\alpha\chi$ colors.
In addition, the nodes do not require knowledge of \(\chi\) in advance.
This algorithm has for example been used by \textcite{barenboim2012} to compute
a non-trivial approximate coloring in a constant number of rounds.
Our algorithm is based on two new ideas that are outlined below.

\paragraph{New ingredient 1: New network decomposition algorithms.}

Network decomposition algorithms mostly focus on optimizing the product of
$\alpha$ and $d$ (as most applications of network decompositions require time
proportional to $\alpha d$) or on minimizing the number of cluster colors for a
given maximum cluster diameter (e.g.,
\cite{awerbuch89,linialsaks93,barenboim2012,Rozhon2019,GGHIR23}).
We are interested in network decompositions with a \emph{fixed} number of
cluster colors $\alpha$ that is beyond our control and where we wish to optimize
the value of $d$.
By using existing clustering techniques
\cite{miller2013parallel,chang2023,Rozhon2019,GGHIR23} with some minor
adaptations, we give randomized and deterministic algorithms that, for any
parameter $\eps>0$, compute a set of non-adjacent clusters such that the cluster
diameter of each cluster is $\polylog(n)/\eps$ and the total number of
unclustered nodes is at most $\eps n$.
For every integer $\alpha$, this can in turn be used to compute an
$(\alpha,d)$-network decomposition with $d=\tilde{\OO}(n^{1/\alpha})$.

Let us illustrate the idea for the case $\alpha=2$. 
Setting $\eps=1/\sqrt{n}$, we compute a set of non-adjacent clusters of diameter
$\tilde{\OO}(\sqrt{n})$ such that at most $\OO(\sqrt{n})$ nodes remain
unclustered.
The constructed clusters can all be colored with color $1$ and the connected
components of the unclustered nodes form the clusters of color $2$.
We thus obtain a $(2,\tilde{\OO}(\sqrt{n}))$-network decomposition in time
$\tilde{\OO}(\sqrt{n})$ and one can therefore color graphs of chromatic number
$\chi$ with $2\chi$ colors in $\tilde{\OO}(\sqrt{n})$ rounds.

\paragraph{New ingredient 2: The hiding trick.}

In \cref{sec:coloring:ub} we show how to reduce the number of colors used while
keeping the round complexity the same.
The main new idea is what we call the \emph{hiding trick}:
First we make sure that clusters of the same color are at least four hops apart;
this can be achieved by running a network decomposition protocol on $G^3$.
For simplicity, assume that we have a $(2,\tilde{\OO}(\sqrt{n}))$-network
decomposition.
Let $C$ be a cluster of color $1$. 
We first extend $C$ to an extended cluster $C_1$ that includes all unclustered
nodes that are adjacent to $C$.
Next we find a proper $\chi$-coloring of $C_1$ using colors $\{1,\dotsc,\chi\}$
by brute force.
Finally, any boundary node of $C_1$ that has color $\chi$ is removed, thus
yielding a cluster $C_0$ with $C \subseteq C_0 \subseteq C_1$.
Note that $C_0$ is colored using at most $\chi$ colors and that and all boundary
nodes of $C_0$ have a color different from $\chi$.
We have effectively \emph{hidden the color $\chi$ inside the cluster}.
Now we continue with the rest of the process and can safely use colors
$\{\chi,\dotsc,2\chi-1\}$ to color the uncolored nodes in clusters of color $2$.
The end result is a proper $(2\chi-1)$-coloring, and the running time is simply
a constant times the cluster diameter $d = \tilde{\OO}(\sqrt{n})$.
With this strategy, for instance, we can compute a $3$-coloring of bipartite
graphs in $\tilde{\OO}(\sqrt{n})$ rounds.

\cref{thm:upperbound_full} is in essence a formalization and generalization of
the hiding trick.
We can hide one color in each cluster and therefore reuse one of the colors for
all of the $\alpha$ cluster colors.
This results in a coloring with $\alpha(\chi-1) + 1$ colors. 
In addition, if one chooses the color palettes carefully, it is not necessary
for the nodes to know $\chi$ beforehand.
At first this may seem like an ad-hoc trick that cannot possibly be
optimal---after all, we are saving only one color in each step.
However, our matching lower bound shows that the hiding trick is essentially the
best that we can do in distributed coloring, even if we have access to quantum
resources.

\subsection{Non-signaling lower bounds (Section \ref{sec:lb-technique})}\label{sec:intro:lb}

\paragraph{Prior work on classical lower bounds.}

As a warm-up, let us first see how one could prove a lower bound similar to \cref{thm:lb-coloring} for classical models. For concreteness, let us focus on the case $\chi = 2$ and $c = 3$ in the \detlcl model. Then $\alpha = 2$, and we would like to show that $3$-coloring bipartite graphs requires $\Omega(\sqrt{n})$ rounds.

Here we can make use of existential graph-theoretic arguments similar to the one
already used by \textcite{linial1992}.
Let $\AA$ be a \detlcl algorithm that purportedly finds a $3$-coloring in
bipartite graphs in $T(n) = o(\sqrt{n})$ rounds.
Now suppose that we are able to construct a graph \(G\) on \(n\) nodes with the
following properties:
\begin{enumerate}
  \item $G$ is locally bipartite: for any node $v$ of $G$, the subgraph of $G$ induced by the radius-$\Theta(\sqrt{n})$ neighborhood of $v$ is bipartite.
  \item $G$ is not $3$-colorable: the chromatic number of $G$ is at least~$4$.
\end{enumerate}
The graph \(G\) is not bipartite, but (since $\AA$ operates in the \local model)
we can apply $\AA$ to $G$ anyway and observe what happens. 
As the chromatic number of $G$ is at least $4$, certainly $\AA$ cannot $3$-color
$G$; hence there has to be at least one node $v$ such that in the local
neighborhood $X$ of $v$ the output of $\AA$ is not a valid $3$-coloring. 
However, by assumption, the local neighborhood of $v$ up to distance
$\Theta(\sqrt{n})$ is bipartite, so with some cutting and pasting we can
construct another graph $G' = (V, E')$ on the same set of nodes such that $G'$
is bipartite and the graph induced by the radius-$T(n)$ neighborhood of $X$ is the same in $G$ and
$G'$.
Hence the output of $\AA$ in $X$ is the same in both graphs, which means $\AA$
produces an invalid coloring in $G'$ (which is bipartite) in the neighborhood
$X$.
The key point in this argument is the existence of the \emph{cheating graph}
\(G\) that \enquote{fools} $\AA$ as $\AA$ cannot locally tell the difference
between $G$ and the valid input graph $G'$.

\paragraph{New ingredient 1: Bogdanov's construction.}

To apply the proof strategy outlined above, we need a suitable construction of a cheating graph. It turns out we can make direct use of \Citeauthor{bogdanov2013}'s \cite{bogdanov2013} graph-theoretic work---this is a 10-year-old result, but so far this seems to have been a blind spot in the research community, and we are not aware of any lower bounds in the context of distributed graph algorithms that make use of it.

Together with our new algorithm from \cref{thm:upperbound_full}, this then gives
a near-complete characterization of the complexity of $c$-coloring
$\chi$-colorable graphs in \detlcl. 
A similar argument applies (with some adjustments) to the \randlcl model as
well; in particular, we can exploit the independence of the output distribution
between well-separated subgraphs of the input graph to boost the failing
probability of an algorithm (see \cref{sec:lb:indistinguishability} for the
details).

\paragraph{New ingredient 2: Defining cheating graphs for \boldmath\nslcl.}

While proof techniques based on cheating graphs are commonly used in the context of \detlcl and
\randlcl, we stress the same line of reasoning \emph{does not hold} in \qlcl or
\nslcl.
In fact, in the context of \nslcl new challenges emerge, which we discuss later in this section.
Our new proof strategy  overcomes these issues: it builds on the idea of cheating graphs, but it allows us to directly derive a lower bound for \nslcl. To the best of our knowledge, this is the first work establishing that \Citeauthor{linial1992}'s argument \cite{linial1992} can be adapted to more general non-signaling models. This is one of our main technical contributions.

In \cref{sec:lb-technique} we present a new definition of a cheating graph that is applicable in \nslcl.
Suppose we are interested in a locally checkable problem \(\problem\) in graph family \(\FF\).
\begin{definition}[Cheating graph, informal version]\label{def:intro:cheating-graph}
  Consider a sufficiently large \( N > 0\).
  A graph $G$ is a cheating graph for \((\problem,\FF)\) if
  \begin{enumerate}[(1)]
    \item problem \(\problem\) is not solvable on \(G\);
    \item for a suitable $k$, we can cover $G$ with $k$ subgraphs $G^{(1)}, \dotsc, G^{(k)}$ such that \(\problem\) is solvable over each of the graphs induced by their radius-\(T(n)\) neighborhoods, where \(n = \abs{V(G)} \cdot N\);
    \item we can take any $N$ subgraphs $G^{(x_1)}, \dotsc, G^{(x_N)}$ together with their radius-\(T(n)\) neighborhoods, possibly with replacement, form their disjoint union $\tilde{G}$, and find a graph $H \in \mathcal{F}$ of size \(n\) that contains an induced subgraph isomorphic to \(\tilde{G}\).
  \end{enumerate}
\end{definition}

See \cref{def:cheating-graph} for the formal version.
We show that the existence of graphs of arbitrarily large size with the above properties directly implies
a lower bound equal to \(T\)
to the problem \(\problem\) over the graph family \(\FF\) that holds also in \nslcl---this is formalized in \cref{thm:lb-technique}.

We note that the precise requirements for \(k\) and \(N\) depend on \(T\): in \cref{sec:coloring:trees} we will exploit the fact that when $T$ is small we can afford a large $k$, while in \cref{sec:coloring:lb,sec:coloring:grids} we deal with a large $T$, and then it will be important to construct a cheating graph with a constant $k$.

Our definition of a cheating graph is somewhat technical, but through three examples we demonstrate that this is indeed an effective way of proving lower bounds.

\paragraph{New ingredient 3: New analysis of Bogdanov's construction.}

While in the classical models we could directly apply \textcite{bogdanov2013} as a black box, this is no longer the case in \nslcl. Nevertheless, in \cref{sec:coloring:lb} we show that the construction of \cite{bogdanov2013} indeed gives a cheating graph for \(c\)-coloring \(\chi\)-chromatic graph.

It is known that the graph constructed by \cite{bogdanov2013} is locally \(\chi\)-chromatic, but globally has chromatic number greater than \(c\), implying property (1) in \cref{def:intro:cheating-graph}.
We further go through the details of the construction and prove that the graph also satisfies properties (2) and (3) from \cref{def:intro:cheating-graph}---these are properties outside the scope of \cite{bogdanov2013}. 
Then from \cref{thm:lb-technique} we obtain \cref{thm:lb-coloring}.

\paragraph{New ingredient 4: New analysis of quadrangulations of the Klein bottle.}

In \cref{sec:coloring:grids} we make use of properties of quadrangulations of the Klein bottle
\cite{archdeacon2001,mohar2002,mohar2013} to construct a family of graphs that
is locally grid-like but is not $3$-colorable. 
We show that such quadrangulations give cheating graphs
for \(3\)-coloring grids, and then \cref{thm:lb-technique} implies \cref{thm:csl-lcl-lb-grids}.

\paragraph{New ingredient 5: New analysis of Ramanujan graphs.}

In \cref{sec:coloring:trees} we revisit the construction of Ramanujan graphs \cite{lubotszky1988}, that is, high-girth and high-chromatic regular graphs, which \citeauthor{linial1992} used in his lower-bound proof.
Again, we show that it provides us with a cheating graph (\cref{def:intro:cheating-graph}) for \(c\)-coloring trees, and \cref{thm:trees:lb} follows.

\paragraph{Discussion.}

While quadrangulations of the Klein bottle and Ramanujan graphs have been used in prior work to prove lower bounds for the classical models, by e.g.\ \textcite{aboulker2019,linial1992}, we remark that to our knowledge, this is the first time that the applicability of \Citeauthor{bogdanov2013}'s graph-theoretic work \cite{bogdanov2013} in the context of distributed computing and quantum computing lower bounds is recognized (in spite of it being a 10-year-old result).

We also note that the classical version of \cref{thm:csl-lcl-lb-grids} by \cite{brandt2017} uses fundamentally different proof techniques: the argument in \cite{brandt2017} is primarily \emph{algorithmic},
while our proof is primarily \emph{graph-theoretic}. 
The algorithmic proof from the prior work seems to be fundamentally incompatible with the \nslcl model, while the graph-theoretic proof also yields a lower bound for \nslcl. 
This suggests a general blueprint for lifting prior lower bounds from \detlcl or \randlcl to \nslcl: (1)~re-prove the previous result using existential graph-theoretic arguments, and (2)~apply the cheating graph idea to lift it to \nslcl. 

While the proof technique developed in this work is applicable in many graph problems, there are also problems for which cheating graphs do not exist (e.g., sinkless orientation on $2$-regular graphs). An open research direction is developing proof strategies that can be used to derive \nslcl lower bounds for those cases.

\subsection{Lower bound technique in more details}

Fix a sufficiently large \(N > 0\). Consider any locally checkable problem \(\problem\) over some graph family \(\FF\).
We want to show that, whenever a cheating graph
(\cref{def:intro:cheating-graph}) for the pair \((\problem, \FF)\) exists, 
any \(T\)-round algorithm solving the problem has failing probability at least
\(1 - (1-1/k)^N\), where \(k\) is the size of the subgraph cover of the cheating
graph.

Let \(G\) be the cheating graph.
For simplicity, we can think of \(\problem\) as the \(3\)-coloring problem, and \(\FF\) to be the family of bipartite graphs.
Provided that \(\FF\) respects some natural properties, properties (1) and (2) from \cref{def:intro:cheating-graph} ensure that we can get a lower bound in \randlcl.
Indeed, assume there is a $T$-round randomized algorithm \( \AA \) that 3-colors bipartite graphs.
Clearly, \(\AA\) fails to 3-color \(G\) with probability 1.
Hence, there is an $i^\star \in [k]$ such that the failing probability of \(\AA\) over \(G^{(i^\star)}\) is at least \(1/k\).
Hence, \(\AA\) will fail on any bipartite graph of at most \(n = \abs{V(G)} \cdot N\) nodes containing an induced subgraph isomorphic to the radius-\(T(n)\) neighborhood of \(G^{(i^\star)}\) with probability \(1/k\). 
\begin{figure}[p]
  \centering
  \begin{subcaptionblock}{\textwidth}
    \centering
    \includegraphics[scale=0.66625]{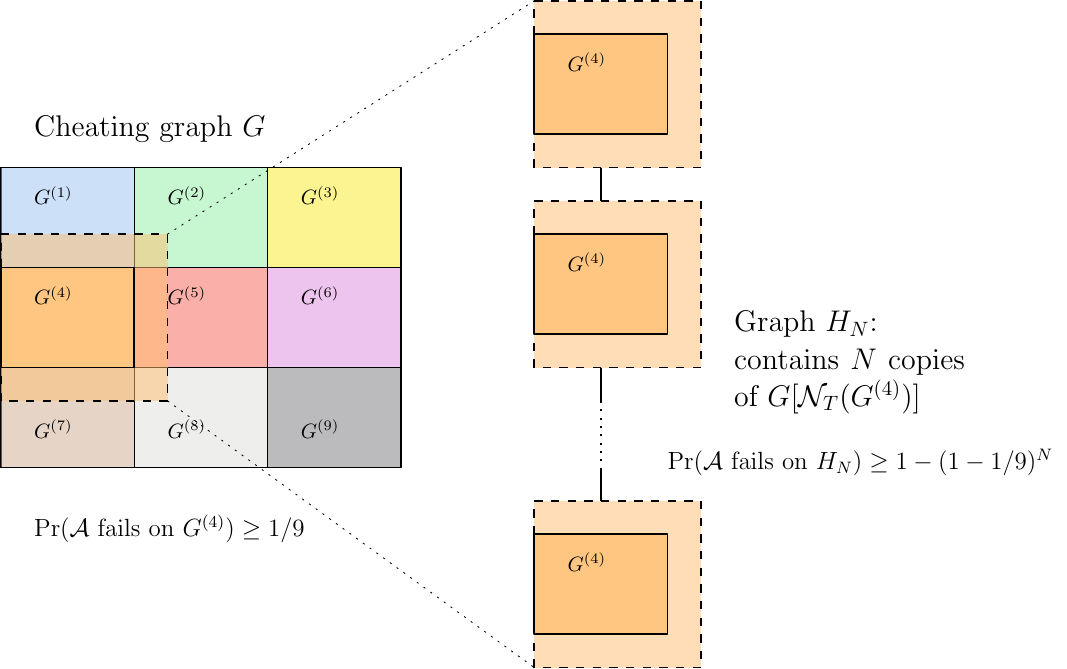}
    \caption{Construction for the \randlcl model. 
    For any \(T(n)\)-round algorithm \(\AA\) solving the problem, there is an \(i^{\star} \in [9]\) (in the figure, $i^\star = 4$) such that $\Pr[\AA \text{ fails in } G^{(i^\star)}] \ge 1/9$.  
    Then, \(\Pr[\AA \text{ fails on } H_N ] \ge 1 - (1-1/9)^N\) by independence, where \(H_N\) is an admissible instance.
    As long as \(\abs{V(H_N)} \le n\), this gives the lower bound.}
    \label{fig:lb-technique-randlcl}
  \end{subcaptionblock}
  \\[5mm]
  \begin{subcaptionblock}{\textwidth}
    \centering
    \includegraphics[scale=0.66625]{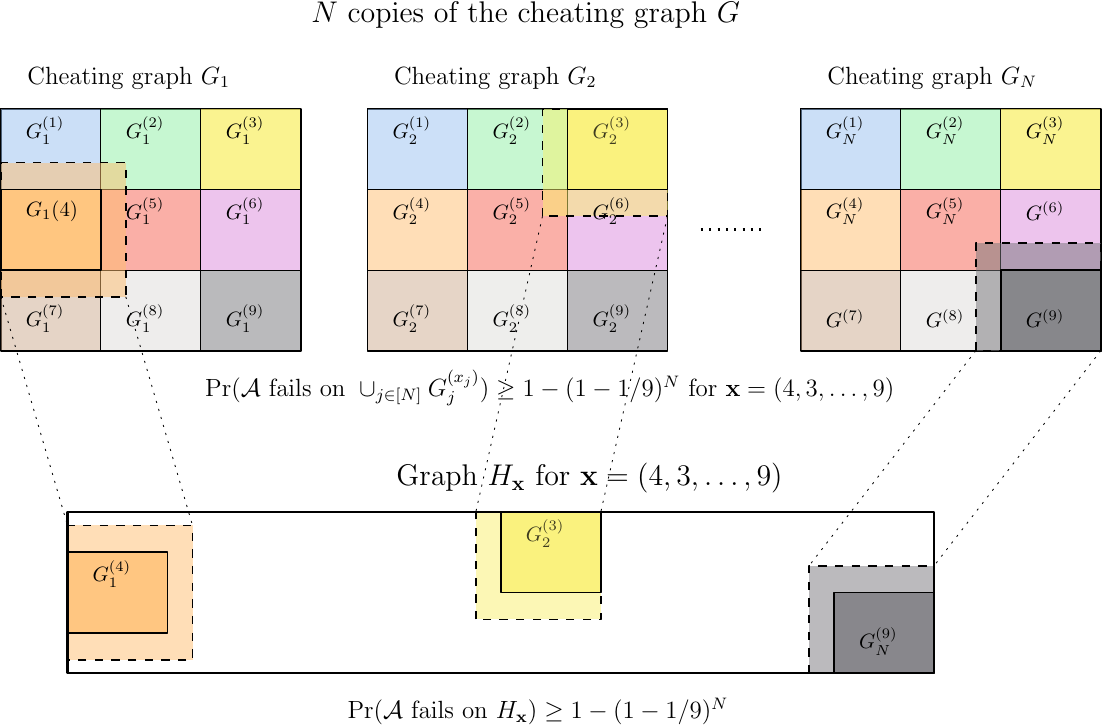}
    \caption{Construction for the \nslcl model. 
    We start with \(N\) copies \(G_1, \dots, G_N\) of \(G\) and consider their disjoint union.
    We prove that, in this specific graph, there is already a combination of indices \(\xx = (x_1, \dots, x_N) \in [9]^N\) (in the figure, \(\xx = (4,3,\dots,9)\)) for which \(\Pr[\AA \text{ fails on } \bigcup_{j \in [N]} G_j^{(x_j)} ] \ge 1 - (1-1/9)^N\).
    Then, property (3) of \cref{def:intro:cheating-graph} ensures that we can construct an admissible instance \(H_\xx\) as shown in the figure, with \(\abs{V(H_\xx)} = n\).
    By the properties of the \nslcl model, since \(H_\xx\) and \(\bigsqcup_{i \in [N]} G_i\) share the same local view around \(\bigcup_{j \in [N]} G_j^{(x_j)}\), \(\AA\) fails on \(H_\xx\) too with at least the same probability.}
    \label{fig:lb-technique-nslcl}
  \end{subcaptionblock}
\caption{Illustration of the lower-bound argument based on the cheating graph $G$.
  For \(n = \abs{V(G)} \cdot N\), the problem is solvable in each \(T(n)\)-radius neighborhood of \(G^{(i)}\), \(i \in [9]\), but not on \(G\).}
  \label{fig:lb-technique}
\end{figure}
If \(k\) is not small enough (e.g., \(k = w(1)\)), the failing probability tends to $0$.
In \randlcl we can amplify the failure probability as follows:
Suppose \(\FF\) contains a graph \(H_N\) of at most \(n = \abs{V(G)} \cdot N\) nodes that contains, as subgraphs, \(N\) disjoint copies of the radius-\(T(n)\) neighborhood of \(G^{(i^\star)}\) in \(G\). 
This is always possible if \(\FF\) is the family of bipartite graphs.
By independence, the failing probability of \(\AA\) over \(H_N\) is at least \(1 - (1 - 1/k)^N\) (see \cref{fig:lb-technique-randlcl}). 
Hence, such an algorithm cannot exist.
The property that \(\FF\) contains \(H_N\) is reasonable for many natural problems (e.g., \(c\)-coloring \(\chi\)-chromatic graphs for all combinations of \(c\) and \(\chi\)) where, given a graph for which the problem is solvable, one can connect disjoint copies of the graphs and obtain a solvable instance of the problem.

However, as we anticipated, in the \nslcl model some issues arise:
\begin{enumerate}[(i)]
  \item If two graphs \(G\) and \(H\) have different sizes, then even if they share two identical subgraphs \(G'\) and \(H'\) with isomorphic radius-\(T\) neighborhoods, a non-signaling output distribution is not guaranteed to be identical over \(G'\) and \(H'\); this is due to the no-cloning principle \cite{wootters1982,dariano2017,masanes2006}.
  \item If we look at the output distributions for two subsets of nodes $X$ and $Y$, then even if $X$ and $Y$ are far from each other, we cannot assume that the outputs of these subsets are independent.
\end{enumerate}

Issue (i) puts some limits on the choice of the graph used to ``fool'' the algorithm, while issue (ii) makes it necessary to deal with possible dependencies among different parts of the input graph.
Such issues are the reason why we require the cheating graph to satisfy property (3) in \cref{def:intro:cheating-graph}. 

To solve issue (i), we consider a graph that is the disjoint union of \(N\) copies \(G_1, \dots,G_n\) of the cheating graph \(G\):
such graph has \(\abs{V(G)} \cdot N\) vertices, exactly the same as the graph
of property (3) from \cref{def:intro:cheating-graph}.
Consider now any \nslcl algorithm \(\AA\) that \(3\)-colors bipartite graphs with locality \(T\), and apply it to the graph \(\sqcup_{i \in [N]} G_i\).
Clearly, \(\AA\) will fail to solve the problem in each \(G_j\) with probability 1.
At this point, we cannot continue as before: while it is true that in each \(G_j\) we can find an index \(i^{\star}\) such that the probability of \(\AA\) failing in \(G^{(i^\star)}_j\) is at least \(1/k\), we cannot use independence to increase the failing probability.

Property (3) ensures that, for a sufficiently large \(N\), and for any sequence of indices \( \xx = (x_1, \dots, x_N) \in [k]^N\), there exists a graph \(H_{\xx}\) of size \(\abs{V(G)} \cdot N\) that contains an induced subgraph isomorphic to the disjoint union of the radius-\(T\) neighborhoods of \(G^{(x_1)}, \dots, G^{(x_N)}\).
However, correlations among these subgraphs of \(H_{\xx}\) might hold.
To overcome this issue, we need to consider all possible sequences of subgraphs \(G_1^{(x_1)}, \dots, G_N^{(x_N)}\) at the same time, where \(\xx = (x_1, \dots, x_N) \in [k]^N\) (see \cref{fig:lb-technique-nslcl}).
Fix a total order for the elements in \( [k]^N \), and let its ordered elements be \(\xx_1, \dots, \xx_{k^N}\).
Let \(\FF_{\xx_j}\) be the event that \(\AA\) fails in \(G^{(z_i)}_i\), where \(z_i\) is the \(i\)-th element of \(\xx_j\), for each \(i = 1, \dots, N\). 
Furthermore, for each index \(\xx \in [k]^N\), let \(\II_\xx\) be the set of all indices \(\yy \in [k]^N\) such that \(\yy\) and \(\xx\) share the same element at the \(i\)-th position, for some \(i\), i.e., \(\xx(i) = \yy(i)\). 
Notice that, for \(\xx = (x_1, \dots, x_N)\), \(\cup_{\yy \in \II_{\xx}} \FF_\yy\) describes the event that there is an \(i \in [N]\) such that \(\AA\) fails on \(G_i^{(x_i)}\).

We claim that there exists a \(\xx^\star \in [k]^N\) such that \(\pr{\cup_{\yy \in \II_{\xx^\star}}\FF_{\yy}} \ge 1 - (1-1/k)^N\), implying that the dependencies behave ``well enough'', hence solving issue (ii).
We proceed by contradiction: we assume that, for all \(\xx \in [k]^N\), \(\pr{\cup_{\yy \in \II_\xx}\FF_{\yy}} < 1 - (1-1/k)^N\).
While \(\pr{\cup_{\xx \in [k]^N} {\FF_{\xx}}} = 1\), the events in \(\{\FF_{\xx}\}_{\xx \in [k]^N}\) are not disjoint and the sum of their probability is not 1. 
To better deal with the math, we define \(\EE_{\xx_1} = \FF_{\xx_1}\) and, recursively, we define \(\EE_{\xx_j} = \FF_{\xx_j} \setminus (\cup_{i = 1}^{j-1}) \EE_{\xx_i}\).
Clearly the events in \(\{\EE_\xx\}_{\xx \in [k]^N}\) are pairwise disjoint: furthermore, it holds that \(\sum_{\xx \in [k]^N} \pr{\EE_{\xx}} = 1\) as \(\cup_{\xx \in [k]^N} {\EE_{\xx}} = \cup_{\xx \in [k]^N} {\FF_{\xx}}\).
For each \(\xx \in [k]^N\), it trivially holds that
\[
  \sum_{\yy \in \II_\xx}\pr{\EE_{\yy}} + \sum_{\yy \in [k]^N \setminus \II_\xx } \pr{\EE_{\yy}} = 1.
\]
Moreover, for each \(\xx \in [k]^N\), we have \(\pr{\EE_{\xx}} \le \pr{\FF_\xx}\) as \(\EE_{\xx} \subseteq \FF_{\xx}\), hence \(\sum_{\yy \in \II_\xx}\pr{\EE_{\yy}} < 1 - (1 - 1/k)^N\).
Thus,
\[
  \sum_{\yy \in [k]^N \setminus \II_\xx } \pr{\EE_{\yy}} > (1-1/k)^N.
\]
It follows that
\[
  \sum_{\xx \in [k]^N}\sum_{\yy \in [k]^N \setminus \II_\xx } \pr{\EE_{\yy}} > k^N (1-1/k)^N = (k-1)^N.
\]
Also, notice that for each \(\yy \in [k]^N\) the cardinality of the set \(\left\{\xx \in [k]^N \ \st \ \yy \in [k]^N \setminus \II_\xx \right\}\) is \((k-1)^N\).
Hence,
\begin{align*}
  \sum_{\xx \in [k]^N}\sum_{y \in [k]^N \setminus \II_\xx} \pr{\EE_{\yy}} & = \sum_{\yy \in [k]^N}\sum_{\substack{\xx \in [k]^N : \\ \yy \in [k]^N\setminus \II_\xx}} \pr{\EE_{\yy}}
  = (k-1)^N \sum_{\yy \in [k]^N} \pr{\EE_{\yy}}
  = (k-1)^N,
\end{align*}
reaching a contradiction.
Thus, there exists an \(\xx^{\star} \in [k]^N\)
such that
$
  \pr{\cup_{\yy \in \II_{\xx^\star}} \FF_\yy} \ge 1 - (1-1/k)^N.
$
Property (3) of \cref{def:intro:cheating-graph} ensures that there is a graph \(H_{\xx^\star} \in \FF\) of \(n\) nodes such that \(H_{\xx^\star}\) contains, as induced subgraph, \(\cup_{i \in [N]} G_{i}^{(\xx^\star_i)}\), and \(H_{\xx^\star}\) and \(\sqcup_{i \in [N]} G_i\) share the same radius-\(T(n)\) neighborhood around \(\cup_{i \in [N]} G_{i}^{(\xx^\star_i)}\). 
By the property of the \nslcl model, we get that the failing probability of \(\AA\) on \(H_{\xx^\star}\) is at least \(1 - (1-1/k)^N\).
 \section{Preliminaries}\label{sec:preliminaries}

We consider the set \(\nat\) of natural numbers to start with 0.
We also define \(\nat_+ = \nat \setminus \{0\}\). 
For any positive integer \(n \in \natPos\), we denote the set \(\{1, \dots, n\}\) by \([n]\).

\paragraph{Graphs.}

All graphs in this paper are simple graphs without
self-loops.
Let \(G = (V,E)\) be a simple graph with \(n \in \nat\) nodes.
If the set of nodes and the set of edges are not specified, we refer to them by \(V(G)\) and \(E(G)\), respectively.
For any edge \(e = \{u,v\} \in E(G)\), we also write \(e = uv = vu\).

If \(G\) is a subgraph of \(H\), we write \(G \subseteq H\).
For any subset of nodes \(A \subseteq V\), we denote by \(G[A]\) the subgraph
induced by the nodes in \(A\).
For any nodes $u,v \in V$, $\dist_G(u,v)$ denotes the distance between $u$ and
$v$ in $G$ (i.e., the number of edges of any shortest path between $u$ and $v$
in $G$); if $u$ and $v$ are disconnected, then $\dist_G(u,v) = +\infty$.
If $G$ is clear from the context, we may also simply write $\dist(u,v) =
\dist_G(u,v)$.
For \(T \in [n]\), the \(T\)-neighborhood of a node \(u \in V\) is the set
\(\NN_T(u) = \left\{v \in V \ \st \ \dist(u,v) \le T \right\}\).
The \(T\)-neighborhood of a subset \(A \subseteq V\) is the set \(\NN_T(A) =
\left\{v \in V \ \st \ \exists u \in A : \dist(u,v) \le T \right\}\).
Similarly, the \(T\)-neighborhood of a subgraph \(H \subseteq G\) is the set
\(\NN_T(H) = \left\{v \in V(G) \ \st \ \exists u \in V(H) : \dist_G(u,v) \le T
\right\}\).

For $c \in \nat$, a \emph{$c$-coloring} of a graph $G$ is a map $\varphi\colon
V(G) \to [c]$.
The coloring $\varphi$ is said to be \emph{proper} if we have $\varphi(u) \neq
\varphi(v)$ for every $uv \in E$.
If, for some \(\chi \in \nat\), there exists a proper $\chi$-coloring for $G$ and $\chi$ is minimal with this
property, then $G$ is said to be \emph{$\chi$-chromatic}; we also say that the chromatic
number of \(G\), denoted by \(\XX(G)\), is \(\chi\). In the $c$-coloring problem, the input is a graph $G$, and the task is to find a proper $c$-coloring of $G$.

\paragraph{\boldmath The \local model.}

The \local model is a distributed system consisting of a set of \( \abs{V} =
n\) \emph{processors} (or \emph{nodes}) that operates in a sequence of
synchronous rounds. 
In each round the processors may perform unbounded computations on their
respective local state variables and subsequently exchange of messages of
arbitrary size along the links given by the underlying input graph.
Nodes identify their neighbors by using integer labels assigned successively
to communication ports.
(This assignment may be done adversarially.)
Barring their degree, all nodes are identical and operate according to the
same local computation procedures.
Initially all local state variables have the same value for all processors;
the sole exception is a distinguished local variable \(\inptData(v)\) of
each processor \(v\) that encodes input data. 

Let \(c \ge 1\) be a constant, and let \(\inLabels\) be a set of input labels.
The input of a problem is defined in the form of a labeled graph \((G,
\inptData)\) where \(G = (V, E)\) is the system graph, \(V\) is the set of
processors (hence it is specified as part of the input), and \(\inptData\colon V
\to [n^c] \times \inLabels\) is an assignment of a \emph{unique} identifier
\(\id(v) \in [n^c]\) and of an input label \(\inLabel(v) \in \inLabels\) to
each processor \(v\). 
The output of the algorithm is given in the form of a vector of local output
labels \(\outLabel\colon V \to \outLabels\), and the algorithm is assumed to
terminate once all labels \(\outLabel(v)\) are definitely fixed. 
We assume that nodes and their links are fault-free.
The local computation procedures may be randomized by giving each processor
access to its own set of random variables; in this case, we are in the
\emph{randomized} \local (\randlcl) model as opposed to \emph{deterministic}
\local (\detlcl).

The running time of an algorithm is the number of synchronous rounds required by all nodes to produce output labels. 
If an algorithm running time is \(T\), we also say that the algorithm has locality \(T\).
Notice that \(T\) can be a function of the size of the input graph.

We say that the \(c\)-coloring problem over some graph family \(\FF\) has
complexity \(T\) in the \detlcl model if there exists a \detlcl algorithm
solving the problem in time \(T\) for all input graphs \(G\in \FF\), but no
\detlcl algorithm solves the problem in time \(T-1\) (where \(T\) can be a
function of the input graph size) for all input graphs \(G\in \FF\).
The complexity in the \randlcl model is defined similarly.
 \section{New classical graph coloring algorithms} \label{sec:coloring:ub}

In this section we prove the existence of algorithms in \detlcl and \randlcl
that almost match the \nslcl lower bound.
We recall here the precise statement for the reader's convenience.

\thmMainResultUB*

Notice that, if we plug in the same parameters (with $c = \alpha(\chi-1)+1$) in 
\Cref{thm:lb-coloring} with an appropriate choice of constant $\eps$, we get
that $\alpha = \floor*{\frac{c-1}{\chi-1}} $, implying a lower bound of
\(
	\myOmega{n^{1/\alpha}/\chi^{1 + 1/\alpha}}
\)
for the problem.
Therefore, for constant $\chi$ and $\alpha$, our algorithms give a perfect
trade-off between quality of approximation and time complexity up to logarithmic
factors.

\paragraph{Fast coloring from fast network decomposition.}
Our algorithm follows an approach similar to that of \textcite{barenboim2018},
which in turn is based on \emph{network decompositions}.
\begin{definition}[Network decomposition]
	An $(\alpha,d)$-\emph{network decomposition} of a graph $G$ is a partition
	$V(G) = C_1 \cup \dots \cup C_k$ along with a map $\mu\colon \{ C_1,\dots,C_k
	\} \to [\alpha]$ meeting the following conditions:
	\begin{itemize}
		\item The \emph{clusters} $C_i$ are pairwise disjoint (i.e., $C_i \cap C_j =
		\varnothing$ unless $i = j$).
		\item For every $i$, the (weak) diameter $\max_{u,v \in C_i} \dist_G(u,v)$
		of $C_i$ is at most $d$.
		\item The \emph{supergraph} $S = S(G)$ with node set $V(S) = \{
		C_1,\dots,C_k \}$ and edge set $E(S) = \{ \{C_i,C_j\} \mid \exists u \in
		C_i, v \in C_j: \{u,v\} \in E(G) \}$ that is obtained by contracting each
		$C_i$ is $\alpha$-colorable.
		\item $\mu$ is an $\alpha$-coloring of $S$.
	\end{itemize}
	In addition, the coloring $\mu$ is presented explicitly to the nodes of $G$;
	that is, every node $v \in C_i$ knows the \emph{cluster color} $\mu(C_i)$ of
	its respective cluster $C_i$.
\end{definition}

We recall the algorithm of \textcite{barenboim2018}.
Suppose we are given an $(\alpha, d(n))$-network decomposition $\mathcal{D}$.
We iterate sequentially through the $\alpha$ cluster colors of $\mathcal{D}$.
For each cluster color $a$ and each cluster $C$ having the cluster color $\mu(C)
= a$, collect the entire topology of $C$ in some arbitrary node $v \in C$
(chosen by, e.g., leader election).
The node $v$ then computes a perfect coloring $\pi$ for the nodes of $C$ that
uses at most $\chi$ colors and then broadcasts to each node $u \in C$ its color
$\pi(u)$.
Clearly in the \local model each iteration takes at most diameter of $C_i$ many
rounds, so at most $\myO{d(n)}$ rounds.
With a clever implementation, this process can also be sped up by doing all the
iterations in parallel. We will see this in the later sections.

In order for this strategy to work, we must ensure that \emph{neighboring
clusters use distinct sets of colors}; otherwise, the colorings of the nodes of
neighboring clusters may not match.
To deal with this, we have clusters of different cluster colors use completely
different sets of colors for the nodes, which we will refer to as \emph{color
palettes}.
More precisely, each cluster $C$ of cluster color $\mu(C) = a$ may only use
colors from the palette $p_a = \{ (a,b) \mid b \in \nat_+ \}$.
Since there are $\alpha$ many colors for the clusters, (assuming each cluster
is colored using at most $\chi$ colors) this then gives a coloring of $G$ with
$\alpha \chi$ colors.
As we color each cluster by gathering its entire structure in a single node,
knowledge of $\chi$ is not needed in order to use the optimal number of colors
for each cluster.

\paragraph{The \enquote{hiding trick}.}
We show how to optimize the above strategy using what we call a \enquote{hiding
trick}.
We add one special \emph{hidden color} $-1$ in common to all color palettes
$p_a$ and that is guaranteed to appear only in the \enquote{inside} of the
clusters; that is, if a node $v$ in some cluster $C$ has a neighbor that is not
in $C$, then $v$ is guaranteed to not be colored $-1$.
This ensures the colorings produced by two neighboring clusters are still
compatible since the only color shared by their palettes is the hidden color
$-1$, which is only present in the \enquote{insides} of the clusters.
Since the palettes now share exactly one color, this allows us to save $\alpha -
1$ colors in total.
Surprisingly enough, this small modification is enough to attain the minimum
number of colors possible, that is, $\alpha(\chi-1) + 1$ colors (as per our
lower bound from \cref{thm:lb-coloring}).

\paragraph{Fast decomposition from fast clustering.}
The specific complexity of the resulting algorithm depends on value $d(n)$ of
the underlying decomposition $\mathcal{D}$.
We show how to obtain a decomposition with $d(n) = \myOTilde{n^{1/\alpha}}$ in
$\myO{d(n)}$ time.
To do so, we show how to efficiently turn any existing \emph{clustering} algorithm
into a network decomposition algorithm.
The difference between the two is that the former only needs to group a
\emph{subset} of nodes in the graph, whereas the latter must partition the
\emph{entire} graph.

\begin{definition}[$(\lambda, d)$-clustering]
	Given a graph $G$, a \emph{$(\lambda, d)$-clustering} is a partition $V(G) = D
	\cup S_1 \cup \dots \cup S_k$ meeting the following conditions:
	\begin{itemize}
		\item $S_1, \dots, S_k$ are mutually non-adjacent; that is, the
		distance between any two nodes $u \in S_i$ and $v \in S_j$ where $i \neq j$
		is at least 2.
		\item For every $i$, the (weak) diameter $\max_{u,v \in S_i} \dist_G(u,v)$
		of $S_i$ is at most $d$.
		\item $D$ contains at most $\lambda \abs{V(G)}$ vertices.
	\end{itemize}
\end{definition}

Our conversion from a clustering algorithm into a network decomposition one is
by a \emph{bootstrapping} procedure:
if the clustering algorithm is guaranteed to cluster at least half of the nodes
in $G$, then we can apply it again and again until only an $\eps$ fraction of
nodes remains unclustered, where $\eps$ is a parameter of our choosing.
For an appropriate choice of $\eps$, the fraction of nodes that remains is
sufficiently small that we can directly gather the remaining nodes into their
own cluster (simply by grouping every remaining connected component of the
graph).

\paragraph{Organization.}
In \cref{sec:hiding-trick} we show how to obtain a fast coloring algorithm given
the underlying network decomposition $\mathcal{D}$.
In \cref{sec:alg-decomposition} we then show how to obtain such a decomposition
following the two-step approach described above: first we show our bootstrapping
result for clustering algorithms in \cref{sec:fast-clustering} and then how this
implies a network decomposition algorithm in \cref{sec:fast-decomposition}.
Plugging in the state-of-the-art for clustering algorithms, we then obtain
\cref{thm:upperbound_full}.

\subsection{The hiding trick}
\label{sec:hiding-trick}

The main result of this section is the following, which is a coloring algorithm
based on our so-called hiding trick (see also \cref{lem:hiding} below).
The algorithm presupposes the existence of a network decomposition
algorithm, which we show how to obtain in \cref{sec:alg-decomposition}.

\begin{theorem}
	\label{thm:alg-coloring}Fix some parameter $\alpha \in \nat$ and suppose there is an
	$(\alpha,d(n))$-network decomposition algorithm $\mathcal{B}$ for \detlcl or
	\randlcl that runs in time $d(n)$.
	Then there is an algorithm $\mathcal{A}$ that $(\alpha(\chi-1)+1)$-colors any
	$\chi$-chromatic graph $G$ with $n$ nodes in $O(d(n))$ time.
	Moreover, $\mathcal{A}$ works in \detlcl or \randlcl, depending on which model
	$\mathcal{B}$ runs in.
	In addition, if $\mathcal{B}$ is randomized (i.e., it is a \randlcl algorithm)
	and succeeds with probability $1-1/\poly(n)$, then so does $\mathcal{A}$
	succeed with probability $1-1/\poly(n)$.
\end{theorem}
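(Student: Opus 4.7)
The plan is to invoke $\mathcal{B}$ as a black box: first run it on $G^3$ (rather than on $G$) to obtain a strengthened $(\alpha, O(d(n)))$-network decomposition $\mathcal{D}$, and then process the $\alpha$ cluster colors in sequence, reserving for each cluster color $a$ a palette $p_a = \{(a,1), \ldots, (a, \chi - 1)\}$ of $\chi - 1$ fresh colors and one further ``hidden'' color $\star$ shared across all $\alpha$ palettes. Simulating one round of $G^3$ by three rounds of $G$ costs only a constant factor, but ensures that any two clusters of the same cluster color lie at $G$-distance at least $4$, which decouples the parallel processing of all color-$a$ clusters within one round. The total number of colors used is thus $\alpha(\chi-1) + 1$, matching the claim.

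In round $a$, every cluster $C$ with $\mu(C) = a$ is processed in parallel. For each such $C$, I would form an extended cluster $C^+ \supseteq C$ by absorbing every currently uncolored neighbor of $C$, have a leader of $C$ gather $G[C^+]$ (whose weak diameter is still $O(d(n))$) in $O(d(n))$ rounds, and then brute-force a proper $\chi$-coloring $\varphi$ of $G[C^+]$; such a $\varphi$ exists because $G[C^+]$ is a subgraph of the $\chi$-chromatic graph $G$. The leader renames the color class $\varphi^{-1}(\chi)$ to $\star$, maps the remaining $\chi - 1$ classes bijectively onto $p_a$, and un-adopts any node $v$ that is both assigned $\star$ and has a neighbor outside $C^+$, returning $v$ to the uncolored pool to be handled when $v$'s original cluster is processed. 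The leader broadcasts the coloring within the resulting colored region $C_0 \supseteq C$ in $O(d(n))$ more rounds.

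The main obstacle, and the technical heart of the proof, is the \emph{hiding lemma}: the invariant that at the end of every round, $\star$ appears only on nodes whose entire $G$-neighborhood lies inside the same colored cluster. This invariant is precisely what makes palette sharing safe --- two adjacent nodes in different clusters can collide only via $\star$, and by the invariant a $\star$-colored node has no external neighbors at all. The hardest subcase is a node $u \in C$ that lies on the boundary of $C^+$ because $u$ has a colored neighbor from some earlier palette $p_b$, $b < a$. Crucially, disjointness of $p_a$ and $p_b$ together with the inductive invariant (which rules out that neighbor being $\star$) imply that $u$'s external colored neighbors impose no constraint on $u$'s color within $\varphi$; the slack to reroute $u$ to a color in $p_a$ rather than $\star$ comes from the absorbed uncolored nodes in $C^+ \setminus C$, and one must argue --- via a careful exchange argument on the color classes of $\varphi$, or by cleverly choosing which class to rename to $\star$ --- that this rerouting is always possible, so that no node of the original core $C$ is ever permanently un-adopted. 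Verifying this combinatorially, and in particular checking that the deferred nodes all eventually get colored when their own cluster's round arrives, is where most of the work lies.

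The running-time accounting is routine: each of the $\alpha$ rounds costs $O(d(n))$ for gathering, brute-forcing, and broadcasting, for a total of $O(\alpha \cdot d(n)) = O(d(n))$ since $\alpha$ is a fixed parameter. In the randomized case $\mathcal{A}$ introduces no randomness of its own beyond that of $\mathcal{B}$, so its failure probability equals that of $\mathcal{B}$ and the $1 - 1/n^{\Omega(1)}$ guarantee is preserved without any union bound.
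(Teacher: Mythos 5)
Your overall architecture (decomposition of $G^3$, per-cluster brute-force $\chi$-colorings, one shared hidden color, $\alpha(\chi-1)+1$ colors in total) is essentially the paper's hiding trick, but the step you yourself flag as the heart of the argument is missing, and the route you propose for closing it cannot work. You un-adopt \emph{every} node assigned $\star$ that has a neighbor outside $C^+$, including nodes of the core $C$; since such a node's ``original cluster'' is the one currently being processed, it would never be colored, and you propose to rescue it by choosing the $\chi$-coloring (or the class renamed to $\star$) so that no core boundary node receives $\star$. This is impossible in general: take $\chi=2$ and a cluster $C$ consisting of a single edge $\{u_1,u_2\}$ processed after all of its external neighbors have already been colored (so $C^+=C$); every proper $2$-coloring of the edge uses both classes, so whichever class you rename to $\star$ contains a core node with an external neighbor, and no exchange argument can avoid this.

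The fix is to weaken the un-adoption rule, not to strengthen the coloring: core nodes may safely keep $\star$ even on the boundary, and only absorbed nodes of $C^+\setminus C$ that are assigned $\star$ need to be dropped. This is exactly what the paper's Hiding Lemma (\cref{lem:hiding}) guarantees: one only needs that no node \emph{outside the domain of the cluster's coloring} is adjacent to the hidden color, not that every $\star$-node has its entire neighborhood inside its cluster. In your sequential variant this weaker invariant (``no $\star$-colored node has an uncolored neighbor at the end of its round'') suffices: every uncolored neighbor of a core node is absorbed into $C^+$ and, by properness of $\varphi$, is not un-adopted when the core node gets $\star$; every already-colored external neighbor is non-$\star$ by the same invariant applied to earlier rounds; and every dropped absorbed node belongs to a cluster of a strictly later cluster color (same-color clusters are at distance at least $4$), so it really is recolored later. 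With that correction your sequential scheme goes through, at the cost of an extra factor $\alpha$ in the running time (harmless for fixed $\alpha$); the paper instead processes all clusters in one parallel shot and lets multiply-assigned boundary nodes resolve their color locally, which avoids the sequential bookkeeping altogether.
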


The core idea of our algorithm is the following constructive lemma.
It shows that, in any graph $G$, it is always possible to color a subset $A$ of
nodes in a way that \enquote{hides} one designated hidden color $-1$.
To color $A$ in this manner, it may be necessary to fix the color of some nodes
in $N(A)$ as well.

\begin{lemma}[Hiding Lemma]\label{lem:hiding}Let \(G = (V, E)\) be a graph, and let \(\chi\) be the chromatic number of
	$G$.
	For any subset \(A \subseteq V\), there exists $A \subseteq A' \subseteq (A
	\cup N(A))$ and a proper coloring $\varphi\colon A' \rightarrow [\chi - 1]
	\cup \{-1\}$ of \(A'\) such that $A$ is completely colored and, for any node
	\(v \in V\setminus A'\), \(v\) is not adjacent to a node with color $-1$.
\end{lemma}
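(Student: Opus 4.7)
The plan is to start from any proper \(\chi\)-coloring of \(G\), relabel one color class as \(-1\), and then trim the boundary of \(A \cup N(A)\) so that the renamed color never sits adjacent to a removed node. Concretely, I would first fix a proper coloring \(\psi\colon V \to [\chi]\); this exists since \(G\) has chromatic number \(\chi\). Define a tentative coloring \(\varphi\) on \(A \cup N(A)\) by setting \(\varphi(v) = \psi(v)\) when \(\psi(v) \le \chi-1\) and \(\varphi(v) = -1\) when \(\psi(v) = \chi\). This uses the palette \([\chi-1] \cup \{-1\}\) and is a proper coloring by construction.

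The potential issue is that some node \(u \in N(A) \setminus A\) with \(\varphi(u) = -1\) may have a neighbor outside \(A \cup N(A)\), violating the hiding condition. The fix is to remove exactly these offending boundary nodes: let
\[
  B = \left\{ u \in N(A) \setminus A \st \psi(u) = \chi \right\}, \qquad A' = (A \cup N(A)) \setminus B,
\]
and restrict \(\varphi\) to \(A'\). The inclusions \(A \subseteq A' \subseteq A \cup N(A)\) are immediate because no node of \(A\) is ever removed, and \(\varphi\) remains a proper coloring on \(A'\) by inheritance from \(\psi\).

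It remains to verify the hiding property. Take any \(u \in A'\) with \(\varphi(u) = -1\). By definition of \(B\) we must have \(u \in A\), so every neighbor \(w\) of \(u\) in \(G\) lies in \(A \cup N(A)\). Moreover, properness of \(\psi\) forces \(\psi(w) \ne \psi(u) = \chi\), so \(w \notin B\) and thus \(w \in A'\). Contrapositively, no node \(v \in V \setminus A'\) is adjacent to a color-\(-1\) node of \(A'\), as required. The only real subtlety — and the reason a naive restriction of a \(\chi\)-coloring does not already suffice — is the asymmetry between color-\(-1\) nodes sitting inside \(A\) (automatically safe because all their neighbors lie in \(A \cup N(A)\) and have different \(\psi\)-colors) and those sitting in \(N(A) \setminus A\) (whose outside neighbors are uncontrolled, so they must be stripped). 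Once this asymmetry is noticed, the construction above writes itself.
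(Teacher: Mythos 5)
Your proof is correct and follows essentially the same route as the paper's: take a proper $\chi$-coloring, treat its last color class as the hidden color $-1$, delete exactly the nodes of $N(A)\setminus A$ carrying that color to form $A'$, and observe that every neighbor of a color-$(-1)$ node (necessarily in $A$) survives into $A'$, so no removed node touches the hidden color. The only cosmetic difference is that you color all of $V$ and argue via the contrapositive, whereas the paper colors just $A \cup N(A)$ and splits into the cases $v \in N(A)$ and $v \notin A \cup N(A)$.
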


\begin{proof}
	Since \(G\) is \(\chi\)-colorable, there exists a $\chi$-coloring $\psi$ of
	$A \cup N(A)$.
	We uncolor some of the nodes of $N(A)$ such that none of the uncolored nodes
	has a neighbor with color $-1$.
	Formally, let
	\[
		A' = (A \cup N(A)) \setminus \{ u \in N(A) \mid \psi(u) = -1 \}
	\]
	and $\varphi = \psi{\restriction_{A'}}$ (i.e., the restriction of $\psi$ to
	$A'$).
	Since we only uncolor nodes in $N(A)$ and $\psi$ is a proper $\chi$-coloring
	of $A \cup N(A)$, $\varphi$ is certainly a proper coloring of $A' \subseteq (A
	\cup N(A))$.
	We now argue that $v \in V \setminus A'$ has no neighbor that is colored $-1$
	by $\varphi$.
	Since $A \subseteq A'$, we must consider only the following two cases:
	\begin{description}
		\item[$v \in N(A)$.] Since $v$ is not in $A'$ but still in $N(A)$, $\psi(v)
		= -1$.
		Hence, since $\psi$ is a proper coloring, every node in $N(v)$ colored by
		$\psi$ has a color that is different from $-1$.
		Since $\varphi$ is a restriction of $\psi$, the same holds for any node in
		$N(v)$ colored by $\varphi$.
		\item[$v \notin A \cup N(A)$.] Then $v$ is only adjacent to nodes in $N(A)$.
		Since no node in $N(A)$ is colored $-1$ by $\varphi$ by definition, clearly
		$v$ has no neighbor colored $-1$ by $\varphi$.
		\qedhere
	\end{description}
\end{proof}

Note that \cref{lem:hiding} (in its current form) cannot be immediately applied
to a decomposition of $G$ to produce colorings for the clusters of $G$.
The reason for that is the following:
Consider two clusters $C$ and $C'$ that are assigned the same cluster color $a$
by the network decomposition.
(In particular, this means $C$ and $C'$ are not adjacent.)
Recall that we will color the nodes of $C$ and $C'$ using the same color
palette.
However, if we color them such as in \cref{lem:hiding}, then we are potentially
also coloring nodes in $N(C)$ and $N(C')$ and, for all we know, the intersection
$N(C) \cap N(C')$ may not be empty.
Hence we cannot simply use the same color palette in both clusters, as this
could potentially lead to an invalid coloring.

Therefore, we would like that not only $C$ and $C'$ but also $N(C)$ and $N(C')$
are not adjacent.
Equivalently, we wish for the distance between $C$ and $C'$ to be at least $4$.
We can indirectly guarantee this by using an $(\alpha, d(n))$-network
decomposition $\mathcal{D}$ of $G^3$ instead of $G$.
This is because being not adjacent in $G^3$ immediately implies a distance of at
least $4$ in $G$.
Asymptotically speaking, this does not incur any additional cost (compared to
computing a decomposition of $G$) since, for any $k$, we can simulate each round
of communication in $G^k$ by using $k$ rounds of communication in $G$.

We now present \cref{alg:coloring}, which, given a $\chi$-chromatic graph $G$
and an $(\alpha, d(n))$-network decomposition $\mathcal{D}$ of $G^3$, colors $G$
using $\alpha(\chi-1)+1$ colors.
By proving the correctness of \cref{alg:coloring} we then obtain
\cref{thm:alg-coloring}.

\begin{algorithm}
\caption{Coloring a decomposition}\label{alg:coloring}
\begin{algorithmic}[1]
\Require $G=(V,E)$, $(\alpha,d(n))$-network decomposition $\mathcal{D}$ of $G^3$
\For{each cluster $C_i$ in parallel}
	\State Elect a leader node $\leader_i$ of $C_i$ \label{line:leader-election}
	\State Collect the entire topology of $C_i \cup N(C_i)$ in $\leader_i$
	\label{line:collect}
	\State $a \gets \text{the cluster color assigned to $C_i$ by $\mathcal{D}$}$
	\State $p_a \gets \{ (a,b) \mid b \in \nat \} \cup \{ -1 \}$
	\If{$a = \alpha$}
		\State $\leader_i$ computes a minimal coloring $\varphi_i \colon C_i \to
		p_a$ of $C_i$
		\State $\leader_i$ broadcasts $\varphi_i$ to all nodes in $C_i$
	\Else
		\State $\leader_i$ computes a minimal node coloring $\varphi_i \colon C_i' \to
		p_a$ of some $C_i' \subseteq C_i \cup N(C_i)$ according to \Cref{lem:hiding}
		\State $\leader_i$ broadcasts $\varphi_i$ to all nodes in $C_i \cup N(C_i)$
		\label{line:broadcast}
	\EndIf
\EndFor
\For{each node $u$ in parallel}
	\State $\Phi_u \gets \text{the set of all colors assigned to $u$ by the
	$\varphi_i$}$
	\If{$\Phi_u = \{ -1 \}$}
		\State Color $u$ with the color $-1$
	\Else
		\State Color $u$ with an arbitrary color from $\Phi_u \setminus \{-1\}$
	\EndIf
\EndFor
\State Each node outputs its own color
\end{algorithmic}
\end{algorithm}

Let us give a brief high-level description of \cref{alg:coloring}.
Each cluster $C_i$ acts independently and based on the cluster color $a$ that is
assigned to it by the decomposition $\mathcal{D}$.
First, the entire topology of $C_i \cup N(C_i)$ is gathered in some leader node
$\leader_i$.
Next $\leader_i$ brute-forces an optimal coloring $\varphi_i$ of its respective
cluster $C_i$ using a color palette $p_a$ that depends on $a$.
(Without restriction we use only the smallest elements of $p_a$ (according to
the natural ordering of $p_a$) in $\varphi_i$.
This enables the nodes to choose correct colors even without knowledge of
$\chi$.)
If $a = \alpha$, $\varphi_i$ is simply a $\chi$-coloring of $C_i$ whose
existence is guaranteed by the $\chi$-chromaticness of $G$.
Otherwise (i.e., if $a < \alpha$), $\leader_i$ instead computes a coloring
$\varphi_i$ of $C_i' \subseteq C_i \cup N(C_i)$ according to \cref{lem:hiding}.
Each coloring $\varphi_i$ is broadcasted to all nodes that may have been
assigned a color by $\varphi_i$.
The nodes that were assigned multiple colors (i.e., potentially those at the
border of two distinct clusters) then simply choose one of them arbitrarily.

\begin{lemma}
	\label{lem:alg-coloring-correctness}
	\Cref{alg:coloring} computes a valid coloring of $G$ using no more than
	$\alpha(\chi-1)+1$ colors.
\end{lemma}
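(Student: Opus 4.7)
I plan to establish two things: that the total number of distinct colors output by \cref{alg:coloring} is at most $\alpha(\chi-1)+1$, and that adjacent vertices of $G$ receive distinct colors. The counting side is short. For every cluster $C_j$ of cluster color $a$, the coloring $\varphi_j$ uses at most $\chi$ elements of the palette $p_a$: when $a < \alpha$ this follows from \cref{lem:hiding}, and when $a = \alpha$ it follows from the $\chi$-chromaticity of $G$ together with the convention of using only the smallest elements of $p_a$. Because $-1$ is the smallest element of every palette, each cluster contributes at most $\chi - 1$ colors different from $-1$. Since the non-$-1$ parts of $p_1, \dotsc, p_\alpha$ are pairwise disjoint by construction, the grand total is at most $\alpha(\chi-1)+1$.

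For properness, I will fix an arbitrary edge $\{u,v\} \in E(G)$ and analyze the sets $\Phi_u, \Phi_v$ built in the final loop. The first key claim is that no color $c \in p_a \setminus \{-1\}$ can lie in both $\Phi_u$ and $\Phi_v$. Suppose it did: then $\varphi_j(u) = c = \varphi_{j'}(v)$ for clusters $C_j, C_{j'}$ both of cluster color $a$. Because the domain of any such $\varphi_j$ is contained in $C_j \cup N(C_j)$, both $u$ and $v$ lie within $G$-distance $1$ of their respective clusters, so $\dist_G(C_j, C_{j'}) \le 3$. Since $\mathcal{D}$ is a decomposition of $G^3$, any two distinct clusters with the same cluster color are at $G$-distance at least $4$. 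Thus $j = j'$, and then $\varphi_j$ would assign $c$ to both endpoints of the edge $\{u,v\}$, contradicting the properness of $\varphi_j$.

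The second key claim is that $\Phi_u = \Phi_v = \{-1\}$ is impossible. The color $-1$ is only ever placed inside the cluster that owns a node: for $a < \alpha$ this is the content of \cref{lem:hiding}, and for $a = \alpha$ the domain of $\varphi_j$ is already $C_j$. So the assumption forces $u \in C_j$ and $v \in C_{j'}$ with $\varphi_j(u) = \varphi_{j'}(v) = -1$. Cluster disjointness and the edge $\{u,v\}$ give $j \ne j'$ and, since adjacent clusters must have distinct cluster colors, the cluster colors of $C_j$ and $C_{j'}$ differ. At least one of them is strictly less than $\alpha$; without loss of generality, say that of $C_{j'}$ is. Now apply \cref{lem:hiding} with $A = C_{j'}$: the vertex $u$ is a neighbor of the $-1$-colored $v \in C_{j'}$, so $u$ must belong to $C'_{j'}$ (otherwise the Hiding Lemma would forbid $\varphi_{j'}(v) = -1$). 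But then $\varphi_{j'}(u)$ is defined and different from $-1$, placing a non-$-1$ color in $\Phi_u$ and contradicting $\Phi_u = \{-1\}$.

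These two claims finish the proof: if $u$ and $v$ ended up with the same final color $c$, then either $c \ne -1$, so $c \in \Phi_u \cap \Phi_v$ and violates the first claim, or $c = -1$, which per the algorithm happens only when both $\Phi_u$ and $\Phi_v$ equal $\{-1\}$, violating the second. The step I expect to require the most care is the second claim, because the $a = \alpha$ branch does not invoke \cref{lem:hiding} and may freely place $-1$ at boundary nodes of its cluster; the contradiction must therefore be extracted from the Hiding Lemma applied to the adjacent cluster of strictly smaller cluster color, rather than to the $\alpha$-cluster itself.
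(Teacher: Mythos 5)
You follow essentially the same approach as the paper: you count colors via the fact that the palettes pairwise intersect only in $\{-1\}$ (with $-1$ treated as the smallest palette element), and you prove properness by a case split on whether the repeated color is $-1$ or not, using the distance-$4$ separation of same-cluster-colored clusters coming from the $G^3$-decomposition in the non-$(-1)$ case and \cref{lem:hiding} applied to the adjacent cluster of cluster color $<\alpha$ in the $-1$ case, exactly as the paper does. One local slip in your second claim: $j \neq j'$ does not follow from cluster disjointness plus the edge $\{u,v\}$ (two adjacent nodes can perfectly well lie in the same cluster); the correct reason is that $j = j'$ would force the single proper coloring $\varphi_j$ to assign $-1$ to both endpoints of an edge, contradicting its properness — the same move you already use in your first claim and the one the paper uses here — and with that one-line repair the rest of your argument goes through as written.
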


\begin{proof}
First we argue that no more than $\alpha(\chi-1)+1$ colors are used in total.
For a cluster color $a \in [\alpha]$, let
\[
	p_a' = \{ \varphi_i(v) \mid \exists i: \mu(C_i) = a \land v \in C_i \}
	\subseteq p_a
\]
be the set of colors actually used by the nodes to color clusters with the color
$a$.
By the $\chi$-chromaticness of $G$, every minimal coloring of any induced
subgraph of $G$ uses at most $\chi$ colors, so we have $\abs{p_a'} \le \chi$.
Since the intersection of two palettes $p_a$ and $p_b$ is exactly $\{ -1 \}$ for
$a \neq b$, we use
\[
	\abs*{\bigcup_{a=1}^\alpha p_a'}
	= \abs*{\{-1\}} + \sum_{a=1}^\alpha \abs*{p_a' \setminus \{-1\}}
	\le \alpha(\chi-1) + 1
\]
colors in total, as claimed.

To show that the color is proper, first observe that, based on
\cref{lem:hiding}, no node in $N(C_i)$ is assigned $-1$ by $\varphi_i$; as a
result, if a node $v$ is assigned the color $-1$ by $\varphi_i$, then
necessarily $v \in C_i$.
Consider any two adjacent nodes $v$ and $v'$ and recall that both of them choose
the largest color among any of the colors they were assigned.
For the sake of contradiction, suppose that both nodes pick the same color $x$.
Consider the following two cases:
\begin{description}
	\item[$x \neq -1$.] Since the intersection of any two color palettes is
	$\{-1\}$, this implies that $v$ and $v'$ were assigned colors from the same
	color palette $p_a$.
	Since the $\varphi_i$ are all valid colorings, the colors of $v$ and $v'$ come
	from different coloring functions $\varphi_i$ and $\varphi_{i'}$,
	respectively.
	However, since $\varphi_i$ and $\varphi_{i'}$ use the same color palette,
	the respective clusters $C_i$ and $C_{i'}$ are assigned the same cluster color by
	$\mathcal{D}$.
	This means that $C_i$ and $C_{i'}$ are not adjacent in $G^3$ and, in turn, the
	distance between nodes in $N(C_i)$ and $N(C_{i'})$ is at least $2$, which
	immediately contradicts $v$ and $v'$ being adjacent.
	\item[$x = -1$.] This means that both of the nodes are assigned their color by
	the coloring of their respective cluster.
	Let $i$ and $i'$ be the numbers of the clusters of $v$ and $v'$, respectively.
	If $i = i'$, then both $v$ and $v'$ pick their colors according to $\varphi_i
	= \varphi_{i'}$, which contradicts $\varphi_i$ being a proper coloring.
	Hence let $i \neq i'$.
	We have then that $C_i$ and $C_{i'}$ are adjacent clusters and, since they are
	distinct, without restriction we have $i \neq \alpha$.
	Since $v' \in N(C_i)$ and the color of both $v$ and $v'$ is $-1$,
	\Cref{lem:hiding} implies that $v'$ is in the domain of $\varphi_i$ (as
	otherwise it would be adjacent to $v$, which has the color $-1$).
	Now since $\varphi_i(v) = -1$, we know that $\varphi_i(v') > -1$, which means
	that the color of $v'$ cannot be $-1$.
	\qedhere
\end{description}
\end{proof}

Note that, in the proof above, we simply showed that no more than $c = \alpha
(\chi-1) + 1$ colors are used, but there is still a minor technicality to be
dealt with since the colors do not come from the set $[c]$ as the definition of
$c$-coloring demands (but rather either the color is $-1$ or a pair $(a,b)$).
A straightforward way of resolving this is, for instance, remapping $-1$ to $1$
and every pair $(a,b) \in p_a$ to $\alpha(b-1) + a + 1$.
(Note this gives a bijection between $\{-1\} \cup \bigcup_{a \in [\alpha]} p_a$
and $\nat_+$.)
Since we use only the smallest elements of the palette $p_a$ in each respective
coloring $\varphi_i$, we know that any $(a,b) \in p_a'$ must be such that $b \le
\chi-1$.
Hence the largest color used is $\alpha(\chi-2) + \alpha + 1 = c$ (corresponding
to $(\alpha,\chi-1) \in p_\alpha$).

\begin{lemma}
	\label{lem:alg-coloring-complexity}
	Given the $(\alpha, d(n))$-network decomposition $\mathcal{D}$,
	\Cref{alg:coloring} can be run distributedly in $\myO{d(n)}$ rounds in the
	\detlcl model.
\end{lemma}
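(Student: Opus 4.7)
The plan is to show that every phase of \cref{alg:coloring} can be carried out in $O(d(n))$ rounds in $G$ and that distinct clusters perform these phases concurrently, so no additive loss appears across clusters. The main step I expect to require care is translating the guarantees of $\mathcal{D}$---which are stated for $G^3$---into guarantees about distances in $G$; once that is done, the rest of the analysis is standard flooding and broadcasting in the \local model.

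As a preparatory step I would observe that every edge of $G^3$ corresponds to a path of length at most $3$ in $G$, so by concatenation any two nodes of a cluster $C_i$ lie within distance at most $3 d(n) = O(d(n))$ in $G$. In particular, every node of $C_i \cup N(C_i)$ is within distance $O(d(n))$ in $G$ of any fixed node of $C_i$, such as the eventual leader.

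Next I would bound the communication cost inside a single cluster. The leader election on line~\ref{line:leader-election} is done by having every node of $C_i$ repeatedly forward the smallest identifier it has so far seen from its own cluster, using the cluster identifier provided by $\mathcal{D}$ as a tag so that messages originating in different clusters do not interfere; after $O(d(n))$ rounds every node of $C_i$ knows the minimum identifier of its cluster and agrees deterministically on $\leader_i$. For the collection on line~\ref{line:collect} and the broadcasts of $\varphi_i$ on line~\ref{line:broadcast} (and the symmetric broadcast in the $a = \alpha$ branch) I would exploit the fact that in the \local model a single message may carry unboundedly many bits: the topology of $C_i \cup N(C_i)$ can be piped to $\leader_i$ in a number of rounds proportional to the largest distance in $G$ from $\leader_i$ to a node of $C_i \cup N(C_i)$, which is $O(d(n))$; symmetrically, $\varphi_i$ is disseminated within the same radius in $O(d(n))$ rounds. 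The computation of $\varphi_i$ at the leader and the final selection from $\Phi_u$ at each node $u$ are purely local and contribute no communication rounds.

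Because all clusters execute these phases in parallel and tag their messages with their cluster identifier, the overall round complexity is $O(d(n))$ rather than a sum over clusters. A mild subtlety worth flagging is that a shortest path in $G$ between two nodes of the same $C_i$ may pass through nodes outside $C_i$; this is harmless because those intermediate nodes simply relay the tagged messages without participating in the coloring of $C_i$. No random choice is made by \cref{alg:coloring} itself, so the algorithm is deterministic whenever $\mathcal{D}$ is, and therefore runs in \detlcl as claimed.
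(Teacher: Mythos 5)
Your proof is correct and follows essentially the same route as the paper: identify the three communication steps (leader election, topology collection, broadcast of $\varphi_i$), bound each by the cluster's weak diameter in $G$ using the unbounded message size of the \local model, and note that clusters act in parallel while all remaining steps are free local computation. Your extra care about the factor $3$ from the decomposition being taken on $G^3$ and about relaying through nodes outside the cluster (weak diameter) only makes explicit what the paper treats implicitly.
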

\begin{proof}
The only lines in the algorithm that require communication are 
\cref{line:leader-election,line:collect,line:broadcast}.
Since our clusters have diameter at most $d(n)$, \cref{line:leader-election}
requires only $\myO{d(n)}$ rounds of communication.
The other two trivially take only $d(n)+1$ rounds since the message size in the
\detlcl model is unbounded and also any node in $N(C_i)$ has distance at most
$d(n)+1$ to its respective $\leader_i$.
All other steps in the algorithm are local computations that do not incur any
cost in the \detlcl model.
\end{proof}

Together, \cref{lem:alg-coloring-correctness,lem:alg-coloring-complexity} prove
\cref{alg:coloring} satisfies the requirements of \cref{thm:alg-coloring}, thus
concluding its proof.

\subsection{Fast network decomposition}
\label{sec:alg-decomposition}

Given any clustering algorithm, we can obtain a network decomposition algorithm
as follows.

\begin{theorem}
	\label{thm:NetDec}Let $f,g\colon \nat \to \nat$ be arbitrary functions and suppose there is an
	$\myO{f(n)}$-round distributed $(1/2, g(n))$-clustering algorithm named
	\cluster for \detlcl or \randlcl.
	There is an algorithm $\mathcal{A}$ that, given a graph $G = (V,E)$ and any
	$\alpha \in \nat_+$, computes an \((\alpha, \OO(n^{1/\alpha} g(n)))\)-network
	decomposition of $G$ in
	\[
	  d(n) = \myO{\left(\frac{n}{g(n)}\right)^{1/\alpha}
			\left( f(n) + g(n) \right) \log\frac{n}{g(n)}}
	\]
	rounds.
	The algorithm $\mathcal{A}$ works in \detlcl or \randlcl, depending on which
	model \cluster itself is based on.
	In addition, if \cluster is randomized and succeeds with probability $1 -
	1/\poly(n)$, then $\mathcal{A}$ also succeeds with probability $1 -
	1/\poly(n)$.
\end{theorem}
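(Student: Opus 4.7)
The plan is to build the network decomposition color by color, using the given clustering algorithm \cluster as a subroutine. Conceptually there are $\alpha$ stages, each producing clusters of a single cluster color: stages $1, \ldots, \alpha - 1$ each rely on \cluster (boosted to leave only a tiny fraction of nodes unclustered), while the last stage simply gathers the remaining nodes into their own clusters via a connected-components computation.

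The key intermediate step is a bootstrapping lemma: starting from the given $(1/2, g(n))$-clustering in $f(n)$ rounds, one can obtain, for any target $\eps > 0$, a set of mutually non-adjacent clusters of weak diameter $\myO{g(n)\log(1/\eps)}$ leaving at most $\eps \cdot \abs{V}$ nodes unclustered, all within $\myO{(f(n) + g(n)) \log(1/\eps)}$ rounds. The construction iterates \cluster roughly $\log(1/\eps)$ times on the subgraph currently induced by the unclustered nodes; in each iteration, any newly formed cluster that happens to be adjacent to an already-established cluster from an earlier iteration is absorbed into it. Since a single invocation of \cluster halves the number of unclustered nodes, $\log(1/\eps)$ invocations leave an $\eps$-fraction. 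Each absorption joins two cluster objects at distance one in $G$, increasing the weak diameter by at most $g(n) + \myO{1}$, so after $\log(1/\eps)$ iterations every surviving cluster has weak diameter $\myO{g(n)\log(1/\eps)}$.

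Given the bootstrapped clustering, the decomposition is built as follows. Let $U_1 = V$, and for $i = 1, \ldots, \alpha - 1$, apply the bootstrapped clustering on $G[U_i]$ with $\eps \approx n^{-1/\alpha}$, assign cluster color $i$ to the clusters produced, and let $U_{i+1}$ be the remaining unclustered nodes. By induction $\abs{U_{i+1}} \le \eps \abs{U_i}$, so $\abs{U_\alpha} \le n^{1/\alpha}$. In stage $\alpha$, each connected component of $G[U_\alpha]$ becomes a cluster of color $\alpha$; since each such component has at most $n^{1/\alpha}$ vertices, its weak diameter is at most $n^{1/\alpha}$ and it can be collected in $\myO{n^{1/\alpha}}$ rounds. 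To match the exact diameter bound $\myO{n^{1/\alpha} g(n)}$ claimed in the theorem as well as the precise running-time bound, one can perform the bootstrapped clustering inside a suitable power $G^s$ of $G$, trading a factor of $s$ in round complexity for a factor of $s$ in cluster diameter.

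Correctness then follows immediately: clusters of the same color are either mutually non-adjacent by construction (for colors $1, \ldots, \alpha - 1$, via the bootstrapping lemma) or are distinct connected components (for color $\alpha$). The diameter and running-time bounds follow by summing the per-stage costs and choosing $\eps$ and $s$ appropriately, and the randomized variant preserves its $1 - 1/\poly(n)$ success probability via a union bound over the $\myO{\alpha \log n}$ invocations of \cluster. The main technical obstacle is the bootstrapping step: a naive iteration of \cluster produces clusters across different iterations that may well be adjacent to each other, so the absorption procedure is essential to guarantee mutual non-adjacency of the final collection, and carrying out the diameter-growth analysis under these merges is the crux of the argument.
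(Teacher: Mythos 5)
Your high-level plan is the same as the paper's: boost \cluster so that only an $\eps$-fraction of nodes remains unclustered, apply this $\alpha-1$ times with $\eps$ roughly $(g(n)/n)^{1/\alpha}$, and let the leftover connected components form the color-$\alpha$ clusters. The gap is in your bootstrapping lemma, and it is not a fixable detail: the statement you rely on is false. First, the absorption rule does not preserve mutual non-adjacency: a newly formed cluster can be adjacent to \emph{several} established clusters, so absorbing it into one of them leaves it adjacent to the others, and repairing this forces you to merge entire alternating chains of old and new clusters, whose weak diameter is not bounded by ``previous diameter plus $g(n)+\myO{1}$''; the claimed $\myO{g(n)\log(1/\eps)}$ diameter bound therefore does not follow. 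Second, and more fundamentally, no procedure can achieve the parameters you claim. On an $n$-cycle, if at most $\eps n$ nodes are left unclustered, deleting them leaves at most $\eps n$ paths, so some fully clustered path has at least $(1-\eps)/\eps$ nodes; consecutive nodes of that path must lie in the same cluster (otherwise two clusters would be adjacent), so some cluster has weak diameter $\Omega(1/\eps)$. Hence any $(\eps,d)$-clustering needs $d=\Omega(1/\eps)$, not $\myO{g(n)\log(1/\eps)}$; with $\eps\approx n^{-1/\alpha}$ your clusters necessarily have diameter $\Omega(n^{1/\alpha})$, and your per-stage round bound $\myO{(f(n)+g(n))\log(1/\eps)}$ cannot hold either, since gathering such clusters alone takes that long.

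This is precisely why the paper's bootstrapping pays the $1/\eps$ factor explicitly: it runs \cluster on the power graph $G^{2R+1}$ with $R=4/\eps$ so that clusters produced in the same iteration are at distance at least $2R+2$, and around each new cluster $C$ it deletes one annulus $\NN_{j^\ast}(C)\setminus\NN_{j^\ast-1}(C)$ with $j^\ast\le R$ chosen by an averaging argument, so that only an $\eps/4$-fraction of nodes is sacrificed per iteration while the deleted layer permanently separates the fixed cluster $\NN_{j^\ast-1}(C)$ from everything produced in later iterations (the deleted nodes are simply charged to the unclustered budget, rather than merged). This yields an $(\eps,\myO{g(n)/\eps})$-clustering in $\myO{(f(n)+g(n))\,\eps^{-1}\log(1/\eps)}$ rounds, and plugging in $\eps=(g(n)/n)^{1/\alpha}$ is exactly where the $(n/g(n))^{1/\alpha}$ factor in the theorem's diameter and running-time bounds comes from. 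If you want to salvage your write-up, you should replace the absorption step by such a boundary-deletion (or an equivalent separation mechanism) and redo the diameter and round-complexity accounting with the unavoidable $1/\eps$ factor.
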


We mention two state-of-the-art clustering algorithms that can be plugged into
\cref{thm:NetDec}, one for the \detlcl model and one for the \randlcl model. 
For the \detlcl model we use the clustering algorithm from \cite{GGHIR23}.
In fact this algorithm actually works even in the more restricted \congest
model, where each node can only send $O(\log n)$-bit messages each round.

\begin{theorem}[\cite{GGHIR23}]\label{thm:DND}There exists an algorithm that computes a $(1/2, \myO{\log n \cdot \log \log
	\log n})$-clustering in \detcgt in $\myOTilde{\log^2 n}$ rounds.
\end{theorem}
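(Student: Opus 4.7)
\textbf{Proof proposal for \cref{thm:DND}.}
My plan is to start from a randomized ball-carving procedure in the spirit of Miller--Peng--Xu and then derandomize it via the Rozho{\v{n}}--Ghaffari framework, paying close attention to bandwidth so the entire construction fits in \detcgt. In the randomized version, I would give every node $v$ an independent exponential shift $r_v$ with mean $\Theta(\log n)$, declare $v$ to \emph{win} node $u$ if $v$ minimises $\dist(u,v) - r_v$, and form a cluster around each center. A standard tail argument on the exponential shows that the winner of $u$ lies at distance $O(\log n)$ with probability $1-1/\poly(n)$, and a difference-of-minima argument shows that a random edge is \emph{cut} (its endpoints have different winners) with probability $O(1/\log n)$. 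Taking only those clusters whose border is not cut, and unclustering a thin boundary layer to enforce non-adjacency, produces a $(1/2,\OO(\log n))$-clustering in $\OO(\log n)$ rounds with high probability.

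The main work is to derandomize this in \detcgt. The idea I would pursue is the Rozho{\v{n}}--Ghaffari bit-by-bit derandomization: interpret the exponential shift of each node by its $\OO(\log n)$ random bits and commit one bit across the whole graph at a time, each time in the direction that (at least) preserves the \emph{expected} fraction of clustered nodes, measured by a \emph{local} potential function that decomposes additively over $\OO(\log n)$-radius balls. Pairwise independence of the bits (achievable via a short seed) lets us evaluate the conditional expectation locally: each node computes its own contribution in an $\OO(\log n)$-round subroutine, aggregates along a BFS tree of its ball, and a global decision is taken using a deterministic network-decomposition-style aggregation on a short seed. There are $\OO(\log n)$ bits to fix, and each fixing costs $\tilde\OO(\log n)$ rounds, which is where the overall $\tilde\OO(\log^2 n)$ round complexity comes from. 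Because only an $\OO(\log n)$-bit seed is ever transmitted (rather than full shifts), every message is of $\OO(\log n)$ bits, so the algorithm runs in \detcgt rather than just \detlcl.

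To squeeze the diameter down from $\OO(\log n)$ to $\OO(\log n\cdot \log\log\log n)$ I would use a recursive refinement: apply the above procedure with a mean-shift of $\OO(\log\log\log n)$ instead of $\OO(\log n)$, which gives a $(1-1/\polylog n, \OO(\log\log\log n))$-clustering rather than $(1/2,\OO(\log n))$; then iterate $\OO(\log n)$ times on the set of still-unclustered nodes, each time joining new small clusters to the existing partial clustering (carefully preserving non-adjacency through a small boundary unclustering step, as above). A union bound gives a final unclustered fraction below $1/2$ after $\OO(\log n)$ iterations while the clusters grow in diameter by $\OO(\log\log\log n)$ per iteration, for a total of $\OO(\log n\cdot \log\log\log n)$. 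The round complexity stays $\tilde\OO(\log^2 n)$ because each iteration costs $\tilde\OO(\log n\cdot \log\log\log n)$.

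The main obstacle I expect is handling the interaction between the derandomization and the recursive refinement: the potential function used to guide bit-fixing has to remain \emph{local} to a ball whose radius is only $\OO(\log\log\log n)$, which limits how much of the future process the potential can see. Making the potential truly local while still certifying that a constant fraction of the \emph{originally} uncovered nodes get clustered in each iteration — and doing so deterministically in \congest bandwidth — is where the technical heavy lifting sits, and it is what forces the use of a carefully designed small-seed pseudorandom object rather than fully independent shifts.
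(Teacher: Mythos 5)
There is nothing to compare against here: the paper does not prove \cref{thm:DND} at all. It is imported verbatim as a black box from \cite{GGHIR23} (just as \cref{thm:RND} is imported from \cite{chang2023}), and the paper only uses its statement inside \cref{thm:NetDec}. So a from-scratch proof was never part of the paper; what you have written is an attempt to reconstruct the result of \cite{GGHIR23}, and as a reconstruction it both diverges from the actual construction and contains genuine gaps. The line of work behind \cite{GGHIR23} (Rozho\v{n}--Ghaffari and its refinements) is not a derandomization of Miller--Peng--Xu exponential shifts by conditional expectations over a pairwise-independent seed; it is a deterministic cluster-growing/merging process driven by the $\OO(\log n)$ bits of cluster identifiers, in which each bit-phase either grows clusters or retires a small fraction of nodes, with the $\log\log\log n$ diameter factor coming from a careful recursive budgeting of the per-phase growth and the $\myOTilde{\log^2 n}$ round bound from $\OO(\log n)$ phases of $\myOTilde{\log n}$ rounds each, all with $\OO(\log n)$-bit messages.

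Two concrete steps in your sketch would fail as stated. First, the pairwise-independence derandomization: the MPX guarantees you invoke (winner within a bounded radius, per-edge cut probability $\OO(1/\log n)$) rest on memorylessness and order-statistics arguments for \emph{fully independent} exponentials, and the quantity you would need to track under bit-fixing (the expected fraction of nodes whose $1$-neighborhood survives uncut) is not a sum of pairwise terms, so a pairwise-independent seed does not let you evaluate the conditional expectations you need; this is exactly the technical obstruction that the identifier-bit approach of Rozho\v{n}--Ghaffari was designed to sidestep. Second, the diameter-reduction recursion has the parameters backwards: for exponential shifts the cluster radius scales like $(\log n)$ times the mean (it is governed by the maximum of $n$ exponentials), so shifts of mean $\OO(\log\log\log n)$ give clusters of radius $\Theta(\log n\cdot\log\log\log n)$, not $\OO(\log\log\log n)$, and simultaneously the per-edge cut probability grows to $\Omega(1/\log\log\log n)$, so the claimed $(1-1/\polylog n,\OO(\log\log\log n))$-clustering in each refinement step does not follow. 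If you want this theorem in a write-up, cite \cite{GGHIR23} as the paper does; if you want to prove it, you would need to follow the identifier-bit clustering framework rather than a derandomized ball-carving.
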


Plugging in this algorithm in \cref{thm:NetDec}, we obtain the first item of
\cref{thm:upperbound_full}.
For the \randlcl model (i.e., the second item of \cref{thm:upperbound_full}), we
use the following.

\begin{theorem}[\cite{chang2023}]\label{thm:RND}There exists an algorithm that computes a $(1/2, \OO(\log n))$-clustering in
	the \randlcl model in $\myO{\log n}$ rounds with probability $1 - 1 /
	\poly(n)$.
\end{theorem}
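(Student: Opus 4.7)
My plan is to implement the Miller--Peng--Xu (MPX) random-shift exponential clustering, which is the standard primitive behind fast low-diameter decompositions in \randlcl. Fix a sufficiently small constant \(\beta > 0\) and have every node \(v \in V\) sample an independent shift \(\delta_v \sim \mathrm{Exp}(\beta)\). Assign each \(u\) to the \emph{winner} \(v^{*}(u) := \arg\min_{v \in V}(\dist_G(u,v) - \delta_v)\) and form the tentative clusters \(C_v^{\circ} = \{u : v^{*}(u) = v\}\). Because the target \((1/2, O(\log n))\)-clustering requires mutual non-adjacency, mark \(u\) as deleted whenever some neighbor \(u'\) has \(v^{*}(u') \ne v^{*}(u)\); the surviving non-empty tentative clusters become \(S_1, \dots, S_k\) and the deleted nodes form the set \(D\).

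Next I would verify the weak-diameter and round-complexity guarantees. A union bound on exponential tails gives \(\max_v \delta_v \le C \log n\) with probability \(1 - 1/\poly(n)\) for a suitable constant \(C = C(\beta)\). The defining inequality for \(v^{*}(u)\) yields \(\dist(u, v^{*}(u)) - \delta_{v^{*}(u)} \le -\delta_u \le 0\), so \(\dist(u, v^{*}(u)) \le \delta_{v^{*}(u)} \le C \log n\), and consequently every surviving \(S_i\) has weak diameter at most \(2C\log n = O(\log n)\). Moreover, any node \(v\) at distance more than \(C \log n\) from \(u\) has a strictly positive arrival time \(\dist(u,v) - \delta_v\), whereas the arrival time of \(v^{*}(u)\) is non-positive. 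Hence only shifts of nodes within distance \(O(\log n)\) of \(u\) can affect \(v^{*}(u)\), and the algorithm can be simulated in \(O(\log n)\) \randlcl rounds by having every node collect its \(O(\log n)\)-neighborhood of shifts.

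The heart of the proof is bounding \(|D|\). The classical MPX memorylessness argument gives \(\Pr[v^{*}(u) \ne v^{*}(u')] = O(\beta)\) for every edge \((u,u')\): conditioning on the order in which the arrival times \(\dist(u,v) - \delta_v\) reveal themselves and invoking the memoryless property of the exponential, the gap between the first two arrival times at \(u\) is itself exponential with rate \(\beta\). The main obstacle is upgrading this per-edge statement into the required \(|D| \le n/2\) with probability \(1 - 1/\poly(n)\), because the deletion events \(\{u \in D\}_u\) are strongly correlated through shifts of nodes in overlapping \(O(\log n)\)-neighborhoods. I would follow \cite{chang2023}: combine a refined MPX-style analysis that upgrades the edge bound to a per-node deletion bound \(\Pr[u \in D] = O(\beta)\) with a read-\(k\) Chernoff-type concentration inequality, exploiting the fact that each shift \(\delta_v\) influences only the \(O(\log n)\)-neighborhood of \(v\); choosing \(\beta\) to be a sufficiently small constant then makes \(\mathbb{E}[|D|] \le n/4\) and drives the failure probability to \(1/\poly(n)\) while preserving the \(O(\log n)\) round budget.
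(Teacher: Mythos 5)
First, a point of comparison: the paper does not prove \cref{thm:RND} at all --- it is imported as a black box from \cite{chang2023} and then plugged into \cref{thm:NetDec}. So your proposal is in effect an attempt to reprove the cited result. The easy parts of your sketch are fine and standard: exponential shifts, assigning each node to the winner $v^*(u)$, deleting any node with a differently-assigned neighbor (this does make the surviving clusters mutually non-adjacent), the whp bound $\max_v \delta_v = \myO{\log n}$ giving weak diameter $\myO{\log n}$ and an $\myO{\log n}$-round \randlcl simulation, and the memorylessness argument giving $\Pr[u \in D] = \myO{\beta}$, hence $\mathbb{E}[\abs{D}] \le n/4$ for a small constant $\beta$.

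The genuine gap is exactly the step that constitutes the content of the theorem: upgrading $\mathbb{E}[\abs{D}] \le n/4$ to $\abs{D} \le n/2$ with probability $1 - 1/\poly(n)$. You handle it partly by deferring to \cite{chang2023} itself, which is circular in a blind proof, and partly by invoking a read-$k$ Chernoff bound ``exploiting the fact that each shift $\delta_v$ influences only the $\myO{\log n}$-neighborhood of $v$.'' That locality is a bound on the \emph{radius} of influence, not on the \emph{number} of affected variables: in a low-diameter graph (an expander, say) the $\myO{\log n}$-ball around $v$ can contain $\Theta(n)$ nodes, so the read parameter $k$ can be $\Theta(n)$ and the resulting tail bound degrades to a constant rather than $1/\poly(n)$. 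Markov's inequality likewise only gives constant success probability, and naive repetition is delicate in \randlcl because no node can locally verify the global condition $\abs{D} \le n/2$. Since the downstream use in \cref{thm:NetDec} and \cref{lem:clusterSuccProb} genuinely needs the $1-1/\poly(n)$ guarantee (the clustering is invoked $\Theta(\log(1/\eps))$ times and the failures are union-bounded), the high-probability statement cannot be waved through: you would need the actual concentration machinery of \cite{chang2023} (or an alternative argument tailored to the dependency structure of the shifts), and as written the proposal does not supply it.
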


There are two steps to the proof of \cref{thm:NetDec}.
In \cref{sec:fast-clustering} we show a bootstrapping result where we use the
$(1/2,g(n))$-clustering algorithm \cluster to obtain a
$(\eps,\OO(g(n)/\eps))$-clustering algorithm for any $\eps$ of our choosing.
Plugging in an adequate value for $\eps$, in \cref{sec:fast-decomposition} we
then obtain the network decomposition algorithm of \cref{thm:NetDec}.

\subsubsection{Fast clustering}
\label{sec:fast-clustering}

Next we show our bootstrapping procedure, with which we can reduce a $1/2$
fraction of unclustered nodes to any $\eps$ of our choosing.
The price to pay is only a multiplicative $\myO{1/\eps}$ factor in the diameter
of the clusters and a multiplicative $\myO{\eps^{-1} \log(1/\eps)}$ factor in
the running time.
Formally, what we achieve is the following:

\begin{theorem}
	\label{thm:DSC}Let \cluster be as in \cref{thm:NetDec}.
	For any $0 < \varepsilon \leq 1$, there is an algorithm $\eps$-\cluster that
	computes an $(\eps, \myO{g(n)/\eps})$-clustering in
	\[
		\myO{\frac{(f(n) + g(n)) \log(1/\eps)}{\eps}}
	\]
	rounds.
	The algorithm $\eps$-\cluster can be implemented in the same distributed model
	as \cluster and its runtime is dominated by $2\log(1/\eps)$ invocations of
	\cluster.
	In addition, if \cluster is randomized (i.e., it works in \randlcl) and
	succeeds with probability $1 - 1/\poly(n)$, then $\eps$-\cluster also succeeds
	with probability $1 - 1/\poly(n)$.
\end{theorem}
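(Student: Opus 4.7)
The plan is to define $\eps$-\cluster as an iterative amplification of \cluster. The algorithm maintains a set $U$ of currently unclustered nodes, initialized to $V(G)$, and a collection $\mathcal{C}$ of super-clusters, initialized to empty. In each of $t = \lceil \log_2(1/\eps) \rceil$ phases, I would invoke \cluster on the induced subgraph $G[U]$; since \cluster is a $(1/2, g(n))$-clustering algorithm, this halves $|U|$ per phase, so after $t$ phases $|U| \le n/2^t \le \eps n$, fulfilling the $\eps$-unclustered requirement.

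The technical issue is that new clusters produced in phase $i$ may be adjacent in $G$ to super-clusters inherited from earlier phases, violating mutual non-adjacency of the final clustering. I would resolve this after each phase by \emph{merging} clusters whose union is connected in $G$. Operationally, one builds the bipartite adjacency graph $H_i$ between the new clusters of phase $i$ and the super-clusters inherited from phase $i-1$, placing an edge whenever the two contain nodes at distance $1$ in $G$, and declares each connected component of $H_i$ to be a single super-cluster going forward. Since \cluster outputs mutually non-adjacent clusters within a phase, $H_i$ is bipartite and the resulting partition of $V(G)\setminus U$ remains valid.

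The main obstacle is bounding the weak diameter $d_i$ of super-clusters after phase $i$ by $\OO(g(n)/\eps)$. Any merged super-cluster in phase $i$ is stitched from super-clusters of diameter at most $d_{i-1}$ through newly created clusters of diameter at most $g(n)$, with the new clusters mutually non-adjacent inside the phase to control how merges chain together. A careful accounting of how the bipartite components of $H_i$ can grow yields a recurrence of the form $d_i \le 2 d_{i-1} + \OO(g(n))$, which unrolls to $d_t = \OO(g(n) \cdot 2^t) = \OO(g(n)/\eps)$. This exponential growth is precisely what produces the $1/\eps$ factor in the target diameter bound, and this diameter analysis is the most delicate part of the argument.

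For the runtime, each phase consists of one call to \cluster at cost $\OO(f(n))$ rounds in \local, followed by a merge coordination step in which each super-cluster agrees on a common identifier via a leader broadcast at cost $\OO(d_i)$ rounds. Summing over the $t$ phases gives $\OO(f(n)\log(1/\eps) + \sum_{i\le t} d_i) = \OO(f(n)\log(1/\eps) + g(n)/\eps)$, which fits within the stated $\OO((f(n)+g(n))\log(1/\eps)/\eps)$ bound; the number of invocations of \cluster is $t \le 2\log(1/\eps)$, so the runtime is dominated by these invocations. The algorithm inherits the distributed model (\detlcl or \randlcl) of \cluster directly, and in the randomized case a union bound over the $\OO(\log(1/\eps)) = \OO(\log n)$ invocations preserves the $1-1/\poly(n)$ success guarantee.
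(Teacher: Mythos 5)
There is a genuine gap in the diameter analysis, and it is exactly at the step you flag as "the most delicate part." The recurrence $d_i \le 2d_{i-1} + \myO{g(n)}$ is not justified and is false in general: a connected component of your merge graph $H_i$ is not limited to one new cluster joining two old super-clusters. A single new cluster can be adjacent to several old super-clusters, each of which can be adjacent to further new clusters, so a component of $H_i$ can chain together arbitrarily many clusters, and the weak diameter of the merged super-cluster grows with the length of that chain, not with $2d_{i-1}+\myO{g(n)}$. A concrete counterexample is a path: if \cluster in phase~1 returns blocks separated by single unclustered vertices, then in phase~2 those isolated vertices become (tiny) clusters, each adjacent to its two neighboring phase-1 clusters, and your merging step collapses the entire path into one super-cluster of diameter $\Theta(n)$, destroying the $\myO{g(n)/\eps}$ bound. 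So the proposal as written does not prove the theorem; the merging idea itself is what breaks.

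The paper avoids merging altogether. It runs \cluster on the power graph $G^{2R+1}$ with $R = 4/\eps$, so clusters produced in the same iteration are at $G$-distance at least $2R+2$ from each other, and then, for each cluster $C$, it chooses by an averaging argument a radius $j^\ast \in [R]$ whose boundary layer $\NN_{j^\ast}(C)\setminus\NN_{j^\ast-1}(C)$ has at most $\abs{\NN_R(C)\setminus C}/R$ nodes, deletes that layer, and fixes $\NN_{j^\ast-1}(C)$ as a final cluster. The deleted buffer guarantees that clusters created in later iterations can never be adjacent to already-fixed ones, so non-adjacency holds with no merging and the diameter stays $\myO{R\cdot g(n)} = \myO{g(n)/\eps}$ per cluster; the deletions are charged to the unclustered budget and sum (over the geometrically shrinking graph) to at most $\eps n/2$, plus $\eps^2 n$ leftover nodes, for a total of at most $\eps n$. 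If you want to salvage your outline, you need some such separation mechanism (buffer deletion plus the power-graph trick) in place of merging; the halving/amplification, the count of $2\log(1/\eps)$ invocations, and the success-probability union bound in your write-up are fine.
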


We first describe the algorithm (\cref{alg:clustering}) and then prove it
satisfies the properties of \cref{thm:DSC}.
Note that, in the description of \cref{alg:clustering} in all three Lines
\ref{line:minJ},\ref{line:delBoundary} and \ref{line:fixC} the neighborhoods are
always taken with respect to $G$.

\begin{algorithm}
\caption{Bootstrapping \cluster}\label{alg:clustering}
\begin{algorithmic}[1]
\Require $G=(V,E), 0 < \varepsilon \leq 1$ 
\Require Clustering algorithm \cluster as in \cref{thm:DSC}
\State $R \gets 4/\eps$
\State $\mathcal{C'} \gets \text{empty clustering}$
\For{$2\log(1/\eps)$ times} \label{line:clustering_outer_for}
	\State Run \cluster on $G^{2R+1}$, producing a clustering $\mathcal{C}$
	\For{each cluster $C \in \mathcal{C}$ in parallel}
	\label{line:clustering_inner_for}
		\State Find $j^* \in [R]$ such that
		$\abs{\NN_{j^*}(C) \setminus \NN_{j^*-1}(C)}$ is minimized
		\label{line:minJ}
		\State Mark all nodes in $\NN_{j^*}(C) \setminus \NN_{j^*-1}(C)$ for
		deletion
		\label{line:delBoundary}
		\State Add $\NN_{j^*-1}(C)$ as a cluster to $\mathcal{C}'$
		\label{line:fixC}
	\EndFor
	\State Let $U$ be the set of nodes not marked for deletion or that did not
	join a cluster of $\mathcal{C'}$
	\State $G \gets G[U]$
\EndFor
\State Each node outputs whether it is part of a cluster of $\mathcal{C}'$ or
not
\end{algorithmic}
\end{algorithm}

At a high level, \cref{alg:clustering} works in multiple iterations, in each of
which we first invoke \cluster.
For each new cluster $C$ that is computed, we delete some boundary around this
cluster and separate it from the rest of the graph.
We then add $C$ to our final clustering.
Since \cluster clusters at least half of the nodes, the size of our graph is
halved in each iteration; hence after $\myO{2 \log 1/\varepsilon}$ iterations we
are left with only an $\varepsilon$ fraction of nodes.

The main challenge lies in the choice of the boundaries and preventing too many
nodes from being deleted.
To solve this we appeal to the computational power of the \local model:
For each cluster $C$, we inspect the $(4/\varepsilon)$-hop neighborhood of $C$
and find $j^*$ such that the number of nodes at distance exactly $j^*$ from $C$
(i.e., $\abs{\NN_{j^*}(C) \setminus \NN_{j^*-1}(C)}$) is minimized.
As we will show, there is always a choice of $j^*$ such that the number of nodes
we delete is not too large.

We now turn to showing that \cref{alg:clustering} satisfies the properties in
\cref{thm:DSC}.
To give an overview, we need to prove the following:
\begin{enumerate}
  \item The clusters created have diameter $\myO{g(n)/\eps}$ and are
  non-adjacent (\cref{lem:clusterSeparation}).
  \item At least a $1-\eps$ fraction of the nodes get clustered
  (\cref{lem:clusterDeletion}).
  \item \cref{alg:clustering} runs in $O((f(n)+g(n)) \cdot \eps^{-1}
  \log(1/\eps))$ rounds (\cref{lem:clusterRuntime}).
	\item If \cluster is randomized and has success probability $1-1/\poly(n)$,
	then \cref{alg:clustering} also succeeds with probability $1-1/\poly(n)$
	(\cref{lem:clusterSuccProb}).
\end{enumerate}
We address these claims now one by one.

\begin{lemma}\label{lem:clusterSeparation}The clusters created by \cref{alg:clustering} have diameter $\myO{g(n)/\eps}$
and are non-adjacent.
\end{lemma}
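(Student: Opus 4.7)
The plan is to fix $R = 4/\eps$ and trace what a single invocation of \cluster on $G^{2R+1}$ does within one iteration of the outer loop, then argue that later iterations cannot destroy the non-adjacency of clusters already added to $\mathcal{C}'$. Throughout, the key translation is that a single edge of $G^{2R+1}$ corresponds to a path of length at most $2R+1$ in $G$, and that the enlargement step $C \mapsto \NN_{j^*-1}(C)$ inflates $C$ by at most $R-1$ hops in $G$, since $j^* \in [R]$.

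For the diameter bound: since \cluster is a $(1/2, g(n))$-clustering algorithm, each cluster $C$ that it outputs has weak diameter at most $g(n)$ in $G^{2R+1}$, hence at most $(2R+1)g(n)$ in $G$. Each node of $\NN_{j^*-1}(C)$ lies within $R-1$ hops of $C$ in $G$, so the triangle inequality bounds the weak $G$-diameter of $\NN_{j^*-1}(C) \in \mathcal{C}'$ by $(2R+1)g(n) + 2(R-1) = \myO{g(n)/\eps}$.

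For non-adjacency within one iteration: distinct clusters $C, C'$ output by \cluster are non-adjacent in $G^{2R+1}$, hence $\dist_G(C, C') \ge 2R+2$. After inflating each of $C$ and $C'$ by at most $R-1$ hops in $G$, the $G$-distance between the enlarged sets is still at least $(2R+2) - 2(R-1) = 4 \ge 2$. For non-adjacency across iterations, the layer $\NN_{j^*}(C) \setminus \NN_{j^*-1}(C)$ marked for deletion in iteration $i$ is exactly the barrier needed: any node $v$ that survives into iteration $i+1$ and is not already in $\mathcal{C}'$ must lie outside $\NN_{j^*}(C)$, so $\dist_G(v, C) \ge j^*+1$ and, by the triangle inequality, $\dist_G(v, u) \ge 2$ for every $u \in \NN_{j^*-1}(C)$.

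The main subtlety is the inter-iteration argument: one must remember that only the thin boundary at distance exactly $j^*$---not the whole of $\NN_{j^*}(C)$---is deleted, and that non-adjacency is to be verified in the original $G$ rather than in the shrinking working graph of the current iteration. Once this bookkeeping is kept straight, both claims reduce to two clean applications of the triangle inequality in $G$.
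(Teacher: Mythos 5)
Your proof is correct and follows essentially the same route as the paper's: the diameter bound via $(2R+1)g(n)+2(R-1)$, the within-iteration separation via $2R+2-2(R-1)\ge 4$, and the cross-iteration argument that the deleted layer at distance exactly $j^*$ leaves every surviving node at distance at least $2$ from the fixed cluster. The paper states this last step slightly more informally (the one-hop neighborhood of each fixed cluster is deleted, so later clusters cannot touch it), but the content is identical to your triangle-inequality bookkeeping.
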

\begin{proof}
We start by analyzing a single iteration of the for loop on
\cref{line:clustering_outer_for} and show that a cluster with diameter
$\myO{g(n)/\eps}$ is created.
Afterwards we prove that, once a good cluster is created, it is preserved by
later iterations.

\cref{alg:clustering} runs the algorithm \cluster of \cref{thm:DND} on the power
graph $G^{2R +1}$.
Let $C_i, C_j$ be two clusters created by \cluster.
For any nodes $u \in C_i$ and $v \in C_j$ with $i \neq j$, we have
$\dist_{G'}(u,v) \ge 2$, which implies $\dist_G(u,v) \ge 2R+2$.
Let $S_i \in \mathcal{S}$ and $S_j \in \mathcal{S}$ be the clusters that are
fixed by \cref{line:fixC} in the iterations of $C_i$ and $C_j$, respectively.
We observe that $S_i$ (resp., $S_j$) only contains nodes at distance at most
$R-1$ from $C_i$ (resp., $C_j$).
Hence it follows that, for any $u' \in S_i$ and $v' \in S_j$, we have
$\dist_G(u',v') > 2R + 2 - 2(R-1) \ge  4$.

In addition, \cluster guarantees that the diameter of the clusters is at most
$g(n)$ in $G^{2R+1}$.
Therefore, for $u,v \in C_i$,
\[
  \dist_G(u,v) \le (2R+1) \dist_{G'}(u,v) \le (2R+1) g(n).
\]
When creating a fixed cluster in \cref{line:fixC} we increase the diameter by at
most $2j^* \le 2R-2$ hops.
As a result, the diameter of any fixed cluster is bounded above by
\[
  (2R + 1) g(n) + 2R - 2 = \myO{R \cdot g(n)} = \myO{g(n)/\eps}.
\]

Since the entire one-hop-neighborhood of every cluster is deleted, in the next
iterations all of the new clusters will also have at least one deleted node
between themselves and any other previously created cluster.
Also no nodes of previously fixed clusters are ever considered for another
cluster or deleted.
Hence the subsequent iterations do not interfere with the previous ones.
\end{proof}

\begin{lemma}\label{lem:clusterDeletion}\cref{alg:clustering} deletes at most an $\eps$ fraction of nodes during its
  execution.
  All nodes that are not deleted eventually join a cluster.
\end{lemma}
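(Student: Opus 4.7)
My plan is to analyze a single iteration of the outer for loop (line~\ref{line:clustering_outer_for}), bound the number of deletions relative to the size of the current working graph, and then combine this per-iteration estimate with a halving argument via a geometric sum. Throughout, let $G_t$ denote the working graph at the start of iteration $t$, and $n_t = \abs{V(G_t)}$.

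The key geometric observation I would establish first is that the $R$-neighborhoods taken in $G_t$ of the clusters returned by \cluster are pairwise disjoint. Since \cluster is invoked on $G_t^{2R+1}$, any two distinct clusters $C_i, C_j$ are non-adjacent in $G_t^{2R+1}$, so $\dist_{G_t}(C_i, C_j) \ge 2(2R+1) = 4R+2 > 2R$, which implies $\NN_R(C_i) \cap \NN_R(C_j) = \varnothing$ in $G_t$. For each individual cluster $C_i$, the $R$ shells $\NN_j(C_i) \setminus \NN_{j-1}(C_i)$ with $j \in [R]$ are pairwise disjoint subsets of $\NN_R(C_i)$, so by the pigeonhole principle the thinnest shell has size at most $\abs{\NN_R(C_i)}/R$. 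Summing over all clusters and using the cross-cluster disjointness, the total number of nodes deleted in iteration $t$ is at most $n_t/R = \eps n_t/4$.

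Next I would show that at least half of the nodes of $G_t$ get either clustered or deleted per iteration. By the $(1/2, g(n))$-guarantee of \cluster, the clusters $C_i$ jointly cover at least $n_t/2$ nodes, and each such node is absorbed into $\NN_{j^*_i - 1}(C_i)$ on line~\ref{line:fixC} and thereby added to $\mathcal{C}'$; these nodes are then excluded from the working graph. Hence $n_{t+1} \le n_t/2$, and by induction $n_t \le n/2^{t-1}$. Summing the per-iteration deletion bound over all $T = 2\log(1/\eps)$ iterations gives a total deletion count of at most $\sum_{t=1}^{T} \eps n / (4 \cdot 2^{t-1}) < \eps n/2$, proving the first claim. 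For the second claim, I would observe that once a node is added to $\NN_{j^*-1}(C)$ on line~\ref{line:fixC} it is placed in a cluster of $\mathcal{C}'$ and does not re-enter the working graph, so any node that is not marked for deletion must reach a cluster during one of the iterations; the residual $n_T \le n\eps^2$ nodes can be folded into the deletion budget and the combined total stays comfortably below $\eps n$.

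The main obstacle is the disjointness argument obtained via the power $G_t^{2R+1}$. Had \cluster been invoked on $G_t$ directly, two clusters could sit within distance $2$ of one another in $G_t$, their $R$-neighborhoods would overlap, and the clean per-cluster bound $\abs{\NN_R(C_i)}/R$ would no longer telescope to $n_t/R$ when summed across clusters. The exponent $2R+1$ is chosen precisely to enforce the separation $\dist_{G_t}(C_i, C_j) \ge 4R+2$ that makes this counting step tight, while simultaneously underwriting the cluster non-adjacency used in \cref{lem:clusterSeparation}.
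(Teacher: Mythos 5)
Your proof is correct and follows essentially the same route as the paper's: a per-cluster pigeonhole over the $R$ shells, disjointness of the deleted sets coming from the $4R+2$ separation in $G$, the halving argument with a geometric sum over the $2\log(1/\eps)$ iterations, and folding the residual $\eps^2 n$ unclustered nodes into the deletion budget. The only point to tighten is the final step: the paper first assumes $\eps \le 1/2$ without loss of generality (for $\eps > 1/2$ a single invocation of \cluster already suffices), which is exactly what makes $\eps n/2 + \eps^2 n \le \eps n$ hold, so you should state that assumption explicitly instead of claiming the total is ``comfortably below'' $\eps n$.
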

\begin{proof}
Without loss of generality, we assume $\eps \le 1/2$, as otherwise a single
execution of \cluster already gives the result.
We upper-bound the number of deleted nodes using an inductive argument.

Let $\mathcal{S}$ be the set of all clusters following a single execution of the
for loop on \cref{line:clustering_outer_for}.
In addition, for a cluster $C \in \mathcal{S}$, let $\del(C)$ denote the set of
nodes marked for deletion by $C$ (during the execution of the for loop on
\cref{line:clustering_inner_for}).
Observe that, for any two distinct clusters $C_1 \neq C_2$, we have that
$\del(C_1) \neq \del(C_2)$.
This is due to the fact that $\dist_G(C_1,C_2) \ge 4R+2$ and that
$\dist_G(v,C_i) \le j^\ast \le R-1$ for any $v \in \del(C_i)$.
Also, by an averaging argument, for any cluster $C \in \mathcal{S}$,
\[
  \abs*{\del(C)}
  \le \frac{1}{R} \bigcup_{1 \le j \le R} \abs*{\NN_j(C) \setminus \NN_{j-1}(C)}
  = \frac{\abs*{\NN_R(C) \setminus C}}{R}.
\]
This means we can upper-bound the total number of deleted nodes by
\[
	\sum_{C \in \mathcal{S}} \frac{\abs*{\NN_R(C) \setminus C}}{R}
    \leq \frac{n}{R}
    = \frac{\eps n}{4}.
\]

By \cref{thm:DND}, at least half of the remaining nodes in $G$ are clustered in
each iteration of the for loop on \cref{line:clustering_outer_for}.
Arguing as before, we get that the number of remaining nodes is halved with each
execution of the loop.
This means that, during the execution of the loop, the number of deleted nodes
is at most
\[
  \sum_{i=1}^{2\log(1/\eps)} \frac{\eps n}{4 \cdot 2^{i-1}}
  \le \frac{\eps n}{2}.
\]
In turn, due to the aforementioned progress guarantee, the $2\log(1/\eps)$
repetitions of the loop ensure at most a $2^{-2\log(1/\eps)} = \eps^2$ fraction
of nodes are left standing at the end and are then deleted in the final step of
\cref{alg:clustering}.
As a result, using that $\eps \le 1/2$, the total number of deleted nodes is at
most
\[
  \frac{\eps n}{2} + \eps^2 n \le \eps n. \qedhere
\]
\end{proof}

\begin{lemma}\label{lem:clusterRuntime}\cref{alg:clustering} terminates after $\myO{(f(n) + g(n)) \cdot \eps^{-1}
\log(1/\eps)}$ rounds.
\end{lemma}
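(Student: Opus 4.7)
The plan is to bound the cost of each iteration of the outer for loop on \cref{line:clustering_outer_for} and then multiply by the number of iterations, which is $2\log(1/\eps)$. With $R = 4/\eps$, the per-iteration cost splits into two pieces: (a) the invocation of \cluster on $G^{2R+1}$, and (b) the work done in the inner for loop on \cref{line:clustering_inner_for}, namely computing $j^\ast$, marking $\NN_{j^\ast}(C) \setminus \NN_{j^\ast-1}(C)$ for deletion, and inserting $\NN_{j^\ast-1}(C)$ into $\mathcal{C}'$. I will handle the two pieces separately and then assemble the total.

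For piece (a), I simulate each round of communication on $G^{2R+1}$ by $2R+1 = O(1/\eps)$ rounds of communication on $G$. By hypothesis, \cluster terminates within $f(n)$ rounds on its input graph, so running it on $G^{2R+1}$ costs $O(f(n) \cdot R) = O(f(n)/\eps)$ rounds on $G$. For piece (b), recall from the proof of \cref{lem:clusterSeparation} that each cluster $C$ returned by \cluster has diameter at most $g(n)$ in $G^{2R+1}$, hence at most $(2R+1)g(n) = O(g(n)/\eps)$ in $G$. The quantities $\abs{\NN_j(C) \setminus \NN_{j-1}(C)}$ for $j \in [R]$ can be computed, e.g., by electing a leader of $C$ and having every node within $R$ hops of $C$ forward its distance to that leader; this succeeds in a number of rounds proportional to the diameter of $C$ in $G$ plus $R$, that is, $O(g(n)/\eps + 1/\eps) = O(g(n)/\eps)$. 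Marking the boundary layer and committing $\NN_{j^\ast-1}(C)$ to $\mathcal{C}'$ is then a single broadcast of the chosen $j^\ast$ across the same $O(g(n)/\eps)$-radius region and costs no more asymptotically. Different clusters act in parallel without interference because they are at distance at least $2R+2$ apart in $G$ (again by the argument in \cref{lem:clusterSeparation}), so the $R$-hop regions used for these local computations are disjoint.

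Summing, a single iteration of the outer loop costs $O\bigl((f(n) + g(n))/\eps\bigr)$ rounds. Multiplying by the $2\log(1/\eps)$ iterations yields the claimed
\[
  O\!\left(\frac{(f(n)+g(n)) \log(1/\eps)}{\eps}\right)
\]
total round complexity. Shrinking the working graph via $G \gets G[U]$ between iterations only removes nodes and edges, so all bounds derived in one iteration remain valid in the next; in particular, the cluster diameter bound and the disjointness of the $R$-hop regions carry over.

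The main obstacle is just bookkeeping the simulation overhead correctly: one must use $R = \Theta(1/\eps)$ on two different fronts (both to simulate $G^{2R+1}$ and to reach across clusters of $G$-diameter $O(g(n)/\eps)$), without double-counting and without conflating the ``diameter in $G^{2R+1}$'' with the ``diameter in $G$.'' Once these accounting points are pinned down, the rest is a direct multiplication.
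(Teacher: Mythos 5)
Your proposal is correct and follows essentially the same argument as the paper: simulate \cluster on $G^{2R+1}$ at cost $\OO(f(n)/\eps)$, bound the $G$-diameter of clusters by $\OO(g(n)/\eps)$, charge the leader-based computation and broadcast of $j^\ast$ at $\OO(g(n)/\eps)$ rounds, and multiply by the $2\log(1/\eps)$ outer iterations. The extra remark about disjointness of the $R$-hop regions is a harmless addition that the paper leaves implicit via \cref{lem:clusterSeparation}.
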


\begin{proof}
We analyze the runtime of a single iteration of the for loop on
\cref{line:clustering_outer_for}.
When ran on $G^{2R+1}$, \cluster takes $\myO{f(n)}$ time, so for each round of
\cluster we spend $\myO{R} = \myO{1/\eps}$ rounds to simulate its execution on
$G$.
Hence we need $\myO{f(n)/\eps}$ rounds in total to run \cluster on $G$.
\cluster then outputs clusters of diameter $\myO{g(n)}$ on $G^{2R+1}$, which
correspond to clusters of diameter $\myO{R \cdot g(n)} = \myO{g(n) / \eps}$ on
$G$.
Next we collect the entire $R$-neighborhood of a cluster $C$ in some leader
node, compute $j^\ast$, and then broadcast $j^\ast$ to all nodes inside of
$\NN_{j^\ast}(C)$.
This all requires $\myO{g(n) / \eps + R} = \myO{g(n) / \eps}$ rounds.
Hence a single iteration of the for loop on \cref{line:clustering_outer_for}
costs $\myO{(f(n) + g(n))/\eps}$ rounds in total.
The final deletion procedure does not cost any additional rounds since each node
knows at the end whether it has joined a cluster or not.
Since we repeat the loop $\myO{\log(1/\eps)}$ times, the claim follows.
\end{proof}

\begin{lemma}\label{lem:clusterSuccProb}Let \cluster have success probability $1-1/\poly(n)$ in \randlcl.
	Then \cref{alg:clustering} also succeeds with probability $1-1/\poly(n)$.
\end{lemma}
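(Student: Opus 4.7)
The plan is to use a standard union bound. The only source of randomness in \cref{alg:clustering} is the (at most) $2\log(1/\eps)$ invocations of \cluster inside the outer for loop on \cref{line:clustering_outer_for}; everything else — selecting $j^\ast$, marking nodes for deletion, fixing clusters, and updating $G$ — is deterministic once the outputs of those \cluster calls are fixed. So I would first observe that $\eps$-\cluster succeeds whenever all of its calls to \cluster succeed, and the remaining bookkeeping in \cref{lem:clusterSeparation,lem:clusterDeletion,lem:clusterRuntime} already establishes the required clustering properties in that event.

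Next I would bound the failure probability via a union bound. Each call to \cluster is made on an induced subgraph $G[U]$ of the original graph, which has at most $n$ vertices. By hypothesis, each such call fails with probability at most $1/n^c$ for a constant $c$ we are free to choose as large as desired (simply by invoking \cluster under a stronger polynomial guarantee, which it supports by assumption). A union bound over the $\le 2\log(1/\eps)$ invocations yields a total failure probability at most $2\log(1/\eps)\cdot n^{-c}$.

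Finally I would convert this into the claimed $1-1/\poly(n)$ bound. As long as $\log(1/\eps)$ is at most polynomial in $n$ — which is precisely the regime in which the running-time guarantee $\myO{(f(n)+g(n))\log(1/\eps)/\eps}$ from \cref{thm:DSC} is itself polynomial, and in particular the only regime of interest — choosing $c$ large enough makes $2\log(1/\eps)\cdot n^{-c} \le n^{-c'}$ for any desired constant $c'$. The only mildly delicate point is therefore just aligning the polynomial in \cluster's success guarantee with the polynomial we want for $\eps$-\cluster; there is no genuine obstacle, since the sub-invocations are on graphs of size at most $n$ and all other steps are deterministic. The deterministic case of the theorem is vacuous.
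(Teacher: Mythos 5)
Your proof is essentially the paper's: the only randomness lies in the at most $2\log(1/\eps)$ invocations of \cluster, the algorithm is correct whenever all of them succeed, and a union bound gives failure probability at most $2\log(1/\eps)\cdot n^{-c}$. The one blemish is the claim that $c$ can be taken \enquote{as large as desired} — the hypothesis only guarantees success $1-1/n^c$ for \emph{some} fixed $c$ — but this is not needed: the paper keeps the fixed $c$, reduces to the case $\eps > 1/n$ (dismissing $\eps \le 1/n$ as trivial), and then $2\log(1/\eps)/n^c < 2\log n/n^c \le 1/n^{\tilde c}$ already yields the claimed $1-1/\poly(n)$ bound.
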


\begin{proof}
	Let $c > 0$ be such that \cluster succeeds with probability at least
	$1-1/n^c$.
	In addition, let us assume $\eps > 1/n$ as otherwise the claim is trivial.
	The observation to make is that, if all executions of \cluster by
	\cref{alg:clustering} are correct, then the result of \cref{alg:clustering} is
	also correct.
	Since there are $2\log(1/\eps)$ executions of \cluster in total, this means
	that, for any $0 < \tilde{c} < c$ (and, in particular, for any fixed choice of
	such a $\tilde{c}$) and large enough $n$, the probability that
	\cref{alg:clustering} succeeds is at least
	\[
		\left(1 - \frac{1}{n^c}\right)^{2 \log(1/\eps)}
		\ge 1 - \frac{2\log(1/\eps)}{n^c}
		> 1 - \frac{2\log n}{n^c}
		> 1 - \frac{1}{n^{\tilde{c}}}.
		\qedhere
	\]
\end{proof}

This concludes the analysis of \cref{alg:clustering}, from which \cref{thm:DSC}
follows.

\subsubsection{Fast network decomposition from fast clustering}
\label{sec:fast-decomposition}

Finally we show how a clustering algorithm as in \cref{thm:DSC} implies a
network decomposition algorithm, thereby giving a proof of \cref{thm:NetDec}.

\begin{lemma}\label{lem:NetDec}
	Let an algorithm $\eps$-\cluster as in \cref{thm:DSC} be given where $\eps =
	(g(n)/n)^{1/\alpha}$.
	Given any $\alpha \in \nat_+$ and a graph $G = (V,E)$, \cref{alg:NetDec}
	computes an $(\alpha, \OO(n^{1/\alpha} g(n)))$-network
	decomposition of $G$ in
	\[
	  d(n) = \myO{\left(\frac{n}{g(n)}\right)^{1/\alpha}
			\left( f(n) + g(n) \right) \log\frac{n}{g(n)}}
	\]
	rounds.
	\Cref{alg:NetDec} works in \detlcl or \randlcl, depending on which model
	\cluster itself is based on.  
	In addition, if $\eps$-\cluster is randomized and succeeds with probability
	$1-1/\poly(n)$, then \cref{alg:NetDec} also succeeds with probability
	$1-1/\poly(n)$.
\end{lemma}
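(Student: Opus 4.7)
The plan is to realize \cref{alg:NetDec} as an iterated invocation of $\eps$-\cluster and to verify the four network-decomposition properties. Concretely, set $\eps = (g(n)/n)^{1/\alpha}$, initialize $V_0 = V(G)$, and for $i = 1, \dots, \alpha - 1$ run $\eps$-\cluster on the induced subgraph $G[V_{i-1}]$, assigning cluster color $i$ to every new cluster and letting $V_i$ be the set of nodes not yet clustered. For $i = \alpha$, every remaining node learns its whole connected component of $G[V_{\alpha - 1}]$ by BFS, elects a leader, and joins a component-cluster of color $\alpha$.

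The coloring is a valid $\alpha$-coloring of the supergraph: clusters sharing a color $i < \alpha$ are pairwise non-adjacent in $G[V_{i-1}]$ by the guarantee of $\eps$-\cluster, and therefore also in $G$ since passing to an induced subgraph cannot create new edges; color-$\alpha$ clusters are connected components of an induced subgraph and so are pairwise non-adjacent by construction. The weak-diameter bound follows from monotonicity of distances under vertex deletion: iteration $i < \alpha$ outputs clusters of diameter $O(g(n)/\eps) = O(n^{1/\alpha} g(n)^{1 - 1/\alpha})$ in $G[V_{i-1}]$ by \cref{thm:DSC}, hence at most that in $G$. For color $\alpha$ we use that $|V_i| \le \eps |V_{i-1}|$ implies $|V_{\alpha - 1}| \le \eps^{\alpha - 1} n = n^{1/\alpha} g(n)^{(\alpha - 1)/\alpha}$, so each connected component of $G[V_{\alpha-1}]$ has at most that many vertices and therefore diameter $O(n^{1/\alpha} g(n))$.

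For the running time, each of the $\alpha - 1$ calls to $\eps$-\cluster costs $O((f(n) + g(n)) \log(1/\eps)/\eps) = O((f(n) + g(n))(n/g(n))^{1/\alpha} \log(n/g(n))/\alpha)$ by \cref{thm:DSC}; the factor of $\alpha$ cancels upon summing over all calls, giving exactly the stated $d(n)$, and the final BFS costs $O(n^{1/\alpha} g(n)^{(\alpha-1)/\alpha})$, which is dominated by $d(n)$ because $g(n) \le (f(n)+g(n))\log(n/g(n))$. In the randomized setting, a union bound over the $\alpha - 1$ calls to $\eps$-\cluster, each succeeding with probability $1 - 1/\poly(n)$, yields overall success probability $1 - 1/\poly(n)$.

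The main obstacle I expect is the color-$\alpha$ step: after $\alpha - 1$ invocations of $\eps$-\cluster we are out of colors and cannot invoke it once more to obtain the usual separation guarantee. The unlocking observation is that $\eps$ was chosen precisely so that $\eps^{\alpha - 1} n \le n^{1/\alpha} g(n)$, meaning the total number of remaining vertices is already tiny; each connected component of $G[V_{\alpha - 1}]$ therefore satisfies the diameter bound trivially from its size, with no further separation needed. A minor subtlety worth flagging is that both non-adjacency and weak diameter must be with respect to $G$, not $G[V_{i-1}]$; both properties transfer cleanly because passing to an induced subgraph never adds edges and never shortens distances.
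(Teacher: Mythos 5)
Your proposal is correct and follows essentially the same route as the paper: run $\eps$-\cluster with $\eps=(g(n)/n)^{1/\alpha}$ for $\alpha-1$ rounds of colors, then observe the residual graph has only $\eps^{\alpha-1}n = n^{1/\alpha}g(n)^{(\alpha-1)/\alpha}$ nodes so its connected components trivially satisfy the diameter bound, with the same runtime accounting and union bound for the randomized case. One small wording quibble: the transfer of non-adjacency from $G[V_{i-1}]$ to $G$ rests on the induced subgraph \emph{retaining} every $G$-edge between vertices of $V_{i-1}$, not on it "not creating new edges," though the conclusion is of course unaffected.
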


The strategy followed by \cref{alg:NetDec} is very much straightforward:
First apply the clustering algorithm $\alpha-1$ times.
The remaining graph contains then at most $\eps^{\alpha-1} n = \myO{g(n) / \eps}
= \myO{n^{1/\alpha} g(n)}$ many nodes.
At this point we can just put all remaining connected components into their own
clusters, which will trivially have diameter at most $\myO{g(n) / \eps}$.

\begin{algorithm}
\caption{$(\alpha, \OO(n^{1/\alpha}))$-network decomposition from clustering}
\label{alg:NetDec}
\begin{algorithmic}[1]
\Require $G=(V,E)$,
$\alpha \in \nat_+$ 
\Require Clustering algorithm $\eps$-\cluster from \cref{thm:DSC}
\State $U \gets V$
\For{$1 \leq i \leq \alpha-1$}
  \State Run $\eps$-\cluster on $G[U]$, producing a clustering $\mathcal{C}_i$
	\label{line:invokeCluster}
	\State Color every cluster $C \in \mathcal{C}_i$ with the cluster color
	$\mu(C) = i$
  \State $U \gets U \setminus \bigcup_{C \in \mathcal{C}_i} C$
	\label{line:redefineU}
\EndFor
\State Form a clustering $\mathcal{C}_\alpha$ of $G[U]$ by having each connected
component form its own cluster
\label{line:partitionRest}
\State Color every cluster $C \in \mathcal{C}_\alpha$ with the cluster color
$\mu(C) = \alpha$
\State Each node $v \in C$ outputs its cluster color $\mu(C)$
\label{line:colorRest}
\end{algorithmic}
\end{algorithm}

\begin{proof}
	Clearly \cref{alg:NetDec} produces clusters with the correct diameter:
	$\eps$-\cluster produces clusters with diameter $\myO{g(n)/\eps} =
	\myO{n^{1/\alpha} g(n)}$ and there are only $\eps^{\alpha-1}n =
	\myO{g(n)/\eps}$ nodes to be clustered in \cref{line:partitionRest}.
	As for $\mu$ being a proper cluster coloring, notice that $\eps$-\cluster
	already guarantees the clusters formed in clustering $\mathcal{C}_i$ are
	non-adjacent; this is also guaranteed in the clustering created in
	\cref{line:partitionRest}.
	Regarding the round complexity, we have $\alpha-1$ many invocations of
	$\eps$-\cluster and then at most $\eps^{\alpha-1}n$ many nodes to cluster in
	\cref{line:partitionRest}.
	Hence using that the running time of each invocation of $\eps$-cluster is
	$\myO{(f(n)+g(n)) \cdot \eps^{-1} \log(1/\eps)}$ (by \cref{thm:DSC}), we can
	upper-bound the round complexity by
	\[
		\myO{\frac{\alpha (f(n)+g(n)) \log(1/\eps)}{\eps} + \eps^{\alpha-1}n}
		= \myO{\left(\frac{n}{g(n)}\right)^{1/\alpha}
				\left( f(n) + g(n) \right) \log\frac{n}{g(n)}}. \qedhere
	\]
\end{proof}

Together, \cref{thm:DND,thm:DSC} give the $\eps$-\cluster for \cref{lem:NetDec}
and we obtain a network decomposition algorithm as in \cref{thm:NetDec}.
As already discussed above, combining this with the coloring algorithm of
\cref{thm:alg-coloring}, we obtain our main result \cref{thm:upperbound_full}.
 \section{New lower bounds in the non-signaling model}\label{sec:lb-technique}

\subsection{Framework}

In this section we define the framework in which our technique is developed. 
We start with the notion of labeling problem. 

\begin{definition}[Labeling problem]\label{def:labeling-problem}
    Let \(\inLabels\) and \(\outLabels\) two sets of input and output labels, respectively.
    A \emph{labeling problem} \(\problem\) is a mapping \((G,\inLabel) \mapsto \{\indOutLabel{i}\}_{i \in I}\), with \(I\) being a discrete set of indices, that assigns to every graph \(G\) with any input labeling \(\inLabel: V(G) \to \inLabels\) a set of permissible output vectors \(\indOutLabel{i} : V(G) \to \outLabels\) that might depend on \((G, \inLabel)\).
    The mapping must be closed under graph isomorphism, i.e., if \(\varphi: V(G) \to V(G')\) is an isomorphism between \(G\) and \(G'\), and \(\indOutLabel{i} \in \problem((G', \inLabel)) \), then \(\indOutLabel{i} \circ \varphi \in \problem((G, \inLabel \circ \varphi))  \).
\end{definition}

A labeling problem can be thought as defined for \emph{any} input graph of \emph{any} amount of nodes.
If the set of permissible output vectors is empty for some input \((G,\inLabel)\), we say that the problem is not solvable on the input \((G,\inLabel)\):
accordingly, the problem is solvable on the input \((G,\inLabel)\) if \(\problem(G,\inLabel) \neq \emptyset\).

One observation on the generality of definition of labeling problem follows: 
one can actually consider problems that require to output labels on edges.
This variation of \cref{def:labeling-problem} does not affect in any way the applicability and the generality of the result we present in \cref{sec:lb-technique}, namely, our lower bound technique.

We actually focus on labeling problems where, for any input graph, an output vector \(\outLabel\) is permissible if and only if the restrictions of the problem on any local neighborhoods can be solved and there exist compatible local permissible output vectors whose combination provides \(\outLabel\).
This concept is grasped by the notion of locally verifiable labeling (LVL) problems, the generalization of locally checkable labeling (LCL) problems to unbounded degree graphs, first introduced by \textcite{naor1995}. 
For any function \(f : A \to B\) and any subset \(A' \subseteq A\), let us denote the restriction of \(f\) to \(A'\) by \(f \restriction_{A'}\). 
Furthermore, we define a centered graph to be a pair \((H,v_H)\) where \(H\) is a graph and \(v_H \in V(H)\) is a vertex of \(H\) that we name the \emph{center} of \(H\).
The  \emph{radius} of a centered graph is the maximum distance from \(v_H\) to any other node in \(H\).

\begin{definition}[Locally verifiable labeling problem]\label{def:lvl-problem}
    Let \(\lvlCR \in \nat\).
    Let  \(\inLabels\) and \(\outLabels\) two sets of input and output labels, respectively, and \(\problem\) a labeling problem.
    \(\problem\) is \emph{locally verifiable} with checking radius \(\lvlCR\) if there exists a (possibly infinite) family \( \SS = \{((H,v_H), \inLblRes, \outLblRes)_i\}_{i \in I}\) of tuples, where \((H,v_H)\) is a centered graph of radius at most \(\lvlCR\), \(\inLblRes : V(H) \to \inLabels\) is an input labeling for \(H\), \(\outLblRes : V(H) \to \outLabels\) is an output labeling for \(H\) (which can depend on \(\inLblRes\)) with the following property
    \begin{itemize}
        \item for any input \((G, \inLabel)\) to \(\problem\), an output vector \(\outLabel : V(G) \to \outLabels\) is permissible (i.e., \(\outLabel \in \problem((G,\inLabel))\)) if and only if, for each node \(v \in V(G)\), the tuple \(((G[\NN_\lvlCR(v)]), \inLabel \restriction_{\NN_{\lvlCR}(v)}, \outLabel \restriction_{\NN_{\lvlCR}(v)} )\) belongs to \(\SS\).
    \end{itemize}
\end{definition}

We remark that the notion of an (LVL) problem is a graph problem, and does not depend on the specific model of computation we consider (hence, the problem cannot depend on, e.g., node identifiers).
Next definition introduces the concept of outcome of an algorithm.

\begin{definition}[Outcome]
    Let \(\inLabels\) and \(\outLabels\) be two sets of input and output labels, respectively.
    An \emph{outcome} \(\outcome\) is a mapping \((G,\inptData) \mapsto \{(\indOutLabel{i}, \outPr_i)\}_{i \in I}\), with \(I\) being a discrete set of indices, assigning to every input graph \(G\) with any input data \(\inptData = (\id:V(G) \to [\abs{V}^c], \inLabel:V(G) \to \inLabels)\), a discrete probability distribution \(\{\outPr_i\}_{i \in I}\) over (not necessarily permissible) output vectors \(\indOutLabel{i} : V(G) \to \outLabels \) such that:
    \begin{enumerate}
        \item for all \(i \in I\), \(\outPr_i > 0\);
        \item \(\sum_{i \in I} \outPr_i = 1 \);
        \item \(\outPr_i\) represents the probability of obtaining \(\indOutLabel{i}\) as the output vector of the distributed system.
    \end{enumerate}
\end{definition}

Let \(T \ge 0\) be any integer. 
We say that an outcome \(\outcome\) on some graph family \(\FF\) has locality \(T\) if there exists a distributed algorithm in the \local model which, for any input \((G, \inptData)\)  where \(G \in \FF\), produces the same probability distribution over output vectors as \(\outcome\) after \(T\) rounds of computation.
Notice that \(T\) can be a function of the size of the input graph.

An algorithm can be thought of producing an output distribution on every input:
whenever the computations of a node in a given round are defined, the algorithm proceeds normally; 
if at some round some computation is undefined for a node, the node outputs some ``garbage label'', say \(\perp\): we remark that we can assume \(\outLabels\) always contains such a garbage label without loss of generalization.
Hence, an outcome can be always thought of as being defined on the family of all graphs and all valid inputs:
for this reason, we will omit specifying the graph family on which the outcome is defined.

We say that an outcome \(\outcome\) over some graph family \(\FF\) \emph{solves} problem \(\problem\) over \(\FF\) \emph{with probability \(p\)} if, 
for every \(G \in \FF\) and any input data \(\inptData = (\id, \inLabel)\), it holds that
\[
    \sum_{\substack{(\indOutLabel{i}, \outPr_i) \in \outcome((G,\inptData)) \ : \\ \indOutLabel{i} \in \problem((G,\inLabel))}} \outPr_i \ge p.  
\]
When \(p = 1\), we will just say that \(\outcome\) \emph{solves} problem \(\problem\) over the graph family \(\FF\).

We now define the complexity class \(\LL[T]\).
A problem \(\problem\) over some graph family \(\FF\) belongs to the class \(\LL[T]\) (respectively, \(\LL[T, p]\), for some \(p \in [0,1]\)) if there exists a distributed algorithm in the \local model which produces an outcome $\OO$ with locality at most \(T\)  which solves problem \(\problem\) on \(\FF\) (respectively, solves problem \(\problem\) with probability \(p\)).
In such case we write \((\problem, \FF) \in \LL[T]\) (respectively, \((\problem, \FF) \in \LL[T,p]\)).

The next computational model tries to capture the fundamental properties of any
\emph{physical} computational model (in which one can run either deterministic,
random, or quantum algorithms) that respects causality. 
The defining property of such a model is that, for any two (labeled) graphs
$(G_1,\inptData_{1})$ and $(G_2,\inptData_{2})$ that share some identical subgraph $(H,y)$, every
node $u$ in $H$ must exhibit identical behavior in $G_1$ and $G_2$ as long as
its \emph{local view}, that is, the set of nodes up to distance $T$ away from
$u$ together with input data and port numbering, is fully contained in $H$.
As the port numbering can be computed with one round of communication through a fixed procedure (e.g., assigning port numbers \(1,2, \dots, \deg(v)\) based on neighbor identifiers in ascending order) and we care about asymptotic bounds, we will omit port numbering from the definition of local view.

The model we consider has been introduced by \textcite{arfaoui2014} and is equivalent to the \philcl model by \textcite{gavoille2009}; 
however, as in \cite{arfaoui2014}, we explicitly require the outcome to
be defined for every possible graph:
in fact, as argued for distributed algorithms before, every physical procedure producing outcomes for graphs \emph{should} produce some outcome on any input (possibly, by using some garbage label as before).\footnote{We remark that in \cite{gavoille2009} it is ambiguous whether the outcome is defined over any possible input graph. Anyway, such ambiguity does not affect the validity of the proofs.}

In order to proceed, we first define the \emph{non-signaling} property of an outcome.
Let \(T \ge 0\) be an integer, and \(I\) a set of indices.
For any set of nodes \(V\), subset \(S \subseteq V\), and for any input \((G = (V,E), \inptData)\), we define its \emph{\(T\)-local view} as the set 
\[
    \VV_T(G, \inptData, S) = \left\{(u, \inptData(u)) \ \st \ \exists \ u \in V, \ v \in S \text{ such that } \dist_G(u,v) \le T \right\},
\]
where \(\dist_G(u,v)\) is the distance in \(G\).
Furthermore, for any subset of nodes \(S \subseteq V\) and any output distribution \(\{(\indOutLabel{i}, \outPr_i)\}_{i \in I}\), we define the \emph{marginal distribution} of \(\{(\indOutLabel{i}, \outPr_i)\}_{i \in I}\) on set \(S\) as the unique output distribution  \(\{(\resIndOutLabel{i}, \bar{\outPr}_i)\}_{i \in I}\) acting on \(S\) which satisfies the condition 
\[
    \bar{\outPr}_j = \sum_{i \ : \ \resIndOutLabel{j} = \indOutLabel{i}[S]} \outPr_i,
\]
where \(\indOutLabel{i}[S]\) is the restriction of output \(\indOutLabel{i}\) to the processes in \(S\).

\begin{definition}[Non-signaling outcome]\label{def:ns-outcome}
    An outcome \(\outcome: (G, \inptData) \mapsto \{(\indOutLabel{i}, \outPr_i )\}_{i \in I}\) is \emph{non-signaling} beyond distance \(T\) if for all set of nodes \(V\) and all subsets \(S \subseteq V\), for any pair of inputs \((G_1 = (V,E_1), \inptData_{1})\), \((G_2 = (V,E_2), \inptData_{2})\) such that \(\VV_T (G_1, \inptData_{1}, S) = \VV_T (G_2, \inptData_{2}, S)\), the output distributions corresponding to these inputs have identical marginal distributions on the set \(S\).  
    Notice that \(T\) can depend on the input labeled graph.
\end{definition}
\cref{def:ns-outcome} is also the more general definition for the locality of an outcome: an outcome \(\outcome\) has locality \(T\) if it is non-signaling beyond distance \(T\).

\paragraph{The \nslcl model.} 

The \nslclFull (\nslcl) model is a computational model that produces non-signaling outcomes. 
Let \(p \in [0,1]\).
The complexity class \(\NN\SS[T,p]\) is defined by all pairs \((\problem,\FF)\) where \(\problem\) is a problem and \(\FF\) is a graph family such that there exists an outcome \(\outcome\) that is non-signaling beyond distance \(T\) which solves \(\problem\) over \(\FF\) with probability at least \(p\).
If \(p = 1\), we just say that \((\problem,\FF) \in \NN\SS[T]\). 

As every (deterministic or randomized) algorithm running in time at most \(T\) in the \local model produces an outcome which has locality \(T\), we can provide lower bounds for the \local model by proving them in the \nslcl model.
For the sake of readability, we assume that every outcome \(\outcome \) that has locality \(T\) can be produced by a hypothetical non-signaling \local algorithm \(\AA\) with running time \(T\). 
This is just an artifact of the text and does not affect in any way the validity of our proofs.

We now present a lower bound technique that works for LVL problems in \nslcl.

\subsection{Lower bound technique}\label{sec:lb:indistinguishability}

We first introduce the notion of subgraph cover.

\begin{definition}\label{def:subgraph-cover}
    A \emph{subgraph cover} of a graph \(G\) is a family of subgraphs \(\{G_i\}_{i \in I}\) such that \(G_i \subseteq G\) for all \(i \in I\) and \(\cup_{i \in I}G_i = G\).
\end{definition}

\subsubsection{Indistinguishability argument in the classical \texorpdfstring{\local}{LOCAL} model}\label{sec:lb-technique:lcl}

Our technique is an extension of the \emph{indistinguishability argument} already exploited to prove lower bounds in the \local model \cite{kuhn2004,gavoille2009coloring,derbel2008}.
Let \(\problem\) be an LVL problem over some graph family \(\FF\).
The indistinguishability argument basically says that the output of a node \(v\) running a \(T\)-round algorithm \(\AA\) cannot distinguish between inputs that differ only outside its \(T\)-view.
Hence, \(\AA\) cannot solve the problem.
However, in the aforementioned works, all the different inputs considered to ``confuse'' the nodes were solvable instances.
Another approach, the one that we consider in this work, was introduced by \textcite{linial1992} to prove a lower bound for \(c\)-coloring trees: it uses graphs outside the input graph family (on which the problem is \emph{impossible} to be solved) which is locally everywhere a solvable instance.
\citeauthor{linial1992} used a high-girth graph \(G\) (which locally looks like a tree but lies \emph{outside} the input graph family) that has chromatic number bigger than \(c\).
This immediately yields a lower bound that is some constant fraction of \(\girth{G}\).
In this sense, the approach is purely existential graph-theoretic at heart.

We generalize this latter method all the way up to the \nslcl model, and we also present a technique to boost the failing probability of any outcome.
As our argument presents some technicalities, we proceed step by step and present it first for the \detlcl model, then for \randlcl, and finally for \nslcl.

The argument for the \detlcl model goes roughly as follows.

\paragraph{Indistinguishability argument: \detlcl model.}
Suppose we have an LVL problem \(\problem\), with checking radius \(t > 0\), that is solvable over some graph family \(\FF\), and assume we have a \detlcl algorithm \(\AA\) that solves \(\problem\) over \(\FF\) and has running time \(T(n) \ge t > 0\), \(n\) being the size of the input graph. We remark that \(T\) might also depend on other parameters of the input graph, such as the maximum degree. 
However, we omit such dependencies for the sake of readability.
Let us fix the size \(n\) of the input graph.
Suppose there exists a graph \(G_n \notin \FF\) such that \(\problem\) is not solvable over \(G_n\), and let us run \(\AA\) for \(T(n)\) rounds over \(G_n\) (\(G_n\) does not have necessarily size \(n\)---we force outputs after time \(T(n)\) if the protocol did not produce any or if, at any time, at any node the computational procedure is not well-defined).
As \(\problem\) is not solvable over \(G_n\), we know that there is a node \(v \in V(G_n)\) such that \(G_n[\NN_t(v)]\) contains some non-admissible output for \(\problem\).
Let us denominate \(G_n[\NN_t(v)]\) the \emph{bad neighbor}.
Assume now that there exists a graph \(H_n \in \FF\) of size \(n\) which contains a subgraph \(\tilde{H}_n\) such that \(H_n[\NN_{T(n)}(\tilde{H}_n)]\) is isomorphic to \(G_n[\NN_{T(n)} (\NN_t(v))]\), and assume \(H_n[\NN_{T(n)}(\tilde{H}_n)]\) and \(G_n[\NN_{T(n)} (\NN_t(v))]\) are given exactly the same identifiers and input labels.
As \(\tilde{H}_n\) and \(G_n[\NN_t(v)]\) look identical, \(\AA\) must have produced the same non-admissible output over \(\tilde{H}_n\) in time \(T(n)\), which is a contradiction.

This argument can be extended to the \randlcl model with some care:
while in the \detlcl model the local failure is deterministic and takes necessarily place in all graphs that locally look like the bad neighbor, this is not the case for \randlcl. 
We now show how to deal with random outputs.

\paragraph{Indistinguishability argument: \randlcl model.}
We keep the same hypothesis (except that now \(\AA\) is a \randlcl algorithm) and, in addition, we ask that the graph \(G_n \notin \FF\), over which \(\problem\) is not solvable, admits a subgraph cover \(\{G_n^{(i)}\}_{I \in I}\) with the following properties:
\begin{enumerate}[(1)]
    \item For each \(v \in G_n\), there exists \(i_v \in I\) such that \(\NN_t (v) \in G_{n}^{(i_v)}\);
    \item For each \(i \in I\), there exists a graph \(H_n^{(i)} \in \FF\) of size \(n\) which contains a subgraph \(\tilde{H}_n^{(i)}\) such that \(\tilde{H}_n^{(i)}\) is isomorphic to \(G_n^{(i)}\), and \(H_n^{(i)} [ \NN_{T(n)} ( \tilde{H}_n^{(i)} )]\) is isomorphic to \(G_n[\NN_{T(n)}(G_n^{(i)})]\).
\end{enumerate}
Again, we run \(\AA\) on \(G_n\), and we know that \(G_n\) will contain at least one \(t\)-neighborhood that has a non-admissible output vector.
As \(\{G_n^{(i)}\}_{I \in I}\) is a subgraph cover, there exists \(i^\star \in I\) such that the probability of \(\AA\) failing over \(G_n^{(i^\star)}\) is at least \(1/\abs{I}\).
We denominate \(G_n^{(i^\star)}\) the \emph{bad subgraph}.
Hence, assuming  \(H_n^{(i)} [ \NN_{T(n)} ( \tilde{H}_n^{(i)} )]\) and \(G_n[\NN_{T(n)}(G_n^{(i)})]\) are given the same identifiers and input labels, \(\AA\) fails on \(H_n^{(i^\star)}\) with probability at least \(1/\abs{I}\).
We observe that property (2) is sufficient but not necessary to the technique: it is sufficient to ensure the existence of the graph \(H_n ^{(i)}\) for \(i = i^{\star}\).
Nevertheless, in many practical scenarios actually determining \(i^\star\) is hard, while it is easier to ensure (2) for many graph families.

This result is useful when \(\abs{I}\) is not too large: 
however, in many cases, it is not possible to find subgraph covers with few elements; 
furthermore, one may want a failure probability that is higher than a constant value.
Luckily, other properties of the \randlcl model come to our aid and sometimes allow to boost the failing probability.
Let \(N \in \natPos\).
The idea is to replicate \(N\) times the \(T(n)\)-neighborhood of the bad subgraph (making sure we still obtain a graph that belongs to the graph family under consideration) and exploit the independence of the outcome generated by \(\AA\) over subsets of the nodes that are ``far enough''.
More formally, assume that \(\NN_{T(n)}(G_n^{(i)})\) has size at most \(\floor{n/N}\) for each \(i \in I\). 
Let us replace property (2) as follows: 
\begin{itemize}
    \item[(2)] For each choice of indices \(\xx_N = (x_1, \dots, x_N) \in [\abs{I}]^N\), there exists a graph \(H_{\xx_N} \in \FF\) of size \(n\) which contains a subgraph \(\tilde{H}_{\xx_N}\) such that \(\tilde{H}_{\xx_N}\) is isomorphic to the disjoint union \(\bigsqcup_{j = 1}^N G_n^{(x_j)}\), and \(H_{\xx_N}[\NN_{T(n)}(\tilde{H}_{\xx_N})]\) is isomorphic to the disjoint union \(\bigsqcup_{j = 1}^N G_n[\NN_{T(n)}(G_n^{(x_j)})]\).
\end{itemize}
By independence, the probability that \(\AA\) solves \(\problem\) over \(H_{\xx_N}\) is now 
\(
    (1 - {1}/{\abs{I}})^N
\) (assuming again the same identifiers and input labels are given to \(H_{\xx_N}[\NN_{T(n)}(\tilde{H}_{\xx_N})]\) and \(\bigsqcup_{j = 1}^N G_n[\NN_{T(n)}(G_n^{(x_j)})]\)).

In both arguments for the \detlcl and the \randlcl model, we denominate the graph \(G_n\) as the \emph{cheating graph}, because it allows us to ``trick'' the distributed algorithm, since nodes cannot distinguish between different inputs if they have the same local view.

\subsubsection{\texorpdfstring{Indistinguishability argument in the \nslcl model}{Indistinguishability argument in the NS-LOCAL model}}

The argument outlined in \cref{sec:lb-technique:lcl} cannot be directly applied to the \nslcl model for two reasons:
\begin{enumerate}
    \item Independence is not guaranteed between far away subsets of nodes (e.g., there could be some shared resources).
    \item We cannot consider an outcome over two graphs \(G, H\) of different sizes and require it to have the same output distribution over two subgraphs \(G' \subseteq G, H' \subseteq H\) that have the same local neighborhood due to the no-cloning principle \cite{wootters1982,dariano2017,masanes2006} (in fact, the properties of a non-signaling outcome hold only for graphs of the same size; see \cref{def:ns-outcome}).
\end{enumerate}

However, we overcome these issues and show that:
\begin{enumerate}
    \item The dependencies actually go ``in the right direction'', i.e., the bound on the failing probability does not decrease w.r.t.\ the bound we showed in the \randlcl model;
    \item We can restrict ourselves to graphs of same sizes in many cases, as we show in the applications of our lower bound technique (\cref{sec:coloring:lb,sec:coloring:grids,sec:coloring:trees}).
\end{enumerate}

Some technicalities are required to obtain 1.
This section is devoted to the formal proof of our lower bound technique in the \nslcl model.
The technique applies to LVL problems restricted to graph families meeting some specific properties.

\begin{definition}[Cheating graph]\label{def:cheating-graph}
    Let \(\problem\) be any LVL problem with checking radius \(\lvlCR\in \nat\) that is solvable over some graph family \(\FF\).
    Suppose that, for some integer \(n \in \nat\), there exists a triple \((k,N,T) \in \nat^3\) (that can depend on \(n\) and, possibly, other parameters defining the graph family \(\FF\)), with \(T \ge \lvlCR\), and a graph \(G_n\) of size at most \(\floor{n / N}\), such that the following properties are met:
    \begin{enumerate}[(i)]
        \item \(\problem\) is not solvable on \(G_n\);
        \item \(G_n\) has a subgraph cover \(\{G_n^{(1)}, \dots G_n^{(k)}\}\) such that
        \begin{enumerate}[(a)]
            \item for each \(v \in V(G_n)\), there exists \(j \in [k]\) such that \(\NN_{\lvlCR}(v) \subseteq V(G_n^{(j)})\);
            \item for each choice of indices \(\xx_N = (x_1, \dots, x_N) \in [k]^N\), there exists a graph \(H_{\xx_N} \in \FF\) of size \(n\) which contains a subgraph \(\tilde{H}_{\xx_N}\) such that \(\tilde{H}_{\xx_N}\) is isomorphic to the disjoint union \(\bigsqcup_{j = 1}^N G_n^{(x_j)}\), and \(H_{\xx_N}[\NN_T(\tilde{H}_{\xx_N})]\) is isomorphic to the disjoint union \(\bigsqcup_{j = 1}^N G_n[\NN_T(G_n^{(x_j)})]\).
        \end{enumerate}
    \end{enumerate}
    Then we say that \(G_n\) is an \((n,k,N,T)\)-\emph{cheating graph} for the pair \((\problem, \FF)\) or, more generally, that \(\FF\) admits an \((n,k,N,T)\)-\emph{cheating graph} for \(\problem\).
\end{definition}

\begin{remark}\label{rem:solvability-cheatgraph-family}
    Being \(\problem\) an LVL problem, \cref{def:cheating-graph}.(ii).(b) implies that \(\problem\) is solvable on \(G_n[\NN_T(G_n^{(i)})]\) for \(i = 1, \dots, k\).
\end{remark}

We now present our general lower bound theorem.

\begin{theorem}\label{thm:lb-technique}
    Let \(\problem\) be an LVL problem with checking radius \(\lvlCR\), and \(\FF\) be a graph family that admits an \((n,k,N,T)\)-{cheating graph} for \(\problem\).
    Suppose \(\outcome\) is an outcome  over \(\FF \) in \nslcl with locality \(T \ge \lvlCR\).
    Then, there exists a graph \(H \in \FF\) on \(n\) vertices such that the probability of \(\outcome\) solving \(\problem\) on \(H\) is at most \((1 - 1/k)^N\).
    Furthermore, \(H\) can be chosen among the graphs in the family \(\{H_{\xx_N}\}_{\xx_N \in [k]^N}\) given by \cref{def:cheating-graph}.(ii).(b). 
\end{theorem}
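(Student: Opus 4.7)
The plan is to run the non-signaling outcome $\outcome$ on an $n$-vertex graph $G^{\star}$ consisting of $N$ disjoint copies $G_1, \dots, G_N$ of the cheating graph $G_n$ (padded with $n - N\abs{V(G_n)} \ge 0$ isolated vertices so that $\abs{V(G^{\star})} = n$), then to extract a ``bad'' index tuple $\xx^{\star} \in [k]^N$ via a combinatorial averaging argument, and finally to transfer the resulting failure onto an admissible instance $H_{\xx^{\star}} \in \FF$ using the non-signaling property of $\outcome$.

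For each $\xx = (x_1, \dots, x_N) \in [k]^N$, let $\FF_{\xx}$ denote the event that, when $\outcome$ is run on $G^{\star}$, its output restricted to each $V(G_i^{(x_i)})$ is non-admissible for $\problem$. Because $\problem$ is unsolvable on $G_n$ (property (i)) and each copy $G_i$ is a connected region of $G^{\star}$, $\outcome$ fails on every $G_i$ with probability $1$; together with property (ii).(a) and the fact that $\problem$ is LVL with checking radius $\lvlCR \le T$, this gives $\pr{\bigcup_{\xx} \FF_{\xx}} = 1$. The main obstacle is that in \nslcl these events need not be independent across copies, so the clean amplification of \cref{sec:lb-technique:lcl} is unavailable. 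To bypass this, I would fix a total order $\xx_1, \dots, \xx_{k^N}$ on $[k]^N$ and disjointify via $\EE_{\xx_1} = \FF_{\xx_1}$ and $\EE_{\xx_j} = \FF_{\xx_j} \setminus \bigcup_{i < j} \EE_{\xx_i}$, so $\sum_{\xx} \pr{\EE_{\xx}} = 1$. Defining $\II_{\xx} = \{\yy \in [k]^N \st \exists\, i,\ x_i = y_i\}$, I would observe that $\bigcup_{\yy \in \II_{\xx}} \FF_{\yy}$ is exactly the event ``there exists $i$ with $\outcome$ failing on $G_i^{(x_i)}$''. Assuming for contradiction that $\pr{\bigcup_{\yy \in \II_{\xx}} \FF_{\yy}} < 1 - (1 - 1/k)^N$ for every $\xx$, the disjointness of the $\EE_{\yy}$ gives $\sum_{\yy \notin \II_{\xx}} \pr{\EE_{\yy}} > (1 - 1/k)^N$; summing over $\xx$ and swapping the order of summation (using that each $\yy$ lies in $[k]^N \setminus \II_{\xx}$ for exactly $(k-1)^N$ choices of $\xx$) yields $(k-1)^N = (k-1)^N \sum_{\yy} \pr{\EE_{\yy}} > k^N(1 - 1/k)^N = (k-1)^N$, a contradiction. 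Hence there exists $\xx^{\star} \in [k]^N$ with $\pr{\exists\, i \colon \outcome \text{ fails on } G_i^{(\xx^{\star}_i)}} \ge 1 - (1 - 1/k)^N$.

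To conclude, I would invoke property (ii).(b) to obtain an $n$-vertex graph $H_{\xx^{\star}} \in \FF$ containing a subgraph $\tilde{H}_{\xx^{\star}} \cong \sqcup_i G_n^{(\xx^{\star}_i)}$ whose $T$-neighborhood in $H_{\xx^{\star}}$ is isomorphic to $\sqcup_i G_n[\NN_T(G_n^{(\xx^{\star}_i)})]$. Since the copies $G_i$ are separate components of $G^{\star}$, this is also the $T$-neighborhood of $\bigcup_i G_i^{(\xx^{\star}_i)}$ inside $G^{\star}$, so after identifying the two regions via the isomorphism and matching identifiers and input labels accordingly, the two $T$-local views are literally equal. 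The non-signaling property (\cref{def:ns-outcome}) then forces the marginal distribution of $\outcome$ on $\tilde{H}_{\xx^{\star}}$ in $H_{\xx^{\star}}$ to coincide with its marginal on $\bigcup_i G_i^{(\xx^{\star}_i)}$ in $G^{\star}$. Since $\problem$ is LVL, a non-admissible $\lvlCR$-view inside some $G_i^{(\xx^{\star}_i)}$ remains non-admissible when transplanted into $H_{\xx^{\star}}$, so $\outcome$ fails on $H_{\xx^{\star}}$ with probability at least $1 - (1 - 1/k)^N$, as required. Beyond the averaging step, the other delicate point is that the non-signaling property only applies between graphs of the same cardinality, which is precisely why $G^{\star}$ must be padded to exactly $n$ vertices.
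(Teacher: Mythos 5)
Your proposal is correct, and its overall skeleton matches the paper's proof of \cref{thm:lb-technique}: run the outcome on $N$ disjoint copies of the cheating graph padded with isolated vertices to exactly $n$ nodes, extract a tuple $\xx^\star\in[k]^N$ for which the probability that some $G_i^{(x_i^\star)}$ contains a bad $\lvlCR$-view is at least $1-(1-1/k)^N$, and then transfer this failure to $H_{\xx^\star}$ via \cref{def:cheating-graph}.(ii).(b) and the non-signaling property (your remark that padding to the same cardinality $n$ is what makes \cref{def:ns-outcome} applicable is exactly the point the paper makes). Where you diverge is in how the bad tuple is extracted: the paper's formal proof argues by induction on $N$, using the law of total probability to find, given $x_1,\dots,x_{N-1}$, an index $x_N$ with $\Pr[\EE_N^{(x_N)}\mid\bigcap_{i<N}\bar\EE_i^{(x_i)}]\ge 1/k$, whereas you disjointify the events $\FF_\xx$, introduce the index sets $\II_\xx$, and reach a contradiction by double counting ($(k-1)^N$ on both sides). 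This counting argument is in fact the "constructive intuition" the paper spells out in \cref{sec:intro:lb} and then replaces by the shorter induction in the formal proof, so both routes are sound and give the same bound; the induction is more compact, while your version makes the handling of dependencies more explicit. One small point to tighten: the identity "$\bigcup_{\yy\in\II_\xx}\FF_\yy$ is exactly the event that some $G_i^{(x_i)}$ fails" holds only up to a null event, and it relies on the almost-sure fact (which you do establish via property (i), (ii).(a), connected components, and $\lvlCR\le T$) that \emph{every} copy $G_i$ contains a failing index; stating that reliance explicitly would close the gap between "there exists $i_0$ with a failure on $G_{i_0}^{(x_{i_0})}$" and the existence of a full tuple $\yy\in\II_\xx$ with $\FF_\yy$ occurring.
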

\begin{proof}
    Let \(G_n\) be a \((n,k,N,T)\)-{cheating graph} for \((\problem, \FF)\).
    We know that \(G_n\) has size at most \( \floor{n/N}\) and satisfies the properties listed in \cref{def:cheating-graph}.
    Now, consider a new graph that consists of \(N\) disjoint copies \(G_{n,1}, \dots, G_{n, N}\) of \(G_{n}\), and \(n -  \abs{V(G_n)}\) isolated nodes.

    For each \(i = 1, \dots, N\), consider the subgraph cover \(\{G_{n, i}^{(j)}\}_{j \in [k]}\) for \(G_{n, i}\) given by \cref{def:cheating-graph}.
    Let \(\outcome\) be any outcome having locality \(T\) and solving problem \(\problem\) over \(\FF\).

    As \(\problem\) is not solvable on \(G_{n, i}\), then the failing probability of \(\outcome\) over \(G_{n, i}\) is 1, for each \(i = 1, \dots, N\).
    Consider one of the \(G_{n, i}\) and notice that, if \(\outcome\) produces a permissible vector output on \(G_{n, i}[\NN_T(G_{n, i}^{(j)})]\) for each \(j \in [k]\) at the same time, then, by definition of LVL problem, we have a global permissible vector output on \(G_{n, i}\) (which does not exist).
    Hence, by \cref{def:cheating-graph}.(ii).(a), there must exist \(j \in [k]\) and \(v \in V(G_{n, i}^{(j)})\) such that \(\NN_{\lvlCR}(v) \subseteq V(G_{n, i}^{(j)})\) and the output vector on \(\NN_{\lvlCR}(v)\) is not permissible: in such a case, we say that \(G_{n, i}^{(j)}\) contains a \emph{bad node}.

    We now prove that there exists a sequence of indices \(\xx_N = (x_1, \dots, x_N) \in [k]^N\) such that \(\outcome\) produces a bad node in \(\bigsqcup_{j = 1}^N G_{n,j}^{(x_j)}\) with probability at least \(1 - (1 - 1/k)^N\).
    If we had independence between ``far away'' parts of the graphs (as in the \randlcl model), this thesis would be trivial (see \cref{sec:lb-technique:lcl}). 
    However, in the \nslcl model non-trivial dependencies are possible (e.g., pre-shared quantum state).
    
    We here present a shorter proof by induction on \(N\), as we already gave a somewhat ``constructive'' intuition in  \cref{sec:intro:lb}.
    Assume \(N=1\): as \(\outcome\) fails on \(G_{n,1}\) with probability 1, and the latter is covered by \(\{G_{n, 1}^{(j)}\}_{j = 1}^k\), then there exists an index \(x_1 \in [k]\) such that the probability that \(G_{n, 1}^{(x_1)}\) contains a bad node is at least \(1/k\).
    Now, assume \(N > 1\) and the claim to be true for \(N-1\).
    Let \(\EE_{i}^{(j)}\) be the event that \(\outcome\) produces a bad node in \(G_{n,i}^{(j)}\) (we remark that \(\problem\) is solvable on \(G_{n,i}^{(j)}\) by \cref{rem:solvability-cheatgraph-family}):
    the inductive hypothesis can be rewritten as \(\pr{\cup_{i = 1}^N \EE_{i}^{(x_i)}} = 1 - (1 - 1/k)^{N-1} + y\) for some \(y \ge 0\). 
    Assume \(\pr{\cup_{i = 1}^{N-1} \EE_{i}^{(x_i)}} < 1\) otherwise the thesis is trivial.

    Let us denote the complement of any event \(A\) by \(\bar{A}\).
    As \(\outcome\) fails on \(G_{n,N}\) with probability 1, we know that
    \begin{align*}
        1 & = \pr{\cup_{j = 1}^k \EE_{N}^{(j)}} \\
        & = \pr{\cup_{j = 1}^k \EE_{N}^{(j)} \bigcup (\cup_{i = 1}^{N-1} \EE_{i}^{(x_i)})}.
    \end{align*}
    By the law of total probability, we get that
    \begin{align*}
       1 & = \pr{(\cup_{j = 1}^k \EE_{N}^{(j)}) \bigcup (\cup_{i = 1}^{N-1} \EE_{i}^{(x_i)})} \\
       & = \pr{\cup_{j = 1}^k \EE_{N}^{(j)} \ \st \ \cap_{i = 1}^{N-1} \bar{\EE}_{i}^{(x_i)}} \pr{\cap_{i = 1}^{N-1} \bar{\EE}_{i}^{(x_i)}} + \pr{\cup_{i = 1}^{N-1} {\EE}_{i}^{(x_i)}}.
    \end{align*}
    Hence, \(\pr{\cup_{j = 1}^k \EE_{N}^{(j)} \ \st \ \cap_{i = 1}^{N-1} \bar{\EE}_{i}^{(x_i)}} = 1\); 
    by the union bound, it follows there exists \(x_N \in [k]\) with \(\pr{\EE_{N}^{(x_N)} \ \st \ \cap_{i = 1}^{N-1} \bar{\EE}_{i}^{(x_i)}} \ge 1/k\).

    Then, by the inclusion-exclusion principle and the law of total probability,
    \begin{align*}
        \pr{\cup_{i = 1}^N \EE_{i}^{(x_i)}} & = \pr{\EE_N^{(x_N)}} + \pr{\cup_{i = 1}^{N-1} \EE_{i}^{(x_i)}} - \pr{\EE_N^{(x_N)} \bigcap (\cup_{i = 1}^{N-1} \EE_{i}^{(x_i)})} \\
        & = \pr{\EE_N^{(x_N)} \bigcap (\cap_{i = 1}^{N-1} \bar{\EE}_{i}^{(x_i)}) } + \pr{\cup_{i = 1}^{N-1} \EE_{i}^{(x_i)}} \\
        & = \pr{\cup_{j = 1}^k \EE_{N}^{(j)} \ \st \ \cap_{i = 1}^{N-1} \bar{\EE}_{i}^{(x_i)}} \pr{\cap_{i = 1}^{N-1} \bar{\EE}_{i}^{(x_i)}} + \pr{\cup_{i = 1}^{N-1} \EE_{i}^{(x_i)}} \\
        & \ge \frac{1}{k}\cdot \left[\left(1 - \frac{1}{k} \right)^{N-1} - y \right] + 1 - \left(1 - \frac{1}{k} \right)^{N-1} + y \\
        & = 1 - \left(1 - \frac{1}{k} \right)^{N} + y \left(1 - \frac{1}{k} \right) \\
        & \ge 1 - \left(1 - \frac{1}{k} \right)^{N},
    \end{align*}
    proving the claim.

    By \cref{def:cheating-graph}.(ii).(b), there exists a graph \(H_{\xx_N} \in \FF\) over the same set of \(n\) nodes that contains a subgraph \(\tilde{H}_{\xx_N}\) with \(H_{\xx_N}[\NN_T(\tilde{H}_{\xx_N})]\) being the same graph as \(\bigsqcup_{i=1}^N G_{n,i}[\NN_T(G_{n,i}^{(x_i)})]\).
    Consider the same identifiers and input labels over \(H_{\xx_N}[\NN_T(\tilde{H}_{\xx_N})]\) and \(\bigsqcup_{i=1}^N G_{n,i}[\NN_T(G_{n,i}^{(x_i)})]\):
    by the definition of \nslcl, the probability that \(\outcome\) fails on \(\tilde{H}_{\xx_N} \subseteq H_{\xx_N}\) is the same as that on  \(\bigsqcup_{j = 1}^N G_{n,j}^{(x_j)}\), yielding the thesis.
\end{proof}

As long as one can find a cheating graph for a pair \((\problem,\FF)\), where \(\problem\) is an LVL problem and \(\FF\) an input graph family, the lower bound technique can be applied.
In \cref{sec:coloring:lb,sec:coloring:grids,sec:coloring:trees}, all the analysis that is carried out serves to show that there exists such a cheating graph for, respectively, \(c\)-coloring \(\chi\)-chromatic graphs, \(3\)-coloring grids, and \(c\)-coloring trees. \subsection{Lower bound for \texorpdfstring{$c$}{c}-coloring \texorpdfstring{$\chi$}{𝜒}-chromatic graphs} \label{sec:coloring:lb}

The goal of this section is to prove \cref{thm:lb-coloring} by showing that the family of \(\chi\)-chromatic graphs admits cheating graphs for the \(c\)-coloring problem (\cref{def:cheating-graph}). 
We restate the theorem for the sake of readability.

\thmMainResultLB*

We base our analysis on \cite[Theorem 1.2]{bogdanov2013}, a result that has gone relatively unnoticed and lies at the intersection between graph theory, combinatorics, and topology, which ensures the existence of a graph with high chromatic number which, locally, is \(\chi\)-chromatic.
The first half of the sections aims at constructing such graph, and the second half is devoted to the proof that this graph admits a (small enough) subgraph cover that satisfies the properties of \cref{def:cheating-graph}.

\paragraph{Preliminaries.}
We first define some graph operations.
For any two graphs \(G, H\), we define the intersection graph \(G \cap H\) as a graph whose vertex set is the set \(V(G) \cap V(H)\), and whose edge set is the set \(E(G) \cap E(H)\).
Similarly, we define the union graph \(G \cup H\) as a graph whose vertex set is the set \(V(G) \cup V(H)\), and whose edge set is the set \(E(G) \cup E(H)\).
We define the difference graph \(G \setminus H\) as the subgraph of \(G\) induced by \(V(G) \setminus V(H)\).
The \emph{\(T\)-local chromatic number} of a graph \(G\), denoted by \(\lclCHR{T}(G)\), is the minimum \(c \in \nat\) such that the graph induced by the \(T\)-neighborhood of any node is \(c\)-colorable.
More formally
\[
    \lclCHR{T}(G) = \min \left\{c \in \nat \ \st \  \forall u \in V, \, G [{\NN_t(u)}] \text{ is } c \text{-colorable} \right\}.   
\]
Given two graphs \(G\) and \(H\), a function \(f:V(G) \to V(H)\) is a homomorphism from \(G\) to \(H\) if, for any \(\{u,v\} \in E(G)\), \(\{f(u),f(v)\} \in E(H)\). 
A homomorphism from \(G\) to \(K_c\), the \(c\)-clique, is equivalent to saying that \(G\) is \(c\)-colorable.
Notice that the composition of homomorphisms is a homomorphism: hence, if \(G\) is homomorphic to \(H\), then \(\XX(H) \ge \XX(G)\).
Furthermore, we define the \emph{tensor product} of graphs \(G\) and \(H\) as a graph \(G \times H\) whose vertex set is \(V(G) \times V(H)\), and whose edge set is determined by the following: for any \((g,h), (g',h') \in V(G \times H)\), \(\{(g,h), (g',h')\} \in E(G \times H)\) iff \(gg' \in E(G)\) and \(hh' \in E(H)\) (see \cref{fig:tensor-product} for an example).

We hereby state \cite[Theorem 1.2]{bogdanov2013}.

\begin{theorem}[\cite{bogdanov2013}]\label{thm:graph-theory:lb}
    Let \(\chi \ge 2\), \( r \ge 2\), and \(k \ge 1\) be integers. 
    There exists a graph \(G_k = (V, E)\) 
    such that \(\lclCHR{r}(G_k) = \chi\) and \(\XX(G_k) \ge k(\chi-1) + 1\) with 
    \[
        \abs{V} = \frac{(2r\chi + 1)^k - 1}{2r}.
    \] 
\end{theorem}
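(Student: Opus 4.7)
The plan is to prove this by induction on $k$. For the base case $k = 1$, I would take $G_1 = K_\chi$: it has $\chi = \tfrac{(2r\chi+1)-1}{2r}$ vertices, every radius-$r$ neighborhood is all of $K_\chi$ (so $\lclCHR{r}(K_\chi) = \chi$), and $\XX(K_\chi) = \chi = 1\cdot(\chi-1)+1$, as required.

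For the inductive step, assume $G_{k-1}$ satisfies the claim. I would build $G_k$ via the recursion $|V(G_k)| = (2r\chi+1)|V(G_{k-1})| + \chi$, which one checks matches the target size $\tfrac{(2r\chi+1)^k-1}{2r}$. The natural construction is to take $2r\chi+1$ disjoint copies of $G_{k-1}$ together with $\chi$ new ``apex'' vertices forming a clique $K_\chi$, and connect each apex to a carefully chosen subset of each copy. The attachment pattern must simultaneously achieve two competing goals: it must be \emph{sparse} enough that every radius-$r$ ball in $G_k$ intersects the apex-and-copies structure only in a $\chi$-colorable subgraph (so that $\lclCHR{r}(G_k) \le \chi$, with the matching lower bound inherited from the $G_{k-1}$ copies, which are induced subgraphs), yet \emph{dense} enough that any proper coloring of $G_k$ with at most $k(\chi-1)$ colors is ruled out.

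The chromatic lower bound is the heart of the argument. The strategy is: given a hypothetical proper coloring $c\colon V(G_k) \to [k(\chi-1)]$, since the apex vertices form a $K_\chi$ they consume $\chi$ distinct colors. Use a pigeonhole over the $2r\chi+1$ copies of $G_{k-1}$ to find one copy whose ``interaction profile'' with the apex forces its colors into a restricted sub-palette of size at most $(k-1)(\chi-1)$; this would then give a proper coloring of $G_{k-1}$ with $(k-1)(\chi-1)$ colors, contradicting the inductive hypothesis $\XX(G_{k-1}) \ge (k-1)(\chi-1)+1$. The specific number $2r\chi+1$ is exactly what makes the pigeonhole work: it is the smallest number of copies such that, among the $\chi$ apex-color classes and the $2r$-hop buffer those impose, two copies must become ``color-equivalent'' in a sense that reduces the effective palette by $\chi-1$.

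The main obstacle will be designing the apex-to-copy attachment rule so that both the local $\chi$-colorability and the global chromatic lower bound hold simultaneously. The local property should follow from a direct geometric argument: by spacing the apex-to-copy edges within each copy at pairwise distance more than $2r$, every radius-$r$ ball sees at most one apex-adjacency per copy, so the induced subgraph is a bounded modification of a radius-$r$ ball of $G_{k-1}$ and remains $\chi$-colorable by the inductive hypothesis. I expect the delicate step to be the chromatic lower bound, which may well require either a careful combinatorial pigeonhole calculation exploiting the precise value $2r\chi+1$, or a topological/algebraic argument in the spirit of Lov\'asz's proof for Kneser graphs---the latter being consistent with the fact (noted in the excerpt) that the underlying result of \textcite{bogdanov2013} sits ``at the intersection between graph theory, combinatorics, and topology.''
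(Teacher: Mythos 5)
Your plan gets the bookkeeping right (the size recursion $\abs{V(G_k)} = (2r\chi+1)\abs{V(G_{k-1})} + \chi$ and the base case $G_1 = K_\chi$ match the target count), but the two steps that actually constitute the theorem are left open, and the route you sketch for the harder one is unlikely to close. Your construction---$2r\chi+1$ disjoint copies of $G_{k-1}$ plus a $K_\chi$ of apex vertices with a sparse attachment rule---is not the one behind this statement, and the pigeonhole argument you propose for $\XX(G_k) \ge k(\chi-1)+1$ has no mechanism to force an entire copy of $G_{k-1}$ into a restricted palette: precisely because you space the apex-to-copy edges more than $2r$ apart (to salvage $\lclCHR{r}(G_k)=\chi$), only a bounded, scattered set of vertices in each copy is constrained by the apex colors at all, and a proper coloring of the copy can be locally patched or have its palette permuted at those few vertices. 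Nothing in ``two copies become color-equivalent'' transfers a constraint from finitely many attachment vertices to all of $G_{k-1}$, so the contradiction with $\XX(G_{k-1}) \ge (k-1)(\chi-1)+1$ never materializes. The fact that this family is essentially extremal against the Kierstead--Szemer\'edi bound (\cref{eq:kierstead1984}) is a further warning sign that an ad-hoc apex attachment with a local pigeonhole cannot reach $k(\chi-1)+1$; indeed no purely combinatorial proof of the chromatic lower bound is known.

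The construction this theorem refers to (\cref{def:graph:gadget}) is genuinely different: $G_k = G_{k-1} \rjoin{2r} K_\chi^{(k)}$, i.e.\ $2r$ consecutive layers of the tensor product $G_{k-1} \times K_\chi$, glued to $G_{k-1}$ at one end and to $K_\chi$ at the other---there are no disjoint copies of $G_{k-1}$ and no apex clique. The upper bound $\lclCHR{r}(G_k) \le \chi$ comes not from sparsity but from the projection maps of \cref{def:projection-operators} being graph homomorphisms (\cref{lem:bogdanov:lclCHR}): every radius-$r$ ball avoids one of the two ends and hence maps homomorphically into $G_{k-1}$ or into $K_\chi$. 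The chromatic lower bound is topological, exactly the alternative you flagged but did not carry out: one shows $\norm{\nc{G \rjoin{r} H}} \simeq \norm{\nc{G}} \star \norm{\nc{H}}$ (\cref{lem:bogdanov}), so iterating the join with $K_\chi$ turns the neighborhood complex into a join of spheres, i.e.\ a sphere of dimension growing by $\chi-1$ per step, and Lov\'asz's connectivity bound (\cref{thm:lovasz}) then yields $\XX(G_k) \ge k(\chi-1)+1$; see \cref{sec:remarks-graph-theoretic-constr}. Without either this topological input or a replacement for it, your proposal does not prove the statement.
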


\begin{remark}\label{remark:bipartite-case}
    \cite[Theorem 1.2]{bogdanov2013} has been stated for \(\chi \ge 3\) since the result for \(\chi = 2\) was already known from a different construction provided by \cite{stiebitz1985} (a proof translated in English was reproduced by \cite{gyarfas2004}).
    Nevertheless, the proof of \cite[Theorem 1.2]{bogdanov2013} also holds for the case \(\chi = 2\).
\end{remark}

\begin{remark}\label{remark:construction-radius}
    \cref{thm:graph-theory:lb} holds even for \(r \ge 1\), as explicitly mentioned at the end of \cite[Section 4]{bogdanov2013}, slightly changing the proof.
\end{remark}

We will discuss the tightness and the related works of this result already in \cref{ssec:graph-theory-related-works}. 
We proceed proving \cref{thm:lb-coloring}.
The idea of the whole proof is to show that the graph from \cref{thm:graph-theory:lb} provides a cheating graph for the \(c\)-coloring graphs problem and the family of \(\chi\)-chromatic graphs.

We now construct the graph from \cref{thm:graph-theory:lb}.

\paragraph{The \(r\)-join of graphs.}

\begin{figure}
    \centering
    \includegraphics{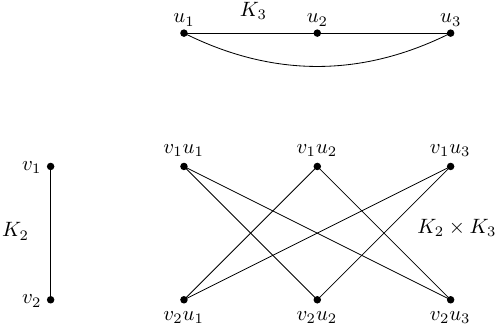}
    \caption{Tensor product \(K_2 \times K_3\).}
    \label{fig:tensor-product}
\end{figure}

\begin{figure}
  \centering
  \begin{subcaptionblock}{\textwidth}
    \centering
    \includegraphics[scale=0.9]{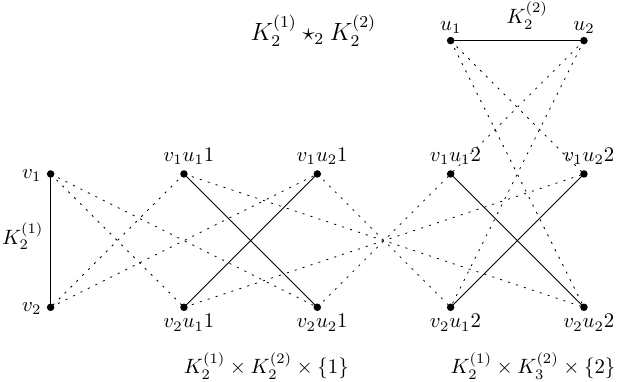}
    \caption{ The \(2\)-join of \(K_2^{(1)}\) and \(K_3^{(2)}\), with all the
    connections.
    Full lines represent edges within the tensor product graphs plus the
    starting and ending graph. Dotted lines represent edges among these graphs.
    \(K_2^{(1)} \times K_2^{(2)} \times \{1\}\) and \(K_2 ^{(1)}\times K_2^{(2)}
    \times \{2\}\) are two copies of the tensor product \(K_2^{(1)} \times
    K_2^{(2)}\).}
  \end{subcaptionblock}
  \\[5mm]
  \begin{subcaptionblock}{\textwidth}
    \centering
    \includegraphics[scale=0.9]{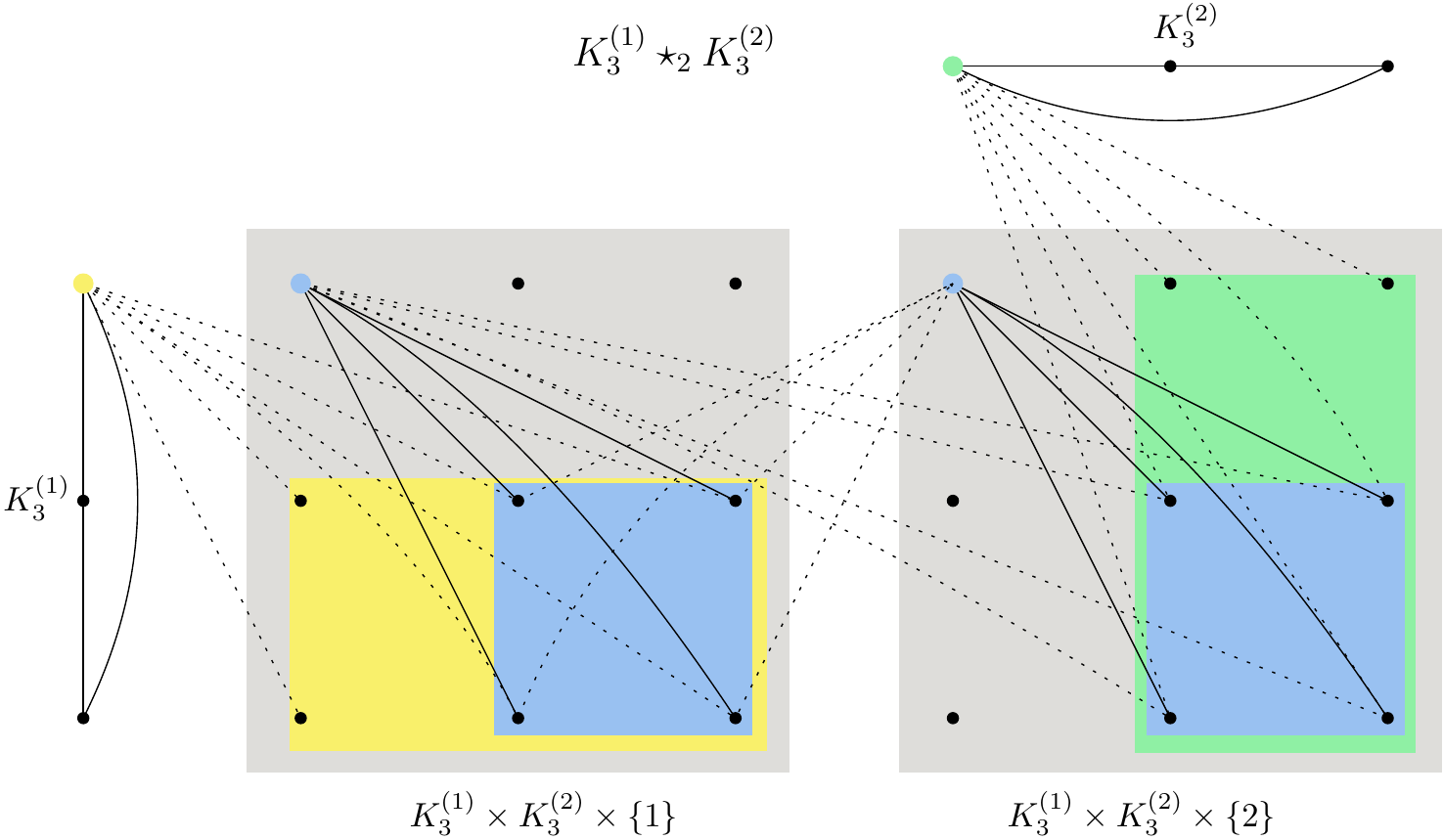}
    \caption{Representation of the \(2\)-join of two copies \(K_3^{(1)}\) and
    \(K_3^{(2)}\) of \(K_3\). Here, for the sake of visibility, only some edges
    are represented. The yellow node in \(K_3^{(1)}\) is connected to all nodes in
    the yellow area of \(K_3^{(1)} \times K_3^{(2)} \times \{1\}\); similarly, the
    green node in \(K_3^{(2)}\) is connected to all nodes in the green area of
    \(K_3^{(1)} \times K_3^{(1)} \times \{2\}\). The blue nodes of \(K_3^{(1)}
    \times K_3^{(2)} \times \{1,2\}\) are connected to all nodes in the blue areas
    of \(K_3^{(1)} \times K_3^{(2)} \times \{1,2\}\). Other connections can be
    deduced by symmetry.}
  \end{subcaptionblock}
\caption{Examples for the $2$-join of graphs.}
  \label{fig:graph-join}
\end{figure}

Given two graphs \(G\) and \(H\), we aim to define the \(r\)-join operation \(G \rjoin{r} H\), for any \(r \ge 0\).
The vertex set \(V(G \rjoin{r} H)\) is defined by \(V(G \rjoin{r} H) = V(G) \cup V(G)  \times V(H) \times \{1, \dots, r\} \cup V(H)\). 
Let 
\begin{align*}
    E_{1,r}(G \rjoin{r} H) =  \left\{\{(g,h,i),(g',h',j)\} \ \st \ g,g' \in V(G), h,h' \in V(H), gg' \in E(G), hh' \in E(H), \abs{i - j} \le 1\right\}.
\end{align*}
Furthermore, let
\[
    E_0(G \rjoin{r} H) = E(G) \cup \left\{ \{g, (g',h',1)\} \ \st \ g,g' \in V(G), h' \in V(H), gg' \in E(G) \right\}
\]
and 
\[
    E_{r+1}(G \rjoin{r} H) = E(H) \cup \left\{ \{(g,h,t), h'\} \ \st \ g \in V(G), h,h' \in V(H), hh' \in E(H) \right\}.
\]
Then, the edge set \(E(G \rjoin{r} H)\) is defined by \(E(G \rjoin{r} H) = E_0(G \rjoin{r} H) \cup E_{1,r}(G \rjoin{r} H) \cup E_{r+1}(G \rjoin{r} H)\).

An intuitive visualization of this graph follows: take a sequence of \(r+2\) disjoint copies \((G \times H)_0, (G \times H)_1, \dots, (G \times H)_{r+1}\) of the tensor product \(G \times H\) (an example of a tensor product graph is given in \cref{fig:tensor-product}).
Clearly, for any \(0 \le i \le r\), there is an isomorphism \(f_i:(G \times H)_i \to (G \times H)_{i+1} \).
Then, any two nodes \((g,h) \in (G \times H)_i\) and \((g',h') \in (G \times H)_{i+1}\) are connected if and only if \(\{f_i((g,h)), (g',h')\} \in E((G \times H)_{i+1})\), or, equivalently, \(\{(g,h), f_i^{-1}((g',h'))\} \in E((G \times H)_{i})\).
Finally, ``collapse'' \((G \times H)_0\) into \(G\) and \((G \times H)_{r+1}\) into \(H\) by merging nodes (merging multiple edges and deleting self-loops).

This join operation in graphs is some kind of ``discrete'' variant of the join operation between two topological spaces (see \cite{bogdanov2013} or \cite{matousek2010}).

We define two projection operators for the join of graphs.
\begin{definition}\label{def:projection-operators}
    Let \(\proj{G} : G \rjoin{r} H \setminus H \to G\) and \(\proj{H} : G \rjoin{r} H \setminus G \to H\) be defined as follows: \(\proj{G}((g,h,i)) = g\) and \(\proj{H}((g,h,i)) = h\) for \((g,h,i) \in V(G)  \times V(H) \times \{1, \dots, r\}\), while \(\proj{G} \restriction_{G}\) and \(\proj{H} \restriction_{H}\) are the identity maps on \(G\) and \(H\), respectively.
\end{definition}

\begin{remark}\label{remark:projection-homomorphisms}
    The two projections are homomorphisms, implying that the chromatic number of \(G \rjoin{r} H \setminus H\) is the same as that of \(G\), and the chromatic number of \(G \rjoin{r} H \setminus G\) is the same as that of \(H\).
\end{remark}

The join of two connected graphs results in a connected graph.

\begin{lemma}\label{lem:gadget:no-isolated-nodes}
    Let \(r \ge 1\), and let \(G\) and \(H\) be two connected graphs with at least two nodes each.
    Then, \(G \rjoin{r} H\) is connected.
\end{lemma}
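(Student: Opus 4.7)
The plan is to fix a basepoint $g_0 \in V(G)$ and exhibit, for every vertex $v$ of $G \rjoin{r} H$, a path from $v$ to $g_0$. The three kinds of vertices (those inherited from $G$, the middle-layer triples, and those inherited from $H$) are treated in turn, and the proof amounts to verifying that the edges supplied by $E_0$, $E_{1,r}$, and $E_{r+1}$ suffice to chain them together.

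Vertices in $V(G)$ are handled immediately: since $E_0(G \rjoin{r} H) \supseteq E(G)$, the subgraph induced on $V(G)$ contains a copy of $G$, which is connected by hypothesis. For a middle-layer vertex $(g,h,i)$ with $i \in [r]$, the idea is to walk one layer at a time toward layer $1$ and then hop into $V(G)$. Using that $G$ and $H$ are connected with at least two vertices, every $g \in V(G)$ has some neighbor $g'$ in $G$ and every $h \in V(H)$ has some neighbor $h'$ in $H$; by the definition of $E_{1,r}$, this yields the edge $\{(g,h,i),(g',h',i-1)\}$ whenever $i \ge 2$. Iterating, we descend to some $(g^\ast,h^\ast,1)$, and then $E_0$ provides an edge $\{g'',(g^\ast,h^\ast,1)\}$ for any neighbor $g''$ of $g^\ast$ in $G$, landing in $V(G)$ and hence in the component of $g_0$.

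The symmetric argument via $E_{r+1}$ handles a vertex $h \in V(H)$: pick any neighbor $h'$ of $h$ in $H$ and any $g \in V(G)$, so that $\{(g,h',r), h\} \in E_{r+1}$; now $(g,h',r)$ is a middle-layer vertex and the previous case applies.

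The only point one has to keep checking is that the neighbors $g'$ and $h'$ required at each descent step actually exist; this is exactly the content of the hypotheses $|V(G)|, |V(H)| \ge 2$ together with connectivity of $G$ and $H$. No delicate fact about the tensor product $G \times H$ (whose slabs are in general disconnected, e.g.\ $K_2 \times K_2$) is needed, because every edge used in the constructed path moves between two different layers or crosses into one of the boundary copies of $G$ or $H$. I do not expect any real obstacle beyond bookkeeping the three cases cleanly.
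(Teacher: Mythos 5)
Your proof is correct and follows essentially the same route as the paper: an explicit layer-by-layer path construction that uses, at each step, a neighbor of the first coordinate in $G$ and of the second coordinate in $H$ (the paper phrases this as an alternating path between $(g,h,j)$ and $(g',h',j)$ running from the middle vertex to both $G$ and $H$). Your only cosmetic difference is routing all three vertex types to a fixed basepoint in $V(G)$, with $H$-vertices first hopping into layer $r$, which is just different bookkeeping of the same argument.
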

\begin{proof}
    Consider any node \(u\) in \(G\rjoin{r} H\) which does not belong to \(G \cup H\).
    Then \(u = (g,h,i)\) for \(g \in V(G)\), \(h \in V(H)\), and \(i \in [r]\).
    There exist \(g' \in V(G)\) and \(h' \in V(H)\) such that \(gg' \in E(G)\) and \(hh' \in E(H)\). 
    We now construct two paths \(v_0v_1\dots v_i\) and \(w_{r+1}w_r\dots w_i\) that connect \(G\) and \(H\) to \(v_i = w_i = u\), respectively.
    Suppose \(i\) is odd.
    Then, set \(v_0 = g'\), \(v_j =(g,h,j)\) for any odd \(j\), and \(v_j = (g',h',j)\) for any even \(j\).
    We have that \(v_j\) is connected to \(v_{j+1}\) for any \(0 \le j \le i-1\), and \(v_i = u\).
    Similarly, set \(w_{r+1} = h'\), \(w_j =(g,h,j)\) for any odd \(j\), and \(w_j = (g',h',j)\) for any even \(j\).
    We have that \(w_j\) is connected to \(w_{j-1}\) for any \(i+1 \le j \le r+1\), and \(w_i = u\).
    Suppose \(i\) is even.
    Then, set \(v_0 = g\), \(v_j =(g',h',j)\) for any odd \(j\), and \(v_j = (g,h,j)\) for any even \(j\).
    We have that \(v_j\) is connected to \(v_{j+1}\) for any \(0 \le j \le i-1\), and \(v_i = u\).
    Similarly, set \(w_{r+1} = h\), \(w_j =(g',h',j)\) for any odd \(j\), and \(w_j = (g,h,j)\) for any even \(j\).
    We have that \(w_j\) is connected to \(w_{j-1}\) for any \(i+1 \le j \le r+1\), and \(w_i = u\).
\end{proof}

We are ready to define the graph from \cref{thm:graph-theory:lb} which we will prove to be a cheating graph for the family of \(\chi\)-chromatic graphs.

\begin{definition}[The construction]\label{def:graph:gadget}
Let \(\chi \ge 2\), \(r \ge 2\), and \(k \ge 1\). 
Consider a sequence \(K_\chi^{(1)}, \dots, K_a^{(k)}\) of \(k\) disjoint copies of \(K_\chi\).
We construct the graph recursively.
Let \(G_1 = K_\chi^{(1)}\) be the clique with \(\chi\) nodes.
Then, for \(k \ge 2\), \(G_k = G_{k-1} \rjoin{2r} K_\chi^{(k)}\).
\cite{bogdanov2013} proved that \(\lclCHR{r}(G_k) = \chi\), \(\XX(G_k) \ge k(\chi-1)+1\), and 
\[
    \abs{V(G_k)} = \frac{(2r\chi + 1)^k - 1}{2r} .
\]
See \cref{fig:graph-join} for some examples (notice that \(r = 1\) in the examples: as observed in \cref{remark:construction-radius}, the result still holds in this case).
\end{definition}    
Further discussion on \(G_k\) is deferred to \cref{sec:remarks-graph-theoretic-constr}.
In order to continue, we state the following lemma for an induced subgraph, whose proof is trivial.
\begin{lemma}\label{lem:graph:connected-induced-subgraph}
    Let \(T \in \nat\).
    Let \(G\) be a connected graph, and \(H \subseteq G\) be a connected subgraph of \(G\).
    Then, \(G [{\NN_T(H)}]\) is connected.
\end{lemma}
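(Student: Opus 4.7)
The plan is to establish connectivity of $G[\NN_T(H)]$ by exhibiting, for every vertex of $\NN_T(H)$, an explicit path in $G[\NN_T(H)]$ to some fixed vertex of $V(H)$, and then using the connectivity of $H$ itself.

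First I would fix an arbitrary reference node $h_0 \in V(H)$ (which exists since $H$ is a connected subgraph, so in particular nonempty unless the statement is vacuous). Take any $u \in \NN_T(H)$. By definition of the $T$-neighborhood, there exists $h_u \in V(H)$ with $\dist_G(u, h_u) \le T$. Let $P_u = (u = w_0, w_1, \dots, w_\ell = h_u)$ be a shortest $u$--$h_u$ path in $G$, so $\ell \le T$. The key observation is that for every $i \in \{0, 1, \dots, \ell\}$ we have $\dist_G(w_i, h_u) \le \ell - i \le T$, because the suffix $(w_i, \dots, w_\ell)$ of a shortest path is itself a shortest path. Since $h_u \in V(H)$, this gives $w_i \in \NN_T(H)$ for each $i$, and hence all edges of $P_u$ lie inside $G[\NN_T(H)]$. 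In particular, $u$ is connected to $h_u$ within $G[\NN_T(H)]$.

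To finish, I would use the hypothesis that $H$ is connected. Since $V(H) \subseteq \NN_T(H)$ and $H \subseteq G[\NN_T(H)]$, any two vertices of $V(H)$ are connected by a path in $H$, hence in $G[\NN_T(H)]$. Thus given $u, v \in \NN_T(H)$, concatenating the path from $u$ to $h_u$ in $G[\NN_T(H)]$, a path in $H$ from $h_u$ to $h_v$, and the reverse of the analogous path from $v$ to $h_v$ yields a walk (hence a path) from $u$ to $v$ in $G[\NN_T(H)]$. There is no real obstacle here; the only substantive point is the ``shortest-path prefix'' argument showing that the path from $u$ to $h_u$ never leaves $\NN_T(H)$, and once that is noted the lemma follows immediately—consistent with the paper's characterization of the proof as trivial.
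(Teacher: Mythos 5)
Your argument is correct: the shortest-path-suffix observation shows every $u \in \NN_T(H)$ reaches some $h_u \in V(H)$ within $G[\NN_T(H)]$, and connectivity of $H$ (whose vertices and edges all lie in $G[\NN_T(H)]$) finishes the job. The paper omits the proof entirely, declaring it trivial, and your write-up is exactly the standard argument that triviality refers to, so there is nothing to add.
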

For a graph \(G\) and any node \(v \in V(G)\), we define by \(\dist_G(v,H) = \min_{u \in V(H)}\left\{\dist_G(u,v)\right\}\) the distance between \(v\) and any subgraph \(H \subseteq G\). 
We write \(\dist(v,H)\) when the underlying graph \(G\) is clear from the context.
We now prove some key-properties of \(G_k\) which allow us to show that \(G_k\) is a cheating graph. 

\begin{lemma}\label{lem:gadget:partition}
    Let \(\chi \ge 2\), \(r \ge 3\), and \(k \ge 2\) be integers. 
    Define \(T = \floor{\frac{2r}{3}}\).
    Let \(G_k\) be the graph defined in \cref{def:graph:gadget} built with copies of \(K_\chi\).
    There exists a subgraph cover \(\{G_{k}^{(i)}\}_{i \in [k]}\) of \(G_{k}\) such that the following statements hold:
    \begin{enumerate}[(i)]
        \item The chromatic number of \(G_{k}[{\NN_T(G_{k}^{(i)})}]\) is \(\chi\) for all \(i \in [k]\);
        \item \(G_{k}^{(i)}\) is connected for all \(i \in [k]\);
        \item for each \(v \in V(G_k)\), there exists \(i \in [k]\) such that \(\NN_1(v) \subseteq V(G_{k}^{(i)})\);
        \item \(G_k[\NN_T(G_k^{(i)})]\) contains at least one node at distance \(T\) from \(G_k^{(i)}\) for all \(i \in [k]\).
    \end{enumerate} 
\end{lemma}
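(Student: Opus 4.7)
The plan is to define the subgraph cover by setting $G_k^{(i)} := G_k[\NN_{r+1}(K_\chi^{(i)})]$, i.e.\ the induced subgraph on the closed ball of radius $r+1$ in $G_k$ centered at the clique $K_\chi^{(i)}$. With this choice, three of the four properties reduce to relatively direct arguments. For (ii), since $K_\chi^{(i)}$ is a connected subgraph of $G_k$, \cref{lem:graph:connected-induced-subgraph} immediately yields connectedness of $G_k^{(i)}$. For (iv), since $T = \floor{2r/3} \le r-1$ for $r \ge 3$, we have $r+1+T < 2r+1$, and because any two distinct cliques $K_\chi^{(i)}, K_\chi^{(j)}$ lie at distance exactly $2r+1$ in $G_k$ (easily checked along the tensor layers), a path of length $r+1+T$ from $K_\chi^{(i)}$ toward some $K_\chi^{(j)}$ exists and its endpoint is at distance exactly $T$ from $G_k^{(i)}$. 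For (iii), it suffices to prove that every $v \in V(G_k)$ lies within distance $r$ of some clique; then for any $v' \in \NN_1(v)$ the triangle inequality gives $\dist(v', K_\chi^{(i)}) \le r+1$. I would prove the claim by induction on $k$, case-splitting on whether $v$ lies in some $K_\chi^{(j)}$, in an ``upper'' layer $L_{j,t}$ with $t \ge r+1$ (at distance $2r+1-t \le r$ from $K_\chi^{(j)}$), or in a ``lower'' layer $L_{j,t}$ with $t \le r$ (in which case I recurse on the first coordinate $g \in V(G_{j-1})$, exploiting that the join structure bottoms out at $G_1 = K_\chi^{(1)}$).

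The hardest part is property (i): showing $G_k[\NN_{r+1+T}(K_\chi^{(i)})]$ has chromatic number exactly $\chi$. The lower bound is immediate because the subgraph contains $K_\chi^{(i)}$. For the upper bound I would construct an explicit homomorphism $\phi\colon \NN_{r+1+T}(K_\chi^{(i)}) \to K_\chi^{(i)}$ by iterated application of the projection operators from \cref{def:projection-operators}: set $\phi = \mathrm{id}$ on $K_\chi^{(i)}$; set $\phi((g,h,t)) = h$ for nodes in $L_{i,t}$ (so $h \in K_\chi^{(i)}$); and set $\phi((g,h,t)) = \phi(g)$ recursively for nodes in $L_{j,t}$ with $j > i$. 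The crucial distance facts making this work are (a) $\NN_{r+1+T}(K_\chi^{(i)})$ contains no other clique $K_\chi^{(j)}$, since these are at distance $2r+1 > r+1+T$, and (b) every layer node in this ball has its first-coordinate recursion terminate inside $K_\chi^{(i)} \cup L_{i,*}$ rather than escaping to some $K_\chi^{(j')}$ with $j' < i$; the latter again follows because those would force the distance from $K_\chi^{(i)}$ to exceed $2r+1 > r+1+T$. Both facts rely on the inequality $T < r$ guaranteed by $T = \floor{2r/3}$ when $r \ge 3$.

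The main obstacle is then the verification that $\phi$ is a graph homomorphism. This requires a case analysis over the edge types in the $r$-join (intra-layer and inter-layer edges in the tensor slabs, the boundary edges $E_0$ to $V(G_{j-1})$, the boundary edges $E_{r+1}$ to $K_\chi^{(j)}$, and edges internal to each clique), combined with \cref{remark:projection-homomorphisms} which states that each single-step projection is a homomorphism. I expect this step to be technical but conceptually routine: the recursive definition of $\phi$ on the first coordinate reduces homomorphism checks across joins to homomorphism checks one join level lower, where the induction hypothesis applies. The one delicate subcase is an edge between a node at a layer boundary ($t=1$ or $t=2r$) and its neighbor in the adjacent clique or ambient graph $V(G_{j-1})$, which must be verified using that $h \ne h'$ in $K_\chi^{(j)}$-edges respects the analogous nonequality in $K_\chi^{(i)}$ after projection.
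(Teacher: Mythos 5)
Your proposal is correct in substance, but it takes a genuinely different route from the paper. The paper does not use metric balls around the cliques: it builds the cover recursively along the join structure, taking for $k=2$ the ``slabs'' $K_\chi^{(1)}$ together with tensor layers $1,\dots,T+2$ and $K_\chi^{(2)}$ together with layers $T+1,\dots,2r$, and in the inductive step extending each $G_{k-1}^{(i)}$ by the first $T+2$ layers above it while the new element $G_k^{(k)}$ takes layers $T+1,\dots,2r$ plus $K_\chi^{(k)}$. With that cover, property (iii) holds essentially by construction (the slabs overlap in two layers), and property (i) needs only one application of a single projection homomorphism per induction step. Your ball cover $G_k^{(i)}=G_k[\NN_{r+1}(K_\chi^{(i)})]$ buys a more uniform, non-recursive definition and makes (ii) and (iv) one-liners (connectivity via \cref{lem:graph:connected-induced-subgraph}; the distance-exactly-$T$ node via the triangle inequality along a geodesic between cliques, using $r+1+T\le 2r<2r+1$), but it shifts the work elsewhere: (iii) now needs the ``descend while steering'' argument showing every vertex is within distance $r$ of some clique, and (i) needs the composite of projections $\proj{K_\chi^{(i)}}\circ\proj{G_i}\circ\cdots\circ\proj{G_{k-1}}$ to be well defined and edge-preserving on the whole radius-$(r+1+T)$ ball. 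Your claims (a) and (b) are exactly the right supporting facts, and they do hold for the stated reason: since any path of length at most $r+1+T\le 2r$ ending at $K_\chi^{(i)}$ cannot pass through another clique or reach $V(G_{i-1})$ (both are at distance at least $2r+1$), the ball avoids them, is closed under the successive projections, and the iterated first-coordinate recursion terminates in $L_{i,*}\cup K_\chi^{(i)}$. Bogdanov's local-chromatic-number statement alone would not cover a ball of radius $r+1+T>r$, so your explicit homomorphism is genuinely needed there — and it works; just be sure to also note explicitly that the edge cover of $G_k$ (not only the vertex cover) follows from your strengthened form of (iii), since each $G_k^{(i)}$ is an induced subgraph.
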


\begin{proof}
    We prove the thesis by induction on \(k\).
    Remember that \(G_2\) is obtained by the \(2r\)-join of two disjoint copies \(K_\chi^{(1)}\), \(K_\chi^{(2)}\) of \(K_\chi\).
    Assume \(K_\chi^{(1)}\) is connected to the first copy of \(K_\chi^{(1)} \times K_\chi^{(2)}\), and \(K_\chi^{(2)}\) to the last.
    Let 
    \begin{align*}
        V_2^{(1)} & = V(K_\chi^{(1)}) \cup V(K_\chi^{(1)}) \times  V(K_\chi^{(2)}) \times \left\{1,\dots, T+2\right\}; \\
        V_2^{(2)} & = V(K_\chi^{(2)}) \cup V(K_\chi^{(1)}) \times  V(K_\chi^{(2)}) \times \left\{T+1, \dots, 2r \right\}.
    \end{align*}
    Consider the graphs \(G_2^{(1)} = G_2 [{V_2^{(1)}}]\) and \(G_2^{(2)}= G_2 [{V_2^{(2)}}]\).
    The cover property and properties (i)-(iii) are straightforward.
    Clearly, \(G_2 = \cup_{i \in [2]} \ G_2^{(i)}\).
    Furthermore, \(\XX(G_{2} [{\NN_T(G_{2}^{(1)})}]) = \XX(G_{2} [{\NN_T(G_{2}^{(2)})}]) = \chi\) 
    as the projections \(\proj{K_\chi^{(1)}} [{\NN_T(G_{2}^{(1)})}] ,\proj{K_\chi^{(2)}} [{\NN_T(G_{2}^{(2)})}] \) are homomorphisms.
    Moreover, it is easily verifiable that \(G_2^{(1)}\) and \(G_2^{(2)}\) are connected graphs:
    Observe that \(G_{2}^{(1)} =( K_\chi^{(1)} \rjoin{2r} K_\chi^{(2)}) [{\NN_{T+2}(K_\chi^{(1)})}] \).
    As both \(K_\chi^{(1)}\) (by the inductive hypothesis) and \(K_\chi^{(2)}\) are connected, \cref{lem:gadget:no-isolated-nodes} implies their join is connected.
    Then, \cref{lem:graph:connected-induced-subgraph} implies \(G_{2}^{(1)}\) is connected.
    The same applies for  \(G_{2}^{(2)}\) by observing that \(G_{k}^{(2)} = (K_\chi^{(1)} \rjoin{2r} K_\chi^{(2)}) [{\NN_{2r - T}(K_\chi^{(2)})}] \).
    As for property (iv), consider any two nodes \(u \in V(K_\chi^{(1)}) \times  V(K_\chi^{(2)}) \times \left\{2T+2\right\}\) and \(v \in  V(K_\chi^{(1)}) \times  V(K_\chi^{(2)}) \times \left\{1\right\}\).
    Clearly, \(u \in G_2[\NN_T(G_2^{(1)})]\) and has distance \(T\) from \(G_2^{(1)}\), while \(v \in G_2[\NN_T(G_2^{(2)})]\) and has distance \(T\) from \(G_2^{(2)}\).

    Let \(k \ge 3\) and assume the thesis is true for \(G_{k-1}\).
    We now construct the subgraph cover of \(G_k = G_{k-1} \rjoin{2r} K_\chi\).
    For \(i = 1, \dots, k-1\), we define \(G_k^{(i)}\) by 
    \[
        G_k^{(i)} = G_{k-1}^{(i)} \cup \left(G_k [{\NN_{T+2}(G_{k-1}^{(i)})}] \ \cap \ G_k [{V(G_{k-1}^{(i)})\times V(K_\chi^{(k)}) \times \{1,\dots, T+2\}}]\right)
    \]
    Then, we define
    \[
        G_k^{(k)} = G_k [{V(G_{k-1})\times V(K_\chi^{(k)}) \times \{T+1,\dots,2r\}}] \cup K_\chi^{(k)}.
    \]
    We now prove that the family \(\{G_k^{(i)}\}_{i \in [k]}\) respects properties (i)-(iv).

    \paragraph{Subgraph covering.} 
    Any node \(v \in(V(G_{k-1}) \cup V(K_\chi^{(k)}))\) belongs either to \((\cup_{i \in [k-1]} G_k^{(i)}) \) (by observing that the latter contains \(V(G_{k-1})\) and by using the inductive the hypothesis on the subgraph covering) or to \( G_k^{(k)}\) (which contains \(V(K_\chi^{(k)})\) by construction). 
    Consider any node \(v \in V(G_k) \setminus (V(G_{k-1}) \cup V(K_\chi^{(k)}))\). 
    Then \(v = (v_1, v_2, j)\) for \(v_1 \in V(G_{k-1})\), \(v_2 \in V(K_\chi^{(k)})\), and \(j \in [2r]\).
    By the inductive hypothesis, there exists \(i \in [k-1]\) such that \(v_1 \in G_{k-1}^{(i)}\).
    Then, \((v_1, v_2,j) \in V(G_{k-1}^{(i)}) \times V(K_\chi^{(k)}) \times \{j\}\).
    As \(G_{k-1}^{(i)}\) and \(K_\chi^{(k)}\) contain no isolated nodes (inductive hypothesis (ii)), there exist \(u_1 \in V(G_{k-1}^{(i)})\) adjacent to \(v_1\) and \(u_2 \in V(K_\chi^{(k)})\) adjacent to \(v_2\).
    The path \(w_0w_1\dots w_j\) defined by \(w_0 = u_1\), \(w_k = (v_1, v_2, k)\) for \(1 \le k \le j\) odd, \(w_k = (u_1, u_2, k)\) for \(2 \le k \le j\) even connects \(u_1\) to \(v\) if \(j\) is odd.
    The path \(w_0w_1\dots w_j\) defined by \(w_0 = v_1\), \(w_k = (u_1, u_2, k)\) for \(1 \le k \le j\) odd, \(w_k = (v_1, v_2, k)\) for \(2 \le k \le j\) even connects \(v_1\) to \(v\) if \(j\) is even.
    Hence, \(v \in \NN_j(G_{k-1}^{(i)})\).
    If \(1 \le j \le T+2\), then \(u \in V(G_{k}^{(i)})\).
    If, instead, \(T+2 \le j \le 2r\), then \(u \in V(G_k^{(k)})\). 

    Now consider any two nodes \(u,v\) which are connected in \(G_k\). 
    If \(u,v \in V(G_{k-1})\) or \(u,v \in V(K_\chi^{(k)})\) we have that \(\{u,v\} \in E(\cup_{i \in [k-1]} G_k^{(i)})\) or \(\{u,v\} \in E(G_k^{(k)})\), respectively.
    Suppose \(u \in V(G_{k-1})\) but \(v \notin V(G_{k-1}) \). 
    Then, \(v = (v_1, v_2, 1)\) for some \(v_1 \in V(G_{k-1})\) and some \(v_2 \in V(K_\chi^{(k)})\).
    As \(u\) and \(v\) are connected, it means that \(uv_1\) is an edge in \(G_{k-1}\). 
    By the inductive hypothesis on the subgraph covering, there exists \(i \in [k-1]\) such that \(uv_1 \in G_{k-1}^{(i)}\). 
    Hence, \(uv \in E(G_k^{(i)})\).
    If \(u \in V(K_\chi^{(k)})\) but \(v \notin V(K_\chi^{(k)})\), \(uv \in E(G_k^{(k)})\).
    Suppose now that \(u,v \notin V(G_{k-1}) \cup V(K_\chi^{(k)})\). 
    Then, there exist \(u_1,v_1 \in V(G_{k-1})\), \(u_2,v_2 \in V(K_\chi^{(k)})\), and \(j_u,j_j \in [2r]\) with \(\abs{j_u - j_v} \le 1\) such that \(u = (u_1, u_2, j_u)\) and \(v = (v_1, v_2, j_v)\).
    Furthermore, as \(uv\) is an edge of \(G_k\), it holds that \(u_1u_2\) is an edge in \(G_{k-1}\), and \(v_1v_2\) is an edge in \(K_\chi^{(k)}\).
    From the inductive hypothesis on the subgraph covering, there exists \(i \in [k-1]\) such that \(G_{k-1}^{(i)}\) contains \(u_1u_2\).
    We have that \(u,v \in V(G_k [{V(G_{k-1}^{(i)}) \times V(K_\chi^{(k)}) \times \{j_u, j_v\}}])\).
    Let \(j = \max(j_u, j_v) \)
    Then, \(uv \in E(G_k [{\NN_{j}(G_{k-1}^{(i)})}])\).
    Hence, if \(j \le T+2\), \(uv \in E(G_k^{(i)})\). 
    If, instead, \(T+1\le j \le 2r\), \(uv \in E(G_k^{(k)})\).

    \paragraph{Property (i).}
    Consider \(G_k^{(i)}\) for \(i < k\). 
    The function \(f_i = \proj{G_{k-1}} \restriction_{\NN_T(G_k^{(i)})} \) is a homomorphism from \(G_k [{\NN_T(G_k^{(i)})}]\) to \(G_{k-1} [{\NN_T(G_{k-1}^{(i)})}]\). 
    As \(G_{k-1} [{\NN_T(G_{k-1}^{(i)})}]\) is \(\chi\)-colorable by the inductive hypothesis (i), so it is \(G_k [{\NN_T(G_k^{(i)})}]\).
    Since \(G_k [{\NN_T(G_k^{(i)})}]\) contains \(K_\chi^{(i)}\) as a subgraph, \(\chi\) colors are also necessary.

    Similarly, consider \(G_k [{\NN_T(G_k^{(k)})}]\). 
    Then, \(f_{k} = \proj{K_\chi^{(k)}} \restriction_{[\NN_T(G_k^{(k)})]} \) is a homomorphism from \(G_k [{\NN_T(G_k^{(k)})}]\) to \(K_\chi^{(k)}\).
    Hence, \(G_k [{\NN_T(G_k^{(k)})}]\)  is \(\chi\)-colorable. 
    As \(G_k [{\NN_T(G_k^{(k)})}]\)  contains \(K_\chi^{(k)}\), its chromatic number is \(\chi\).

    \paragraph{Property (ii).}
    Fix \(i \in [k-1]\).
    Observe that \(G_{k}^{(i)} = G_{k-1}^{(i)} \rjoin{2r} K_\chi^{(k)} [{\NN_{T+1}(G_{k-1}^{(i)})}] \).
    As both \(G_{k-1}^{(i)}\) (by the inductive hypothesis (ii)) and \(K_\chi^{(k)}\) are connected, their join is connected.
    Then, \cref{lem:graph:connected-induced-subgraph} implies that \(G_{k}^{(i)}\) is connected.
    The same applies for  \(G_{k}^{(k)}\) by observing that \(G_{k}^{(i)} = G_{k-1}^{(i)} \rjoin{2r} K_\chi^{(k)} [{\NN_{2r - T}(K_\chi^{(k)})}] \).
   
    \paragraph{Property (iii).}
    Consider any node \(v \in V(G_{k-1}) \cup V(G_{k-1}) \times V(K_\chi^{(k)}) \times \{1, \dots, T+1\} \).
    Let \(v ' = \proj{G_{k-1}}({v})\).
    By the inductive hypothesis (iii), the set of nodes \(\NN_1(v') \cap V(G_{k-1})\) is contained in some \(G_{k-1}^{(i)}\) for \(i \in [k-1]\).
    By definition of \(G_{k}^{(i)}\), it follows that \(\NN_1(v) \subseteq V(G_{k}^{(i)})\).
    Now, consider any node \(v \in V(K_\chi^{(k)}) \cup V(G_{k-1}) \times V(K_\chi^{(k)}) \times \{T+2, \dotsm 2r\} \).
    By definition of \(G_k^{(k)}\), we have that \(\NN_1(v) \subseteq G_k^{(k)}\).

    \paragraph{Property (iv).} 
    Fix \(1 \le i \neq j \le k-1\). 
    By the inductive hypothesis (iv), there exists a node \(u \in V(G_{k-1}[ \NN_T (G_{k-1}^{(i)})])\) from \(G_{k-1}^{(i)}\).    
    By definition of \(G_k^{(i)}\), we have that \(u\) has distance \(T\) from \(G_k^{(i)}\).
    Furthermore, observe that any node in \(V(G_{k-1}) \times V(K_\chi^{(k)}) \times {1}\) has distance \(T\) from \(G_k^{k}\), concluding the proof.
\end{proof}

\begin{remark}
    We highlight that the subgraph cover from \cref{lem:gadget:partition} is not unique: other families of graphs could be used obtaining a shorter and simpler proof at the expense of a higher number of graphs in the family (e.g., a number of subgraphs that is exponential in \(k\)).
    The latter would result in a worse bound on \(T\) in the next corollary.
\end{remark}

We are now ready to show that the family of \(\chi\)-chromatic graphs admits a cheating graph for the \(c\)-coloring graphs problem.

\begin{corollary}\label{cor:approx-graph-col:cheatgraph-family}
    Let \(\chi \ge 2\), \(c \ge \chi\) be integers, and \(k = \floor{\frac{c-1}{\chi - 1}}\).
    Consider \(\FF\) to be the family of all connected \(\chi\)-chromatic graphs, and \(\PP\) to be the problem of \(c\)-coloring graphs.
    For every \(N \ge 1\) and \(n \ge ((6\chi + 1)^{k+1} - 1)N/6\), there exists a value \(T\) with
    \[
        T = \myTheta{ \frac{1}{\chi^{1 + \frac{1}{k}}} \left( \frac{n}{N} \right) ^{ \frac{1}{k}} }  
    \]
    such that \(\FF\) admits an \((n,k+1, N, T)\)-cheating graph for \(\PP\).
\end{corollary}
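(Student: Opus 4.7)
The plan is to show that the Bogdanov graph $G_{k+1}$ from \cref{def:graph:gadget}, built with $k+1$ copies of $K_\chi$ and a suitably chosen radius $r \ge 3$, is an $(n, k+1, N, T)$-cheating graph for $(\problem, \FF)$. I would pick $r$ as the largest integer $\ge 3$ for which $|V(G_{k+1})| = ((2r\chi+1)^{k+1}-1)/(2r) \le \floor{n/N}$, shrunk by a constant factor if needed to reserve a small vertex budget for the glueing construction below. The hypothesis $n \ge ((6\chi+1)^{k+1}-1)N/6$ guarantees that $r=3$ is always admissible, and a routine estimate from $(2r\chi)^k \chi \lesssim n/N$ shows that the maximal $r$ is $\Theta(\chi^{-1-1/k}(n/N)^{1/k})$, so $T := \floor{2r/3}$ has the order claimed. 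With this $r$ fixed, property (i) of \cref{def:cheating-graph} is immediate: \cref{thm:graph-theory:lb} combined with \cref{remark:bipartite-case} gives $\XX(G_{k+1}) \ge (k+1)(\chi-1)+1$, and $k = \floor{(c-1)/(\chi-1)}$ implies $(k+1)(\chi-1) \ge c$, so $\XX(G_{k+1}) > c$ and $c$-coloring is infeasible on $G_{k+1}$. Property (ii)(a) is exactly \cref{lem:gadget:partition}.(iii), since vertex $c$-coloring has checking radius $1$.

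The bulk of the work is property (ii)(b). For a given $\xx_N = (x_1, \dots, x_N) \in [k+1]^N$, I would build $H_{\xx_N}$ as follows. Take $N$ vertex-disjoint copies $C_1, \dots, C_N$ with $C_j \cong G_{k+1}[\NN_T(G_{k+1}^{(x_j)})]$; by \cref{lem:gadget:partition}.(i) each $C_j$ is $\chi$-chromatic, and by \cref{lem:gadget:partition}.(ii) combined with \cref{lem:graph:connected-induced-subgraph} each $C_j$ is connected. The desired $\tilde{H}_{\xx_N}$ is the disjoint union of the embedded copies of $G_{k+1}^{(x_j)}$ inside the $C_j$'s. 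In each $C_j$ pick a vertex $b_j$ at distance exactly $T$ from the embedded $G_{k+1}^{(x_j)}$, which exists by \cref{lem:gadget:partition}.(iv). Introduce a single fresh hub vertex $h$ together with edges $h b_j$ for all $j$, and extend with a pendant path attached at $h$ until the total number of vertices equals $n$. Any newly introduced vertex reaches $G_{k+1}^{(x_j)}$ only through the edge $h b_j$, so it sits at distance $\ge T+1$ from every $G_{k+1}^{(x_j)}$; hence no new vertex lies in $\NN_T(\tilde{H}_{\xx_N})$, and the induced subgraph of $H_{\xx_N}$ on $\NN_T(\tilde{H}_{\xx_N})$ is exactly $\bigsqcup_j C_j$ as required. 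Connectivity is clear, and $\chi$-chromaticity follows by permuting colors within each $C_j$ so that every $b_j$ receives a common fixed color, coloring $h$ with a different one of the $\chi \ge 2$ available colors, and extending along the bipartite pendant path.

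The main obstacle is precisely this glueing step: every added edge must have at least one endpoint outside $\NN_T(\tilde{H}_{\xx_N})$, which forbids any direct edge between two $C_j$'s (all of whose vertices sit inside $\NN_T$) and forces the connections to go through fresh vertices at distance $> T$ from every $G_{k+1}^{(x_j)}$. The distance-$T$ boundary vertices guaranteed by \cref{lem:gadget:partition}.(iv) are exactly what make the hub-and-pendant construction possible; without them, one would not be able to attach external structure without corrupting the $T$-neighborhood isomorphism demanded by \cref{def:cheating-graph}.(ii)(b). The $O(1)$ slack needed in the vertex budget for the hub and padding is obtained by scaling $r$ down by a constant factor, which does not change the asymptotic order of $T$ and yields the stated cheating-graph parameters $(n, k+1, N, T)$ with $T = \Theta(\chi^{-1-1/k}(n/N)^{1/k})$.
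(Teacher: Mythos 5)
Your proposal is correct and follows essentially the same route as the paper: the same Bogdanov graph \(G_{k+1}\) with \(r\) chosen so that \(\abs{V(G_{k+1})} \le \lfloor n/N \rfloor\) (giving \(T = \Theta(\chi^{-1-1/k}(n/N)^{1/k})\)), non-\(c\)-colorability from \cref{thm:graph-theory:lb}, property (ii)(a) from \cref{lem:gadget:partition}.(iii), and property (ii)(b) obtained by gluing the \(T\)-neighborhoods of the covering subgraphs at the distance-\(T\) boundary nodes of \cref{lem:gadget:partition}.(iv) and padding with a path to reach exactly \(n\) vertices. The only deviation is cosmetic: you join the copies through a fresh hub vertex lying outside every \(T\)-neighborhood (the paper instead adds direct edges between distance-\(T\) boundary nodes of consecutive copies, which your variant actually handles more cleanly with respect to the induced-subgraph isomorphism), and the one spare vertex your hub needs always exists—independently of your ``shrink \(r\)'' fallback, which is unavailable when \(r=3\) is forced—because each \(\NN_T(G_{k+1}^{(x_j)})\) is a proper subgraph of \(G_{k+1}\), its chromatic number being \(\chi\) while that of \(G_{k+1}\) exceeds \(c \ge \chi\).
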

\begin{proof}
    There exists a unique integer \(r \ge 3\) such that
    \[
        \frac{(2r\chi + 1)^{k+1} - 1}{2r} \le \frac{n}{N} < \frac{(2r\chi + 2\chi + 1)^{k+1} - 1}{2r + 2}.
    \]

    We claim that the graph \(G_{k+1}\) defined in \cref{def:graph:gadget} by iterating \(2r\)-join operations is an \((n,{k+1},N,T)\)-cheating graph for \((\PP,\FF)\).
    Clearly, \(\PP\) is not solvable on \(G_{k+1}\), while property \cref{def:cheating-graph}.(ii).(a) follows by \cref{lem:gadget:partition}.(iii) (since coloring is an LVL problem with checking radius \(t = 1\)).

    We now prove that \cref{def:cheating-graph}.(ii).(b) holds as well.
    Consider \(N\) copies \(G_{{k+1}, 1}. \dots. G_{{k+1}, N}\) of the graph \(G_{k+1}\).
    For each \(j \in [N]\), \cref{lem:gadget:partition} gives us a subgraph cover \(\{G_{{k+1},j}^{(i)}\}_{i \in [{k+1}]}\) of \(G_{k,j}\) with properties \cref{lem:gadget:partition}.(i) and (iv) verified for \(T = \floor{\frac{2r}{3}}\).
    Notice that 
    \[
        T = \myTheta{ \frac{1}{\chi^{1 + \frac{1}{k}}} \left( \frac{n}{N} \right) ^{ \frac{1}{k}} }  
        .
    \]

    For any choice of indices \(\xx_N = (x_1, \dots, x_N)\in [{k+1}]^N\), we now show that there exists a connected graph \(H_{\xx_N} \in \FF\) on \(n\) nodes that admits a subgraph \(\tilde{H}_{\xx_N}\) such that
    \begin{enumerate}[(1)]
        \item \(H_{\xx_N} [{\NN_T(\tilde{H}_{\xx_N})}]\) is isomorphic to \( \bigsqcup_{j \in [N]} G_{{k+1},j} [{\NN_T(G_{{k+1},j}^{(x_j)})}] \);
        \item \(\XX(H_{\xx_N}) = \chi\).
    \end{enumerate} 
    The vertex set \(V(H_{\xx_N})\) is \(V(\sqcup_{j \in [ N]} G_{{k+1},j})\) together with \(n - N \cdot \frac{(2r\chi + 1)^{k+1} - 1}{2r} \le \frac{n}{N}\) extra nodes.
    We take \(H_{\xx_N}\) to be the disjoint union of \(G_{{k+1},j} [{\NN_T(G_{{k+1},j}^{(x_j)})}]\) for all \(j \in [N]\) where, for each \(j \in [N-1]\) we add an edge between a node \( v_j \in G_{{k+1},j} [{\NN_T(G_{{k+1},j}^{(x_j)})}]\) and a node \(  v_{j+1} \in G_{{k+1},j+1} [{\NN_T(G_{{k+1},j+1}^{(x_{j+1})})}]\) such that \(\dist(v_j, G_{{k+1},j}^{(x_j)}) = T\) and \(\dist(v_{j+1}, G_{{k+1},{j+1}}^{(x_{j+1})}) = T\) (such nodes exist because of \cref{lem:gadget:partition}.(iv)).
    All remaining nodes form a path of which one endpoint is connected to any node in \(G_{{k+1},N} [\NN_T({G_{{k+1},N}^{(x_N)}})]\) at distance \(T\) from  \({G_{{k+1},N}^{(x_N)}}\).
    Property (1) follows by construction.
    As for property (2), we observe that the chromatic number of \(H_{\xx_N}\) is still \(\chi\) as each component \(\NN_T(G_{{k+1},j}^{(x_j)})\) is \(\chi\)-chromatic by \cref{lem:gadget:partition}.(i), and \(\chi \ge 2\).
    Furthermore, \(H_{\xx_N}\) is connected by \cref{lem:gadget:partition}.(ii) combined with \cref{lem:graph:connected-induced-subgraph} and observing that connected disjoint connected components through paths.
\end{proof}

The proof of our main lower bound now follows easily.

\begin{proof}[Proof of \cref{thm:lb-coloring}]
    Let \(k = \alpha\) and
    \[
        N = \ceil*{\frac{\log \frac{1}{\varepsilon}}{\log \left(1 + \frac{1}{k}\right)}}. 
    \]
    By \cref{cor:approx-graph-col:cheatgraph-family}, \(\FF\) admits an \((n,{k+1},N,T)\)-cheating graph for the \(c\)-coloring graphs problem for any \(n \ge ((6\chi+1)^{k+1}-1)N/6\) and for some 
    \[
        T = \myTheta{\frac{1}{\chi^{1 + \frac{1}{k}}} \cdot \left(\frac{n \log ( 1 + \frac{1}{k})}{\log \frac{1}{\varepsilon}}\right)^{\frac{1}{k}}}.    
    \]
    \cref{thm:lb-technique} implies that there is a connected graph \(H \in \FF\) on \(n\) nodes such that the probability that any outcome \(\outcome\) with locality \(T\) is \(c\)-coloring \(H\) is at most \((1-1/(k+1))^N\). 
    We remind the reader that the graph can be chosen as in \cref{def:cheating-graph}.(ii).(b), and \cref{cor:approx-graph-col:cheatgraph-family} implies that such graph is connected.
    By definition of \(k\) and \(N\), this probability is at most \(\varepsilon\). 
    By observing that \(\log^{\frac 1k} ( 1+ \frac 1k) = \Theta(1)\), we get the thesis.      
\end{proof}

\subsubsection{\texorpdfstring{Notes on the graph \(G_k\)}{Remarks about the graph Gk}}\label{sec:remarks-graph-theoretic-constr}
This section is dedicated to the reader that is interested in the topological elements underlying the construction of the graph \(G_k\) in \cite{bogdanov2013}.
Familiarity with the notion of topological space, join of topological spaces, homotopy equivalence, abstract simplicial complex, and geometric realization of an abstract simplicial complex is required. 
Every graph \(G\) can be associated with an abstract simplicial complex \(\nc{G}\) called the \emph{neighborhood complex}, defined as follows:
\[
    \nc{G} \coloneqq \left\{ A \subseteq V(G) \ \st \ \exists v \in V(G) \text{ such that } \forall u \in A, v \in \NN(u)  \right\},
\]
that is, \(\nc{G}\) consists in all subsets \(A \subseteq V\) of nodes that have a common neighbor.
Let us denote the geometric realization of \(\nc{G}\) by \(\norm{\nc{G}}\) (the geometric realization is unique up to homeomorphisms).
We say that a non-empty topological space \(X\) is \(m\)-connected if each continuous map \(\pi : S^{i-1} \to X\) extends to a continuous map \(\bar{\pi} : D^i \to X \) for each \(i = 1, \dots, m\), where \(S^{i-1} = \left\{x \in \real^i \ \st \ \norm{x}_2 = 1  \right\}\) is the \((i-1)\)-dimensional sphere, and \(D^i = \left\{x \in \real^i \ \st \ \norm{x}_2 \le 1  \right\}\) is the \(i\)-dimensional disk.

\textcite{lovasz1978} proved the following theorem.
\begin{theorem}[\cite{lovasz1978}]\label{thm:lovasz}
    Let \(G\) be any graph such that \(\nc{G}\) is non-empty.
    If \(\norm{\nc{G}}\) is \(m\)-connected, then \(\XX(G) \ge m+3\).
\end{theorem}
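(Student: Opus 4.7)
The plan is Lov\'asz's original topological argument through the Borsuk--Ulam theorem. I would argue by contradiction: assume $\XX(G) \le m+2$, fix a proper coloring $c : V(G) \to \{1,\dots,m+2\}$, and derive a contradiction with the $m$-connectedness of $\norm{\nc{G}}$ by constructing a $\mathbb{Z}_2$-equivariant continuous map from an auxiliary free $\mathbb{Z}_2$-space associated to $G$ into $S^m$.

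First, I would pass from $\nc{G}$ to an auxiliary simplicial complex that carries a free $\mathbb{Z}_2$-action and whose connectivity is controlled by that of $\nc{G}$. A convenient choice is a box-type complex $B(G)$, whose simplices are pairs $(A,B)$ of disjoint vertex subsets such that every vertex in $A$ is adjacent in $G$ to every vertex in $B$, with $\mathbb{Z}_2$ acting freely by swapping $A$ and $B$. The topological input to invoke here is that $\norm{B(G)}$ is homotopy equivalent to a suspension of $\norm{\nc{G}}$, so that $m$-connectedness of $\norm{\nc{G}}$ propagates to (at least) $m$-connectedness of $\norm{B(G)}$ together with the free $\mathbb{Z}_2$-action.

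Next, I would use $c$ to construct a continuous $\mathbb{Z}_2$-equivariant map $\Phi : \norm{B(G)} \to S^m$. The construction assigns to a vertex $(A,B)$ a point in $\real^{m+2}$ whose $i$-th coordinate is positive when color $i$ is used in $A$ and negative when color $i$ is used in $B$; because $c$ is proper and the common-neighborhood condition forces the color patterns of $A$ and $B$ to be disjoint, no cancellation occurs, the image avoids the origin, and after normalization (and restriction to the equator of $S^{m+1}$ cut out by the sum-to-zero hyperplane) one lands in $S^m$. Extending linearly over simplices and checking that the swap $(A,B) \leftrightarrow (B,A)$ corresponds to the antipodal map on $S^m$ yields the required equivariant map.

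Finally, I would invoke the equivariant Borsuk--Ulam theorem: an $m$-connected space with a free $\mathbb{Z}_2$-action does not admit any equivariant continuous map into $S^m$ endowed with the antipodal action. This contradicts the existence of $\Phi$, forcing $\XX(G) \ge m+3$. The main obstacle is the bookkeeping in the middle step: one must fix a precise version of the box complex and verify carefully that the connectivity of $\norm{\nc{G}}$ transfers with the correct numerical shift, since being off by a single suspension would alter the $+3$ in the statement. Everything else reduces to standard equivariant obstruction theory.
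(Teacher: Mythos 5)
This theorem is not proved in the paper at all: it is imported verbatim from \textcite{lovasz1978} and used as a black box in the discussion of the topological background behind \(G_k\) (\cref{sec:remarks-graph-theoretic-constr}), so there is no in-paper proof to compare against. Your sketch is the standard topological argument for the Lov\'asz bound (Borsuk--Ulam via a box complex), and its skeleton is sound; the one substantive caveat is exactly the bookkeeping you flag, and as written your numbers do not yet line up. In the usual formulation there are two consistent chains: (a) the box complex whose simplices are pairs \((A,B)\) with the additional requirement that each side has a common neighbor is homotopy equivalent to \(\norm{\nc{G}}\) itself (no suspension), and a proper \(c\)-coloring yields a \(\mathbb{Z}_2\)-equivariant map into \(S^{c-2}\); or (b) the larger box complex without the common-neighbor condition is \(\mathbb{Z}_2\)-homotopy equivalent to the \emph{suspension} of the former, hence \((m+1)\)-connected when \(\nc{G}\) is \(m\)-connected, and the coloring map then lands in \(S^{c-1}\). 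Either chain, combined with the fact that an \(m\)-connected free \(\mathbb{Z}_2\)-space admits no equivariant map to \(S^m\) with the antipodal action, gives \(c \ge m+3\). Your proposal mixes the two variants (you assert the suspension relation but claim only \(m\)-connectedness for the box complex and target \(S^m\)), so to make the argument close you must commit to one variant and carry its shift consistently; with that fixed, the proof is the classical one, in the modern box-complex packaging of \textcite{matousek2010}, rather than anything specific to this paper.
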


\citeauthor{lovasz1978}'s result provides a clear and effective tool to bound from below the chromatic number of a graph.
\textcite{bogdanov2013} linked the \(r\)-join operation between graphs to the join operation of topological spaces, proving the following lemma.

\begin{lemma}[\cite{bogdanov2013}]\label{lem:bogdanov}
    Let \(G,H\) be any two graphs, and \(r \in \natPos\).
    Then, \(\norm{\nc{G \rjoin{r} H}} \simeq \norm{\nc{G}} \star \norm{\nc{H}}\), where \(\simeq\) means homotopy equivalent and the latter \(\star\) operator represents the join operation between topological spaces.
\end{lemma}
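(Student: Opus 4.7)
The plan is to prove the homotopy equivalence by organizing $\nc{G \rjoin{r} H}$ according to the level structure of the $r$-join and applying a Mayer--Vietoris / Nerve Lemma argument. Write $L_0 = V(G)$, $L_{r+1} = V(H)$, and $L_i = V(G) \times V(H) \times \{i\}$ for $1 \le i \le r$. The key structural observation is that every edge of $G \rjoin{r} H$ connects vertices at levels differing by at most $1$, so the open neighborhood $N(v)$ of any vertex $v \in L_i$ lies in $L_{i-1} \cup L_i \cup L_{i+1}$ (with the obvious truncations at the two ends). Moreover, the projections $\proj{G}$ and $\proj{H}$ from \cref{def:projection-operators} are graph homomorphisms and hence induce simplicial maps on the corresponding neighborhood complexes; these will be the natural candidates for the retractions onto $\nc{G}$ and $\nc{H}$.

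I would then cover $\nc{G \rjoin{r} H}$ by two subcomplexes: $\mathcal{A}$, generated by all facets $N(v)$ with $v$ in levels $L_0, \ldots, L_{\lceil r/2 \rceil}$, and $\mathcal{B}$, generated by facets $N(v)$ with $v$ in levels $L_{\lfloor (r+2)/2 \rfloor}, \ldots, L_{r+1}$. The technical claims to verify are: (i) $\mathcal{A}$ deformation retracts onto a subcomplex isomorphic to $\nc{G}$, the retraction being essentially the $\proj{G}$-induced map followed by a contraction of the level-$1$ thickening of $\nc{G}$; (ii) symmetrically, $\mathcal{B}$ retracts onto a copy of $\nc{H}$; and (iii) the intersection $\mathcal{A} \cap \mathcal{B}$ is contractible, because its facets are generated by common neighbors lying in a narrow middle strip of $L_i$'s with tensor-product-like structure that admits a straight-line contraction along the level index $i$. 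Combining these three facts via Mayer--Vietoris (or the Nerve Lemma applied to $\{\mathcal{A}, \mathcal{B}\}$) identifies $\norm{\nc{G \rjoin{r} H}}$ with the homotopy pushout of $\nc{G} \leftarrow \ast \to \nc{H}$, which is precisely the topological join $\norm{\nc{G}} \star \norm{\nc{H}}$.

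The main obstacle is making the deformation retracts in (i), (ii) and the contractibility in (iii) fully rigorous. The neighborhoods bridging $L_0$ to $L_1$ (and symmetrically $L_r$ to $L_{r+1}$) are asymmetric: they come from the edge sets $E_0$ and $E_{r+1}$ in the definition of the $r$-join, not from the intra-level tensor-product edges, so a naive projection does not immediately furnish a retraction. A convenient way around this is a discrete-Morse-theoretic matching argument: for each pair $(g, h) \in V(G) \times V(H)$, pair $(g, h, 1)$ with $g$ whenever they lie in a common facet, witnessing a collapse of the level-$1$ thickening onto $\nc{G}$; apply analogous pairings on the $H$-side and along the intermediate levels $L_2, \ldots, L_{r-1}$. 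Once all intermediate levels have been collapsed and only two end-pieces glued along a contractible meeting strip remain, the join structure becomes manifest, and one reads off $\norm{\nc{G \rjoin{r} H}} \simeq \norm{\nc{G}} \star \norm{\nc{H}}$.
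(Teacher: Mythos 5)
The paper does not actually prove this lemma---it is imported verbatim from \textcite{bogdanov2013}---so the only question is whether your argument is sound, and it has a genuine error at its core. You identify \(\norm{\nc{G \rjoin{r} H}}\) with the homotopy pushout of \(\norm{\nc{G}} \leftarrow \ast \to \norm{\nc{H}}\), asserting that the middle piece \(\mathcal{A} \cap \mathcal{B}\) is contractible and that this pushout ``is precisely the topological join.'' Neither half of that is right. The homotopy pushout over a point is the wedge \(\norm{\nc{G}} \vee \norm{\nc{H}}\), not the join; the join \(X \star Y\) is the homotopy pushout of \(X \xleftarrow{\;\pi_X\;} X \times Y \xrightarrow{\;\pi_Y\;} Y\) along the two projections. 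So for a gluing argument of the kind you propose to produce the join, the intersection \(\mathcal{A} \cap \mathcal{B}\) must carry the homotopy type of the \emph{product} \(\norm{\nc{G}} \times \norm{\nc{H}}\), with the inclusions into \(\mathcal{A} \simeq \norm{\nc{G}}\) and \(\mathcal{B} \simeq \norm{\nc{H}}\) homotopic to the projections. And indeed it does, rather than being contractible: the middle strip consists of consecutive copies of the tensor product \(G \times H\), whose neighborhood complex is homotopy equivalent to \(\nc{G} \times \nc{H}\); your ``straight-line contraction along the level index'' collapses only the level direction and leaves this product factor intact.

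A concrete sanity check shows the conclusion of your argument cannot stand: for \(G = H = K_\chi\) one has \(\norm{\nc{K_\chi}} \simeq S^{\chi-2}\), and the lemma (as used in the paper to drive \cref{thm:graph-theory:lb}) gives \(S^{\chi-2} \star S^{\chi-2} \simeq S^{2\chi-3}\), whereas your pushout over a point would give \(S^{\chi-2} \vee S^{\chi-2}\), which is \((2\chi-4)\)-connected only in trivial cases and would not yield the chromatic-number bound via \cref{thm:lovasz}. The outline is repairable---keep your subcomplexes \(\mathcal{A}\), \(\mathcal{B}\), but replace claim (iii) by the statement that \(\mathcal{A} \cap \mathcal{B}\) deformation retracts onto (a copy of) the neighborhood complex of the tensor product, identify that with \(\nc{G} \times \nc{H}\), verify that the two inclusions are the projections up to homotopy, and only then apply the gluing lemma. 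The discrete-Morse collapses you sketch for the end pieces would also need to be checked against the asymmetric edge sets \(E_0\) and \(E_{r+1}\), but that is a secondary issue compared with the wedge-versus-join confusion.
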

Note that, for any three topological spaces \(A\), \(A'\), and \(B\), if \(A \simeq A'\), then \(A \star B \simeq A' \star B\) \cite{matousek2010}.
\cref{thm:lovasz,lem:bogdanov} can be combined to construct graphs of high chromatic number.
E.g., if \(G = K^{(1)}_\chi\) and \(H = K^{(2)}_\chi\), that is, two disjoint copies of the complete graph of \(\chi\) nodes, we know that \(\norm{\nc{G}} \simeq \norm{\nc{H}} \simeq S^{\chi - 2} \), hence \(\norm{\nc{G \rjoin{r} H}} \simeq S^{\chi -2} \star S^{\chi -2} \simeq S^{2\chi - 3}\).
As \(S^{2\chi - 3}\) is \((2\chi - 4)\)-connected, then \(\XX(G \rjoin{r} H) \ge 2\chi - 1\).
By applying recursively the above procedure, one can understand why the graph \(G_k\) from \cref{def:graph:gadget} has chromatic number bounded from below by \(k(\chi - 1) + 1\).

As for the properties of the local chromatic number, \textcite{bogdanov2013} also showed the following simple result, which can be proved by using the projections \(\proj{G}\) and \(\proj{H}\), that immediately gives the desired result.

\begin{lemma}[\cite{bogdanov2013}]\label{lem:bogdanov:lclCHR}
    Let \(G,H\) be two graphs.
    For every \(r \in \natPos\) it holds that 
    \[
        \lclCHR{r}{(G \rjoin{2r} H)} = \min \{\lclCHR{r}(G), \lclCHR{r}(H)\}.  
    \]
\end{lemma}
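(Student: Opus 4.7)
I would follow the hint in the prose just before the lemma and exploit the two projection maps \(\proj{G}\) and \(\proj{H}\) from \cref{def:projection-operators}, which are graph homomorphisms on their respective domains and therefore \(1\)-Lipschitz, together with a geometric property of the \(2r\)-join.

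The first preparatory step is to verify that the graph distance in \(G \rjoin{2r} H\) between \(V(G)\) and \(V(H)\) is exactly \(2r + 1\). The upper bound is witnessed by an explicit path hopping one layer at a time through the \(2r\) middle layers; the lower bound follows from inspection of the three edge families \(E_0\), \(E_{1, r}\), \(E_{r + 1}\), each of which changes the layer index by at most one, so any path from \(V(G)\) to \(V(H)\) needs at least \(2r + 1\) edges. As a consequence, for every vertex \(v\) at layer \(i\) one has \(\dist(v, V(G)) + \dist(v, V(H)) \ge 2r + 1\), so \(\NN_r(v)\) intersects at most one of \(V(G)\), \(V(H)\). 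In particular \(\NN_r(v)\) lies entirely in the domain of at least one of the two projections, and of \emph{both} projections when \(v\) sits strictly in the middle layers.

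The next step transfers colorings. Since each \(\proj{X}\) (for \(X \in \{G, H\}\)) is a homomorphism and distance nonincreasing on its domain, the image of \(\NN_r(v)\) under \(\proj{X}\) sits inside \(\NN_r(\proj{X}(v)) \subseteq V(X)\). Any proper coloring of \(X[\NN_r(\proj{X}(v))]\) with \(c\) colors therefore lifts through \(\proj{X}\) to a proper coloring of \((G \rjoin{2r} H)[\NN_r(v)]\) with \(c\) colors, giving \(\chi\bigl((G \rjoin{2r} H)[\NN_r(v)]\bigr) \le \lclCHR{r}(X)\). For the matching lower bound in the other direction I would use that \(G\) and \(H\) embed as induced isometric subgraphs of \(G \rjoin{2r} H\) (again by the distance argument above, applied this time to paths internal to \(V(G)\) or \(V(H)\), since any detour out of and back into the endpoint copy costs at least as many edges as the corresponding direct path inside it), so every \(r\)-neighborhood inside either endpoint realizes its chromatic number inside the join.

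The main obstacle I foresee is precisely in the \(\le\) direction of the claimed identity, namely in ensuring that the right-hand side comes out as the \emph{minimum} rather than the maximum of the two local chromatic numbers. The projection argument of Step 3 bounds \(\chi\bigl((G \rjoin{2r} H)[\NN_r(v)]\bigr)\) only in terms of the side whose projection happens to be defined on \(\NN_r(v)\), so for a vertex sitting next to the side with the larger of the two local chromatic numbers one gets directly only the weaker bound. Producing instead a coloring of these ``worse'' balls using a palette of size \(\min\{\lclCHR{r}(G), \lclCHR{r}(H)\}\) requires some additional device beyond the two projections in isolation---for instance an auxiliary homomorphism folding the side with the larger local chromatic number onto the smaller one, or a direct combinatorial argument exploiting the tensor-product symmetry of the intermediate layers established in \cite{bogdanov2013}. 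This is where I expect essentially all of the real work of the proof to reside.
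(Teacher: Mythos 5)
Your plan follows exactly the route the paper has in mind: the paper offers no proof beyond the remark that the lemma ``can be proved by using the projections \(\proj{G}\) and \(\proj{H}\)'' of \cref{def:projection-operators}, and your preparatory steps (each edge changes the layer index by at most one, so \(\dist(V(G),V(H)) \ge 2r+1\) and every \(r\)-ball misses at least one of \(V(G)\), \(V(H)\); the relevant projection is a homomorphism carrying the ball into an \(r\)-ball of \(G\) or of \(H\); and \(G\), \(H\) sit inside \(G \rjoin{2r} H\) as induced subgraphs, so each of their \(r\)-balls reappears inside the corresponding ball of the join) are precisely the intended argument, written out in full.

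The ``main obstacle'' you foresee is not a gap in your argument but a slip in the statement itself: with the paper's definition of \(\lclCHR{r}\) (the minimum \(c\) such that \emph{every} \(r\)-ball is \(c\)-colorable, i.e.\ the maximum over vertices of the ball chromatic number), the identity must read \(\max\), not \(\min\), and your own steps already prove it. Your embedding observation gives \(\lclCHR{r}(G \rjoin{2r} H) \ge \max\{\lclCHR{r}(G), \lclCHR{r}(H)\}\), and your projection step gives the matching upper bound, since every ball is bounded by the local chromatic number of whichever side its center lies within distance \(r\) of. In particular the equality with the minimum is false whenever the two sides differ: for \(r = 1\), the \(1\)-ball around a vertex of \(K_5\) in \(K_5 \rjoin{2} K_3\) contains all of \(K_5\), so the local chromatic number is \(5\), not \(\min\{5,3\} = 3\). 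Accordingly, do not look for the auxiliary ``folding'' homomorphism from the side with larger local chromatic number onto the smaller one---it cannot exist in general (there is no homomorphism \(K_5 \to K_3\)), and it is not needed. In Bogdanov's construction and in this paper's application the lemma is only ever invoked with both operands built from copies of \(K_\chi\), where \(\min = \max = \chi\), so nothing downstream is affected; your write-up, with \(\min\) replaced by \(\max\), is a complete proof of the statement that is actually used.
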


\subsubsection{On highly chromatic graphs with small local chromatic number}\label{ssec:graph-theory-related-works}
In this section we present previous works studying the relation between the chromatic number and the local chromatic number of a graph.
We remind to the reader that the local chromatic number \(\lclCHR{r}(G)\) of radius \(r\) of a graph \(G\)  is the maximum number of colors required to color any \(r\)-ball in \(G\).
We remark that, in this paper, we depart from the standard definition of local chromatic number in the literature that started with \cite{erdos1986}, which is not useful for our purposes.
For positive integers \(c, \chi, r\), define \(f_\chi(c,r)\) to be the maximal integer \(n\) such that every graph \(G\) of \(n\) nodes with \(\lclCHR{r}(G) \le \chi \) is \(c\)-colorable.
The first bounds on \(f_\chi(c,r)\) held for specific values of \(\chi\), especially \(\chi \in \{2,3\}\) \cite{erdos1959,stiebitz1985,berlov2012} (for a summary of the progress of such results we defer the reader to \cite{bogdanov2014,alon2016}).
We only describe the results that consider a generic \(\chi\), as they are more directly related to our work. 
\textcite{kierstead1984} proved that, for each \(k \in \nat\), 
\begin{align}
    f_\chi(k(\chi-1)+1, r) \ge \floor{r / (2k)}^k , \label{eq:kierstead1984}
\end{align}
i.e., any graph \(G\) having \(\lclCHR{r}(G) \le \chi\) and \(\abs{V} \le \floor{r / (2k)}^k\) has chromatic number at most \(k(\chi - 1) + 1\).
Thirty years later, \textcite{bogdanov2013} proved that this result is basically tight when \(k,\chi\) are fixed: it showed that, for each \(k \ge 2\), 
\begin{align}
    f_\chi(k(\chi-1),r) \le \frac{(2r\chi + 1)^k - 1}{2r}, \label{eq:bogdanov2013}
\end{align}
i.e., there exists a graph  \(G_k\) with \(\lclCHR{r}(G_k) = \chi\), \(\XX(G_k) \ge k(\chi - 1) + 1\), and \(\abs{V(G_k)} = \frac{(2r\chi)^{k} - 1}{2r}\):
such result is built upon a lemma  by \textcite{lovasz1978}.
The two above results show an interesting phenomenon.
For constant positive integers \(\chi\) and \(c\), the minimum number of vertices of a graph \(G\) with \(\lclCHR{r}(G) \le \chi\) and \(\chi(G) = c\) is roughly \( r ^{\floor{\frac{c-1}{\chi - 1}}}\), that is, it jumps to the powers of \(r\) where the exponents are the values \(c\) congruent to \(1\) modulo \(\chi - 1\).

The estimate by \textcite{kierstead1984} breaks when \(k \gtrsim r\).
\textcite{bogdanov2014} and \textcite{alon2016} investigated and provided better lower bounds to \(f_\chi(c,r)\) when \(c \gtrsim (\chi - 1)r\);
\textcite{alon2016} estimated also upper bounds to \(f_\chi(c,r)\) which are roughly tight for fixed \(r\).
We do not discuss such results in details as they are not useful for our purposes:
we are interested in the asymptotic relation between the locality radius \(r\) and the number of nodes \(n\) given the local and global chromatic number.
Such case is covered by \cref{eq:bogdanov2013}: in particular, we used the example graph that was built in \cite{bogdanov2013} as our baseline for the lower bound proof.

We remark that our result for the specific case \(\chi = 2\) could also be achieved through other graphs that were studied in the literature, e.g., the \emph{generalized Mycielski graph} (studied by \textcite{stiebitz1985}, see \cite{gyarfas2004} for an English version of the proof).
 \subsection{\texorpdfstring{No quantum advantage for \(3\)-coloring grids}{No quantum advantage for 3-coloring grids}}\label{sec:coloring:grids}

In this section we prove that \(3\)-coloring \(\gridW\times \gridH\) grids of requires time \(\myOmega{\min(\gridW, \gridH)}\) rounds in the \nslcl model, by using the same graph-theoretical lower bound argument of \cref{sec:lb-technique}. 

For any two integers \(a \le b\), we denote the set \(\{a, a+1, \dots, b\}\) by \([a:b]\).

\begin{definition}[KB-gadget]\label{def:kb-quadrangulation}
    Consider a graph \(H_{\gridW, \gridH} = (V(H_{\gridW, \gridH}),E(H_{\gridW, \gridH}))\) of \((\gridW+1)(\gridH+1)\) nodes, where we label the nodes by using coordinates from the set \([0:\gridW]\times [0:\gridH]\);
    two nodes \((i,j)\) and \((i',j')\) are connected by an edge if \(\abs{i - i'} + \abs{j - j'} = 1\), i.e., their Manhattan distance is 1.
    Now, define the following equivalence relation for nodes:
    \((i,0) \sim_V (\gridW-i,\gridH)\) for \(i = 0, \dots, \gridW\), and \((0,j) \sim_V (\gridW, j)\) for \(j = 0, \dots, \gridH\).
    Now define a new graph \(G_{\gridW, \gridH} = (V(G_{\gridW, \gridH}), E(G_{\gridW, \gridH}))\) where \(V(G_{\gridW, \gridH}) = V(H_{\gridW, \gridH}) \slash {\sim_V}\) and \(E(G_{\gridW, \gridH})\) is characterized as follows: 
    Take any two nodes \(u,v\in V(G_{\gridW, \gridH})\).
    Such nodes are equivalence classes for \(\sim_V\). 
    If there exists \(u' \in u \subseteq V(H_{\gridW, \gridH}), v' \in u \subseteq V(H_{\gridW, \gridH})\) such that \(u'v' \in E(H_{\gridW, \gridH})\), then \(uv \in E(G_{\gridW, \gridH})\) (see \cref{fig:kb-quadrangulation}).
    We name \(G_{\gridW, \gridH}\) an \(\gridW \times \gridH\) \emph{Klein bottle gadget} (KB-gadget).
\end{definition}

Now, clearly, \(G_{\gridW, \gridH}\) is everywhere locally grid-like; however, we claim that \(\XX(G) \ge 4\) if \(\gridW\) is odd.

\begin{figure}
    \centering
    \includegraphics[scale=0.5]{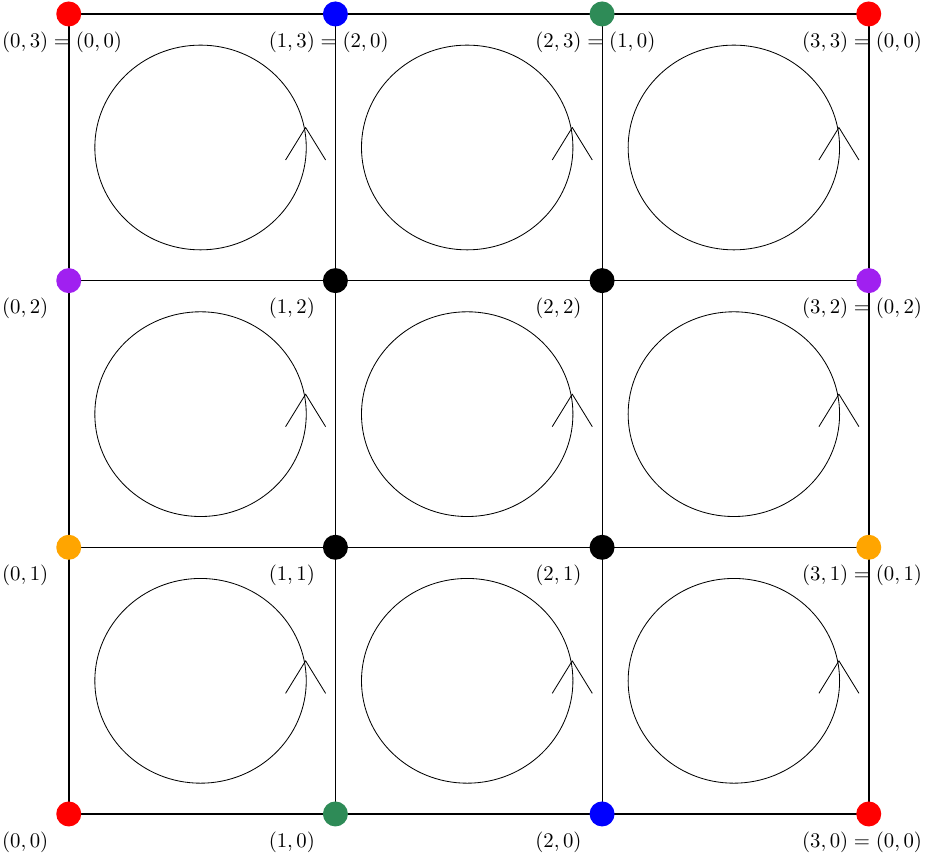}
    \caption{Representations of graphs \(G_{\gridW, \gridH}\) and \(H_{\gridW, \gridH}\) with \(\gridW = \gridH = 3\).
    Nodes on the borders are identified in \(G_{\gridW, \gridH}\) as indicated by the same colors.
    Each face of the graph is oriented the same as the oriented circle inside it.}
    \label{fig:kb-quadrangulation}
\end{figure}

\paragraph{Quadrangulation of the Klein bottle.}

To establish the truth of our claim, we make use of results on the chromatic number of quadrangulations of surfaces \cite{archdeacon2001,mohar2002,mohar2013}.
Following the preliminaries of \cite{mohar2013}, by \emph{surface} we mean a compact connected 2 manifold without boundary. 
A \emph{quadrangulation} of a surface is a graph without self-loops on that surface with all faces being quadrilaterals. 
For our purpose, we can restrict to the class of simple graphs.
Let \(G\) be a quadrangulation of a surface \(S\).
An \emph{orientation} of a face of \(G\) is a closed walk along its boundaries.
Given two oriented faces sharing at least one boundary edge \(e\), we say that the consistency of the orientation is \emph{not broken} at \(e\) if we traverse edge in opposite directions when considering the two orientations of the faces.
Given an arbitrary orientation of all faces we can count the edges that break consistency of the orientation of the surface.
Note that reversing the orientation of a face changes the status of its four edges, and thus the parity of the number of edges breaking consistency does not change. 
We say that \(G\) is \emph{even} or \emph{odd} depending on this parity. 
By the previous remark, the parity of the quadrangulation is determined by any orientation of all faces and is invariant, i.e., it depends only on the graph \(G\).

The following theorem was independently proved by \cite{archdeacon2001,mohar2002}, but we report its formulation by \cite{mohar2013}.

\begin{theorem}[\cite{mohar2013}]\label{thm:odd-quadrangulation}
    Let \(G\) be an odd quadrangulation of some surface \(S\). 
    Then, \(\XX(G) \ge 4\).
\end{theorem}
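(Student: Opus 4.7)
The strategy is to prove the contrapositive: if $G$ admits a proper $3$-coloring $c \colon V(G) \to \mathbb{Z}/3\mathbb{Z}$, then $G$ must be an even quadrangulation. The main idea is to use $c$ to manufacture a specific orientation of every face whose consistency defect has controllable parity.

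For each oriented edge $(u,v)$, set $\sigma(u,v) = c(v) - c(u) \in \{+1,-1\} \subseteq \mathbb{Z}/3\mathbb{Z}$; this quantity reverses sign under edge reversal since $c(u) \ne c(v)$. Walking once around any face $v_1 v_2 v_3 v_4 v_1$, the sum $\sum_i \sigma(v_i,v_{i+1})$ telescopes to $0$ in $\mathbb{Z}/3\mathbb{Z}$. Because this sum is an integer in $\{-4,-2,0,2,4\}$ and only $0$ is divisible by $3$, exactly two of the four $\sigma$-values are $+1$ and two are $-1$. A short case analysis (two colors versus three) then reveals the cyclic sign pattern: it is alternating $(+,-,+,-)$ when $F$ is $2$-colored $\alpha\beta\alpha\beta$, and has two consecutive $+1$'s followed by two consecutive $-1$'s when $F$ is $3$-colored $\alpha\beta\alpha\gamma$; this pattern-type is invariant under reversing the walk.

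The next step is the key local equivalence: if $F$ and $F'$ are two faces sharing an edge $e$, then their chosen boundary walks preserve consistency at $e$ (i.e.\ traverse $e$ in opposite directions) if and only if the two walks assign $e$ \emph{opposite} $\sigma$-values, which is immediate from the antisymmetry of $\sigma$. Hence to show $G$ is even it suffices to select, for each face, one of its two boundary walks so that on every edge the two induced $\sigma$-labels disagree. I would construct such a selection by seeding an arbitrary face, then propagating through the dual graph under the local rule \emph{flip the orientation choice on a neighboring face whenever the induced $\sigma$-labels on the shared edge would otherwise agree}.

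The main obstacle will be showing that this propagation is globally well-defined: traversing any closed walk in the dual graph must return each face to its original orientation. This reduces to proving that, along every closed dual walk, the propagation rule says ``flip'' an even number of times. I expect this to follow from the fact that $\sigma$ is literally the coboundary of $c$ in the $\mathbb{Z}/3\mathbb{Z}$-coefficient cochain complex of $G$: around any closed loop the $\sigma$-values sum to $0$ in $\mathbb{Z}/3\mathbb{Z}$, and reducing this $\mathbb{Z}/3\mathbb{Z}$ constraint modulo~$2$ yields precisely the parity condition needed to make the orientation-flipping cocycle a coboundary in $\mathbb{Z}/2\mathbb{Z}$. Granting this, the resulting face orientation has zero broken edges, so $G$ is even, contradicting the assumption and completing the proof.
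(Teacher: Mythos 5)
The paper itself does not prove this statement---it is imported from the cited works of Archdeacon et al.\ and Mohar--Seymour---so your proposal stands on its own. Your opening steps are sound: $\sigma(u,v)=c(v)-c(u)$ is antisymmetric, the telescoping sum around a quadrilateral forces exactly two $+1$'s and two $-1$'s, and two boundary walks are consistent at a shared edge if and only if they assign it opposite $\sigma$-values. The gap is the final global step. You aim to choose boundary walks so that \emph{every} edge receives opposite labels from its two faces, i.e.\ a face orientation with \emph{zero} broken edges. Such a choice is precisely a coherent orientation of the $2$-cells, which exists if and only if the surface $S$ is orientable. On orientable surfaces every quadrangulation is already even (use the orientation induced by $S$), so there the contrapositive is vacuous; every odd quadrangulation lives on a non-orientable surface (the paper's application is the Klein bottle), and there the configuration you are trying to build simply does not exist---any orientation-reversing closed walk in the dual forces an odd number of ``flips,'' so your propagation cannot be made globally well-defined. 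Moreover, $3$-chromatic (even, non-bipartite) quadrangulations of the Klein bottle exist, so a proof that $3$-colorability yields zero broken edges would be false, not merely incomplete. The hoped-for mod-$2$ reduction of the $\mathbb{Z}/3$ coboundary identity cannot repair this: the obstruction is the non-orientability of $S$, which is independent of the coloring.

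What you actually need is weaker and follows from a counting argument rather than from constructing a consistent orientation: for an \emph{arbitrary} choice of boundary walks, the number of broken edges is even. Your own observation does the work. Orient each edge $u\to v$ whenever $c(v)-c(u)\equiv 1 \pmod 3$; by your two-and-two claim, every face, whichever of its two walks is chosen, traverses exactly two edges in the forward direction. Summing forward traversals over all faces gives $2\,\abs{F}$, which is even. Counting the same quantity edge by edge, a consistent edge contributes exactly $1$ (one of the two opposite traversals is forward), while a broken edge contributes $0$ or $2$; hence the number of consistent edges is even. Since a quadrangulation satisfies $4\abs{F}=2\abs{E}$, the total number of edges is even, so the number of broken edges is even as well, i.e.\ $G$ is an even quadrangulation. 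This completes the contrapositive without ever requiring a globally coherent orientation.
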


The Klein bottle can be defined as the quotient space of the square \([0,1] \times [0,1]\) over the equivalence relation \(\sim\) that identifies sides as follows:
\((x,0) \sim (x,1)\) for \(x \in [0,1]\), and \((0,y) \sim (1,1-y)\) for \(y \in [0,1]\).

\begin{lemma}\label{lem:kb-gadget-chr-num}
    Let \(\gridH \ge 2\). For any odd integer \(\gridW \ge 3\), \(\XX(G_{\gridW, \gridH}) \ge 4\) and \(\XX(G_{\gridH, \gridW}) \ge 4\).
\end{lemma}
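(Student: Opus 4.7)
The plan is to invoke Theorem~\ref{thm:odd-quadrangulation}: I would exhibit $G_{\gridW,\gridH}$ as a simple quadrangulation of the Klein bottle whose parity is odd, from which $\XX(G_{\gridW,\gridH})\ge 4$ follows immediately. The case $G_{\gridH,\gridW}$ is then handled analogously after interchanging the roles of the two coordinates.

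First, I would verify that $G_{\gridW,\gridH}$ is a simple quadrangulation of the Klein bottle. The equivalence $\sim_V$ from Definition~\ref{def:kb-quadrangulation} matches exactly the standard fundamental polygon of the Klein bottle: the pair of horizontal sides $y=0$ and $y=\gridH$ is glued with a reflection in $x$ (the identification $(i,0)\sim_V(\gridW-i,\gridH)$), while the pair of vertical sides $x=0$ and $x=\gridW$ is glued by pure translation (the identification $(0,j)\sim_V(\gridW,j)$). The unit grid $H_{\gridW,\gridH}$ is a quadrangulation of the square in which every face is a unit $4$-cycle, and the quotient inherits this structure on the Klein bottle. The hypotheses $\gridW\ge 3$ and $\gridH\ge 2$ ensure simplicity: distinct vertices of $H_{\gridW,\gridH}$ are identified only along the glued pairs of sides, and the sizes are large enough that no two identified vertices are joined by two grid edges and no grid edge becomes a loop.

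The main step is the parity computation. I would orient every one of the $\gridW\gridH$ unit faces counter-clockwise in the ambient plane. Any interior edge is shared by two faces that traverse it in opposite directions, so it is consistent. On the untwisted gluing $(0,j)\sim_V(\gridW,j)$, the face to the right of $x=0$ traverses the left vertical edge from top to bottom while the face to the left of $x=\gridW$ traverses the right vertical edge from bottom to top; under pure translation these remain opposite, so these edges are consistent as well. On the twisted gluing $(i,0)\sim_V(\gridW-i,\gridH)$, however, the reflection in $x$ reverses one of the two traversals and the two incident faces end up traversing the glued edge in the same direction; every one of the $\gridW$ horizontal boundary edges therefore breaks orientation consistency, and no other edges do. The total inconsistency count is thus $\gridW$, so the parity of $G_{\gridW,\gridH}$ equals $\gridW\bmod 2$, which is $1$ when $\gridW$ is odd. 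Theorem~\ref{thm:odd-quadrangulation} then delivers $\XX(G_{\gridW,\gridH})\ge 4$, and the same argument applied after transposition handles $G_{\gridH,\gridW}$.

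The main obstacle is the bookkeeping in the parity count: I need to be careful with the reflection when comparing the two traversals on the twisted side, and to verify that the four corners of the fundamental polygon (where up to four grid vertices collapse to a single vertex of $G_{\gridW,\gridH}$) do not introduce spurious inconsistencies or violate simplicity. Everything else, namely the surface identification, the quadrangulation property, and the final appeal to Theorem~\ref{thm:odd-quadrangulation}, is routine once the orientation convention is fixed.
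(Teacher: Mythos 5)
Your proposal is correct and follows essentially the same route as the paper: view $G_{\gridW,\gridH}$ as a quadrangulation of the Klein bottle, fix a consistent orientation of the faces so that only the $\gridW$ edges along the twisted (reflected) gluing break consistency, conclude the quadrangulation is odd when $\gridW$ is odd, and apply \cref{thm:odd-quadrangulation}; the case $G_{\gridH,\gridW}$ is handled by the same symmetry (the paper phrases it as an isomorphism between the two graphs). Your extra care about simplicity and the corner identifications is a reasonable refinement of details the paper leaves implicit, but it does not change the argument.
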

\begin{proof}
    Notice that \(G_{\gridH, \gridW}\) is isomorphic to \(G_{\gridW, \gridH}\), hence we focus on the latter.
    It is easy to see that \(G_{\gridW, \gridH}\) is a quadrangulation of the Klein-bottle. 
    Indeed, \(H_{\gridW, \gridH}\) is a quadrangulation of the square \([0,1]\times [0,1]\): it suffices to map the nodes \((i,j)\) into \((i/\gridW,j/\gridH)\), for all \(i,j\). 
    As \(G_{\gridW, \gridH}\) is obtained by node identification through the relation \(\sim_V\), which acts on the border of the square, we have a quadrangulation of the Klein-bottle.
    Let's orient the faces of \(G_{\gridW, \gridH}\) as in \cref{fig:kb-quadrangulation}. 
    Even though in \(H_{\gridW, \gridH}\) the orientation is even as consistency is never broken, through simple observations, one can verify that \(G_{\gridW, \gridH}\) is an odd quadrangulation if and only if \(n\) is odd, as the orientation consistency is broken only at edges belonging to the set \(\left\{ \{(i,0),(i+1,0)\} \ \st \ 0 \le i \le \gridW-1 \right\}\).
    Then \cref{thm:odd-quadrangulation} gives the thesis.
\end{proof}

It suffices to construct a suitable subgraph cover for \(G_{\gridW, \gridH}\) which shows that the family of grid graphs admits cheating graphs for the 3-coloring problem. 
First, let us denote by \(Q_{\gridW, \gridH}\) any subgraph of the infinite two-dimensional lattice that is isomorphic to an \((\gridW+5)/2 \times (\gridH+5)/2\) grid, and by \(Q_{\gridW, \gridH}^{T}\) the graph induced by its \(T\)-neighborhood in the lattice.

\begin{lemma}\label{lem:kb-gadget-partition}
    Let \(5 \le \gridW, \gridH\) be odd integers, and let \(G_{\gridW, \gridH}\) be the graph defined in \cref{def:kb-quadrangulation}.
    Let \(T = \floor{\frac{\min(\gridW, \gridH)-5}{4}}\).
    There exists a subgraph cover \(\{G_{\gridW, \gridH}^{(i)}\}_{i\in [4]}\) of \(G_{\gridW, \gridH}\) such that the following statements hold:
    \begin{enumerate}[(i)]
        \item  The chromatic number of \(G_{\gridW, \gridH}[\NN_T(G_{\gridW, \gridH}^{(i)})]\) is 2 for \(i \in [4]\);
        \item \(G_{\gridW, \gridH}[\NN_T(G_{\gridW, \gridH}^{(i)})]\) is isomorphic to \(Q_{\gridW, \gridH}^{T}\);
        \item For each \(v \in V(G_{\gridW, \gridH})\), there exists \(i \in [4]\) such that \(\NN_1(v) \subseteq V(G_{\gridW, \gridH}^{(i)})\).
    \end{enumerate}
\end{lemma}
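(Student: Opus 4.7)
The plan is to realize the four patches as rectangular windows pulled back from the universal cover. View \(G_{\gridW,\gridH}\) as the quotient of the infinite grid \(\intg^2\) by the deck group \(\Gamma\) generated by the translation \(T_1\colon (x,y)\mapsto (x+\gridW,y)\) and the glide reflection \(T_2\colon (x,y)\mapsto (\gridW-x,y+\gridH)\), with fundamental domain \([0:\gridW-1]\times[0:\gridH-1]\); let \(\pi\colon \intg^2\to V(G_{\gridW,\gridH})\) be the quotient map. Write \(m_1=(\gridW+5)/2\), \(m_2=(\gridH+5)/2\), \(a_1=(\gridW-1)/2\), \(a_2=(\gridH-1)/2\), all integers since \(\gridW,\gridH\) are odd, and define four \(m_1\times m_2\) axis-aligned rectangles in \(\intg^2\):
\begin{align*}
R_1&=[0:m_1-1]\times[0:m_2-1], & R_2&=[a_1:a_1+m_1-1]\times[0:m_2-1],\\
R_3&=[0:m_1-1]\times[a_2:a_2+m_2-1], & R_4&=[a_1:a_1+m_1-1]\times[a_2:a_2+m_2-1].
\end{align*}
Let \(G_{\gridW,\gridH}^{(i)}\) be the subgraph of \(G_{\gridW,\gridH}\) induced on \(\pi(R_i)\).

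The geometric core of the argument is that every non-identity element of \(\Gamma\) shifts each point by at least \(\gridW\) in the \(x\)-coordinate or by at least \(\gridH\) in the \(y\)-coordinate (use \(T_2^2=\) translation by \((0,2\gridH)\)). Hence any axis-aligned \(w\times h\) rectangle in \(\intg^2\) with \(w\le \gridW\) and \(h\le \gridH\) projects injectively, with image isomorphic to the \(w\times h\) planar grid. Applied to the \(T\)-thickening of \(R_i\), which has dimensions \((m_1+2T)\times(m_2+2T)\), the hypothesis \(T\le (\min(\gridW,\gridH)-5)/4\) yields \(m_1+2T\le \gridW\) and \(m_2+2T\le \gridH\). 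Therefore \(G_{\gridW,\gridH}[\NN_T(G_{\gridW,\gridH}^{(i)})]\) is isomorphic to the \((m_1+2T)\times(m_2+2T)\) planar grid, i.e.\ to \(Q_{\gridW,\gridH}^{T}\); this yields (ii), and (i) is immediate because planar grids are bipartite.

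For (iii), let the \emph{\(1\)-safe interior} \(R_i^{\circ}\subseteq R_i\) be the set of lattice points all of whose four grid neighbours are in \(R_i\); each \(R_i^{\circ}\) is a rectangle of dimensions \((m_1-2)\times(m_2-2)=(a_1+1)\times(a_2+1)\). By the injectivity just used, any node \(v\) whose lift lies in \(R_i^{\circ}\) satisfies \(\NN_1(v)\subseteq V(G_{\gridW,\gridH}^{(i)})\), so it is enough to check that \(\bigcup_{i=1}^{4}\pi(R_i^{\circ})=V(G_{\gridW,\gridH})\). The pair \(R_1^{\circ},R_2^{\circ}\) consists of two \((a_1+1)\)-wide windows in the \(x\)-axis offset by \(a_1\), and their projections sweep out every canonical \(x\)-coordinate for all \(y\in[1:a_2+1]\) (they overlap at \(x=a_1+1\); the right edge of \(R_2^{\circ}\) wraps via \(T_1\) through \(x=\gridW,\gridW+1\) to canonical \(x\in\{0,1\}\)). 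The pair \(R_3^{\circ},R_4^{\circ}\) analogously covers every \(x\) for all \(y\in[a_2+1:\gridH-1]\), and the part of these rectangles crossing \(y=\gridH\) is sent by \(T_2^{-1}\) down to the row \(y=0\) with \(x\) reflected, filling in exactly the \(y=0\) row missed by the first pair. Hence every canonical representative is in some \(\pi(R_i^{\circ})\).

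The main obstacle is the orientation-reversing identification \(T_2\): on a torus the four safe interiors would project to a clean product of shifted rectangles, but here the wrap at \(y=\gridH\) reflects \(x\), so the projections of \(R_3^{\circ}\) and \(R_4^{\circ}\) are each unions of two pieces related by \(x\mapsto \gridW-x\), and the coverage check must be carried out by cases. The inequality \(\min(\gridW,\gridH)\ge 5\) is exactly what makes \(m_i\le \gridW,\gridH\) (so the \(T\)-thickenings project injectively) and \(a_i+1\le m_i-2\) (so neighbouring patches overlap in a full column or row), leaving no canonical representative uncovered.
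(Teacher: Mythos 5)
Your construction follows the same blueprint as the paper's proof: four overlapping grid windows of size \(\frac{\gridW+5}{2}\times\frac{\gridH+5}{2}\) whose \(T\)-neighborhoods should be embedded planar-grid patches, with the slightly shrunken windows covering every closed neighborhood; you phrase it through the universal cover \(\intg^2\) and the deck group rather than the paper's explicit coordinate windows straddling the identification seams, and your coverage argument for (iii) is fine.

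There is, however, a genuine gap in the step that is supposed to give (i) and (ii). From the fact that every non-identity deck transformation displaces each point by at least \(\gridW\) in \(x\) or at least \(\gridH\) in \(y\), you conclude that any axis-aligned window with \(w\le\gridW\), \(h\le\gridH\) ``projects injectively, with image isomorphic to the \(w\times h\) planar grid.'' The displacement bound does give injectivity, but the claim that the \emph{induced} image is the planar patch is false at the boundary \(w=\gridW\) (or \(h=\gridH\)): applying \(T_1\) to the leftmost column of such a window places it adjacent to the rightmost column, so the projection acquires wrap-around edges that are absent in the planar patch. Your parameters can hit exactly this boundary, since \(m_1+2T\le\gridW\) holds only with possible equality, and equality occurs whenever \(\min(\gridW,\gridH)\equiv 1\pmod 4\). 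Concretely, take \(\gridW=\gridH=9\), so \(T=1\), \(m_1=m_2=7\), \(m_1+2T=9=\gridW\), and \(R_1=[0:6]\times[0:6]\): for any \(y\in[0:6]\) the nine vertices \([(-1,y)],[(0,y)],\dots,[(7,y)]\) all lie in \(\NN_1(\pi(R_1))\) and form a \(9\)-cycle, because \(T_1(-1,y)=(8,y)\) is adjacent to \((7,y)\). Hence \(G_{\gridW,\gridH}[\NN_T(G^{(1)})]\) contains an odd cycle, so both (i) and (ii) fail for your cover in this case. The argument is sound exactly when the \(T\)-thickened windows fit in a box of width at most \(\gridW-1\) and height at most \(\gridH-1\) (then the displacement bound excludes both identifications and extra adjacencies), so you need to build in one unit of slack, e.g.\ by shrinking the windows or lowering \(T\) by a constant; this costs nothing in the way the lemma is used downstream, but as written the step does not follow. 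A smaller point: \(Q^T_{\gridW,\gridH}\) is the induced subgraph on the graph-distance \(T\)-neighborhood of an \(m_1\times m_2\) grid (with diamond-shaped corners), not the full \((m_1+2T)\times(m_2+2T)\) box; the embedding argument should be applied to \(\NN_T(R_i)\subseteq\intg^2\), which lies inside that box and then yields precisely \(Q^T_{\gridW,\gridH}\).
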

\begin{proof}
    Observe that claim (ii) implies claim (i):
    hence, we prove claim (ii).
    Consider the graph \(H_{\gridW, \gridH}\) used to construct the KB-gadget in \cref{def:kb-quadrangulation}.
    Let \(V_1\) be the set of nodes \((i,j)\) of \(H_{\gridW, \gridH}\) respecting the following property
    \begin{align*}
        V_1: \ i \in \left[0:\frac{\gridW+1}{2}\right] \cup [\gridW-1:\gridW], j \in \left[0: \frac{\gridH+1}{2}\right] \text{ or }
        i \in [0:1]\cup\left[\frac{\gridW-1}{2}:\gridW\right], j \in [\gridH-1:\gridH].
    \end{align*}
    Similarly, we define \(V_2,V_3,V_4\) by the following properties
    \begin{align*}
        V_2 : & \ i \in [0:1] \cup \left[\frac{\gridW-1}{2}:\gridW\right], j\in \left[0:\frac{\gridH-1}{2}\right] \text{ or } i \in \left[0:\frac{\gridW+1}{2}\right] \cup [\gridW-1:\gridW], j \in [\gridH-1:\gridH]; \\
        V_3 : & \ i \in \left[0:\frac{\gridW+1}{2}\right] \cup [\gridW-1:\gridW], j \in [0:1] \text{ or } i \in [0:1]\cup\left[\frac{\gridW-1}{2}:\gridW\right], j \in \left[\frac{\gridH-1}{2}:\gridH\right]; \\
        V_4 : &  \ i \in [0:1]\cup \left[\frac{\gridW-1}{2}:\gridW\right], j \in [0:1] \text{ or } i \in \left[0: \frac{\gridW+1}{2}\right]\cup[\gridW-1:\gridW], j \in \left[\frac{\gridH-1}{2}:\gridH\right].
    \end{align*}
    For an example of \(V_1,V_2\) see \cref{fig:KB-cover}.
    Then, define \(S_i\) to be subset of \(V(G_{\gridW, \gridH})\) induced by \(V_i\) after applying the equivalence relation \(\sim_V\) in \(V(H_{\gridW, \gridH})\), and \(G_{\gridW, \gridH}^{(i)} = G_{\gridW, \gridH}[S_i]\).
    Notice that \(G_{\gridW, \gridH}^{(i)}\) is isomorphic to \(Q_{\gridW, \gridH}\) for each \(i \in [4]\).
    Furthermore, the \(T\)-view of \(G_{\gridW, \gridH}^{(i)}\) identifies in \(H_{\gridW, \gridH}\) four graphs that do not intersect, implying claim (ii).
    Claim (iii) is trivial.
\end{proof}
\begin{remark}
    The strange shape of the subgraph cover is needed for property (iii). 
    However, the analysis carries on by considering ``simpler'' subgraph covers which don't meet property (iii) as a valid coloring can be checked by looking at single edges, not necessarily entire neighborhoods.
    Nevertheless, we choose to use the general result for LVLs (\cref{thm:lb-technique}). 
\end{remark}

\begin{figure}
  \centering
  \begin{subcaptionblock}{\textwidth}
    \centering
    \includegraphics[scale=0.7]{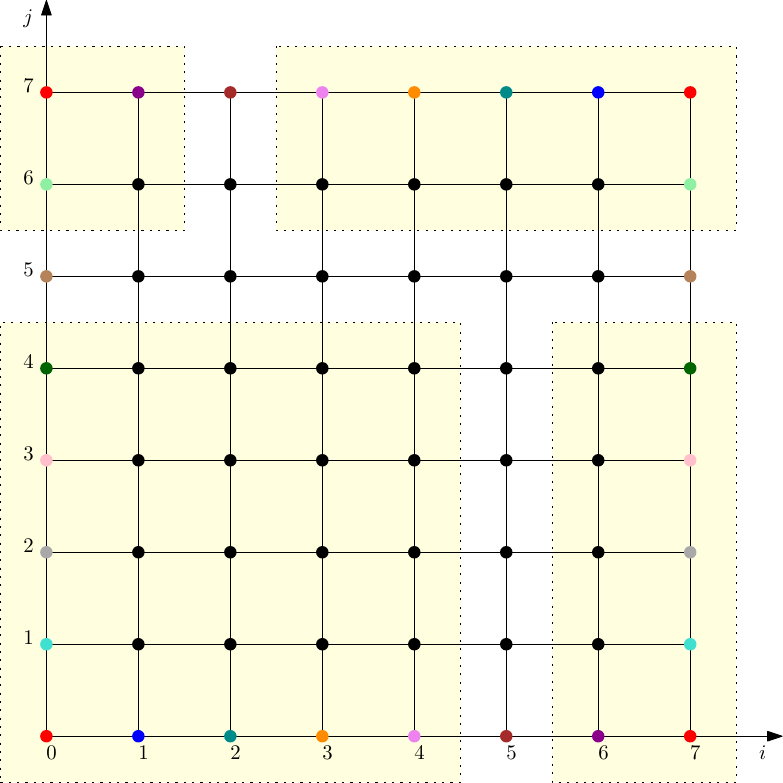}
    \caption{\(V_1\) (marked in yellow).}
  \end{subcaptionblock}
  \\[5mm]
  \begin{subcaptionblock}{\textwidth}
    \centering
    \includegraphics[scale=0.7]{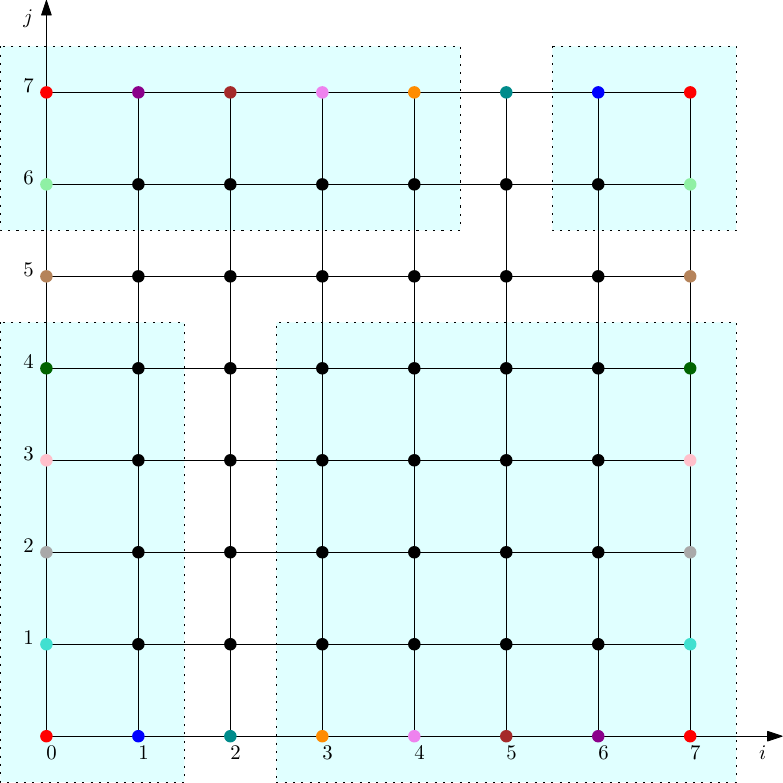}
    \caption{\(V_2\) (marked in blue).}
  \end{subcaptionblock}
\caption{Two elements of the subgraph cover of the KB-gadget for \(\gridW = \gridH = 7\).
  Nodes with identical colors are identified.
  \(V_3\) and \(V_4\) can be obtained by symmetries.}
  \label{fig:KB-cover}
\end{figure}

\cref{lem:kb-gadget-partition} implies that the family of  grid graphs admits cheating graphs for the \(3\)-coloring graphs problem.

\begin{corollary}\label{cor:grid:cheatgraph-family}
    Let \(N \ge 1\), \(\gridW, \gridH \ge 5N\), and 
     \(\FF\) be the family of all grids of size \(\gridW \times \gridH\).
    There exist a \(T = T(\gridW, \gridH)\), with 
    \[
        T = \myTheta{ \frac 1N \cdot \min(\gridW, \gridH)},  
    \]
    such that \(\FF\) admits an \((\gridW\cdot \gridH, 4,N,T)\)-cheating graph for the \(3\)-coloring graph problem.
\end{corollary}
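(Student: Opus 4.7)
The plan is to take the cheating graph to be the KB-gadget $G = G_{m,m}$ of \cref{def:kb-quadrangulation}, where $m$ is the largest odd integer satisfying $m \le \min(\gridW, \gridH)/N$; the hypothesis $\gridW, \gridH \ge 5N$ guarantees $m \ge 5$. I set $T = \floor*{(m-5)/4}$, which is $\myTheta{m} = \myTheta{\min(\gridW, \gridH)/N}$, matching the desired complexity. The vertex-count bound $\abs{V(G)} \le (m+1)^2 \le \gridW \gridH / N$ follows directly from this choice of $m$.

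The first two cheating-graph properties of \cref{def:cheating-graph} are then essentially immediate from the earlier lemmas. Since $m$ is odd, \cref{lem:kb-gadget-chr-num} gives $\XX(G) \ge 4$, so 3-coloring is unsolvable on $G$; this is property (i). The subgraph cover $\{G^{(i)}\}_{i \in [4]}$ produced by \cref{lem:kb-gadget-partition}, combined with its part (iii), verifies property (ii)(a), since 3-coloring is an LVL problem with checking radius $\lvlCR = 1$.

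The heart of the proof is property (ii)(b). Given $\xx_N = (x_1, \dots, x_N) \in [4]^N$, I would take $H_{\xx_N}$ to be the $\gridW \times \gridH$ grid itself and embed inside it $N$ disjoint, axis-aligned sub-grids $Q_1, \dots, Q_N$, each isomorphic to $Q_{m,m}^T$ (which, by \cref{lem:kb-gadget-partition}(ii), is in turn isomorphic to each $G[\NN_T(G^{(i)})]$). I arrange the $Q_j$ in a single row along the longer dimension of $H_{\xx_N}$, separated by at least one empty grid column between neighbors and placed at distance at least $T$ from the grid boundary, so that the $Q_j$ are pairwise disjoint as induced subgraphs. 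Inside each $Q_j$ I identify $\tilde{G}^{(x_j)}$ as the image of $G^{(x_j)}$ under the isomorphism $G[\NN_T(G^{(x_j)})] \cong Q_{m,m}^T \cong Q_j$; this yields $\tilde{G}^{(x_j)} \cong G^{(x_j)}$, and its $T$-neighborhood in $H_{\xx_N}$ is exactly $Q_j$ (since the next $H_{\xx_N}$-vertex outside $Q_j$ sits at distance $> T$ from the core). Setting $\tilde{H}_{\xx_N} = \bigsqcup_j \tilde{G}^{(x_j)}$, the pairwise separation of the $Q_j$ then gives $H_{\xx_N}[\NN_T(\tilde{H}_{\xx_N})] = \bigsqcup_j Q_j \cong \bigsqcup_j G[\NN_T(G^{(x_j)})]$, as required.

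The only real obstacle is the packing check: each $Q_j$ is a grid of side at most $(m+5)/2 + 2T \le m$, so $N$ copies laid out in a single row occupy total width at most $N \cdot m + O(N) \le \min(\gridW, \gridH) + O(N) \le \max(\gridW, \gridH)$, while the required height is at most $m \le \min(\gridW, \gridH)$. Both inequalities follow from the choice of $m$ and the hypothesis $\gridW, \gridH \ge 5N$, so the construction fits inside the $\gridW \times \gridH$ grid, completing the argument.
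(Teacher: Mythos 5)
Your construction follows the same blueprint as the paper (a KB-gadget scaled down by $N$, its four-element subgraph cover, and $N$ copies of $Q^T_{m,m}$ packed into the target grid), but the packing step---which you yourself single out as ``the only real obstacle''---does not go through as written. First, the demand that each $Q_j$ lie at distance at least $T$ from the grid boundary is both unnecessary (the lattice $T$-neighborhood of the core $\tilde{G}^{(x_j)} \cong Q_{m,m}$ is already contained in $Q_j \cong Q^T_{m,m}$, so it suffices that $Q_j$ itself fits inside the grid) and unsatisfiable: already for $N=1$ and $\gridW = \gridH$ large, $Q^T_{m,m}$ has side roughly $m \approx \gridW$, leaving no room for a margin of $T = \Theta(m)$. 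Second, even after dropping that requirement, the inequality you use to justify the single-row layout, namely $Nm + \OO(N) \le \min(\gridW,\gridH) + \OO(N) \le \max(\gridW,\gridH)$, is simply false for square or near-square grids with $N \ge 2$: e.g.\ $\gridW = \gridH = 5N$ forces $m = 5$, and $N$ blocks of side $5$ plus $N-1$ separator columns need width $6N-1 > 5N$. The separators cannot be discarded in a single-row layout, because two copies of $Q^T_{m,m}$ occupying adjacent columns are joined by grid edges, so $H_{\xx_N}[\NN_T(\tilde{H}_{\xx_N})]$ would no longer be a disjoint union, violating \cref{def:cheating-graph}.(ii).(b).

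The gap is easy to close: place the $N$ blocks along a diagonal, i.e.\ in pairwise disjoint row-ranges and pairwise disjoint column-ranges. Since each block has side at most $m$ and $Nm \le \min(\gridW,\gridH)$, both the total height and the total width used are at most $\min(\gridW,\gridH)$, so the staircase fits; and blocks with disjoint row- and column-ranges are automatically non-adjacent in the grid, so no separators are needed. (The paper instead avoids the issue by using a rectangular gadget $G_{M_1,M_2}$ with $M_i \approx n_i/N$, so that $N$ boxes of size $M_1 \times M_2$ fit side by side along the first dimension.) Two smaller slips: the bound $\abs{V(G_{m,m})} \le (m+1)^2 \le \gridW\gridH/N$ can fail (take $N=1$, $\gridW=\gridH=5$, $m=5$), but the quotient in \cref{def:kb-quadrangulation} has exactly $m^2 \le (\min(\gridW,\gridH)/N)^2 \le \floor{\gridW\gridH/N}$ vertices, so the size requirement of \cref{def:cheating-graph} does hold; and $Q^T_{m,m}$ is not itself a rectangular sub-grid but a rectangle with staircase corners, which is harmless but should be stated correctly when you embed it.
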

\begin{proof}
    Let \(M_1 = \floor{\frac \gridW N}\) if \( \floor{\frac \gridW N}\) is odd, otherwise \(M_1 =  \floor{\frac \gridW N} - 1\).
    Similarly, let \(M_2 =  \floor{\frac \gridH N}\) if \(\floor{\frac \gridH N}\) is odd, otherwise \(M_2 = \floor{\frac \gridH N} - 1\).
    Consider the graph \(G_{M_1,M_2}\) and the subgraph cover given by \cref{lem:kb-gadget-partition}.
    Property (i) of \cref{def:cheating-graph} is clearly satisfied as \(\XX(G_{M_1,M_2}) \ge 4\).
    Property (ii).(a) is implied by \cref{lem:kb-gadget-partition} as the checking radius of the coloring problem is \(t = 1\).
    Consider now a vertex \(v \in V(G_{M_1,M_2})\).
    Let us now prove property (ii).(b).
    Consider the subgraph family \(\{G_{M_1, M_2}^{(i)}\}_{i \in [4]}\) given by \cref{lem:kb-gadget-partition}:
    then, \(T = \floor{\frac 14 \cdot (\min (M_1, M_2) - 5)}\).
    Consider any choice of indices \(\xx_N \in [4]^N\). 
    Notice that \(G_{M_1, M_2} [ \NN_T ( G _ {M_1, M_2} ^{(i)})]\) is 
    isomorphic to \(Q^T_{M_1, M_2}\) which is, in turn, isomorphic to a proper 
    subgraph of an \(M_1 \times M_2\)-grid.
    Hence, it is always possible to construct an \(\gridW \times \gridH\) grid which respects property (ii).(b) of \cref{def:cheating-graph}.
    The thesis follows by observing that \(T \ge { \frac{1}{4N} \cdot \min(\gridW, \gridH) - 3}\).
\end{proof}

\thmCslLclLbGrids*
\begin{proof}
    By \cref{cor:grid:cheatgraph-family}, there exists a \(T = \myTheta{ \frac 1N \cdot \min(\gridW, \gridH)}\) such that the family \(\FF\) of \(n \times m\) grids admits an \((\gridW \cdot \gridH, 4, N, T)\)-cheating graph for the problem of 3-coloring graphs.
    Suppose there is an outcome \(\outcome\) with locality \(T\) that \(3\)-colors \(\gridW \times \gridH\) grids.
    By \cref{thm:lb-technique} and the choice of \(N\),the probability of \(\outcome\) solving the problem on an \(\gridW \times \gridH\)-grid is at most \(\varepsilon\).
    Hence, \(T = \myOmega{\frac{\min(\gridW, \gridH)}{\log \frac 1\varepsilon}}\).
\end{proof}
 
\subsection{\texorpdfstring{No quantum advantage for \(c\)-coloring trees}{No quantum advantage for c-coloring trees}}\label{sec:coloring:trees}

\textcite{linial1992} showed that \(c\)-coloring trees requires time \(\myOmega{\log_c n}\) in the classical setting.
In this section we prove that the problem has roughly the same complexity in the \nslcl model, by combining our lower bound technique (\cref{thm:lb-technique}) and the same graph-theoretical argument used in \cite{linial1992}. 

For any graph \(G\), we define the \emph{girth} of \(G\) to be the length of the
shortest cycle contained in \(G\); if \(G\) contains no cycle, then its girth is
infinite.
We denote the girth of a graph \(G\) by \(\girth{G}\).
By \(\diam{G}\) we denote the diameter of the graph \(G\), i.e., the quantity \(\max_{u,v \in V(G)} \dist_G(u,v)\).
The argument followed by \citeauthor{linial1992} uses the following graph-theoretical result.

\begin{lemma}[Ramanujan graphs, \cite{lubotszky1988}]\label{lem:trees:high-girth}
    Let \(p,q\) be two primes that are congruent to 1 modulo 4. 
    If \(p \pmod{q}\) is a perfect square, there exists a \(d\)-regular graph \(G\) on \(n = q(q^2 -1 )\) nodes with \(d = p+1\), which satisfies the following properties:
    \begin{enumerate}
        \item \( \girth{G} > 2 \log_p q \);
        \item \(\diam{G} \le 2 (\log_p n + \log_p 2) + 1\);
        \item \(\XX(G) \ge \frac{1}{2}\sqrt{d}\).
    \end{enumerate}
\end{lemma}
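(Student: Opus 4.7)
The plan is to construct $G$ as an explicit Cayley graph following the Lubotzky--Phillips--Sarnak construction and then verify each of the three claims; the deep content is concentrated in one spectral step (the Ramanujan eigenvalue bound), while the remaining properties are combinatorial or spectral consequences.

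First I would build the generating set from integer quaternions. Given $p \equiv 1 \pmod 4$, Jacobi's four-square theorem produces exactly $p+1$ quaternions $\alpha_1, \dots, \alpha_{p+1}$ of norm $p$ after normalizing so that the scalar coordinate is odd and positive and the other three are even; these form $(p+1)/2$ conjugate pairs and hence a symmetric set. Reducing these generators modulo $q$ and using the hypothesis that $p$ is a quadratic residue modulo $q$, I embed them as a symmetric set $S$ inside $PSL_2(\mathbb{F}_q)$, so that $|S| = p+1 = d$, and take $G$ to be the Cayley graph $\mathrm{Cay}(PSL_2(\mathbb{F}_q), S)$, which is $d$-regular on an order-$q(q^2-1)$ (up to a factor of 2, depending on which projective group one uses in the cited variant).

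For the girth bound (1), I would observe that any closed walk of length $L$ in $G$ corresponds to a word in $S$ equal to the identity in the quotient. Lifting the word back to the quaternion algebra, the product has norm $p^L$ and reduces to $\pm \lambda I \pmod q$ for some scalar $\lambda$; a number-theoretic argument using unique factorization in the relevant maximal order of the quaternion algebra, together with the norm form, rules out nontrivial relations of length $L \le 2\log_p q$, giving $\girth{G} > 2\log_p q$.

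For the diameter (2) and chromatic number (3), the crux is the \emph{Ramanujan property}: every eigenvalue of the adjacency matrix of $G$ other than $\pm d$ satisfies $|\lambda| \le 2\sqrt{d-1}$. This is the main obstacle and the core of LPS: its proof identifies the eigenvalues of the Cayley graph with Hecke eigenvalues of automorphic forms on the relevant quaternion algebra and invokes the Jacquet--Langlands correspondence together with the Ramanujan--Petersson conjecture, which Deligne proved for weight-2 holomorphic modular forms via the Weil conjectures. Granting this bound, the diameter claim $\diam{G} \le 2\log_p n + O(1)$ follows from a standard spectral argument: expanding the $T$-th power of the adjacency matrix in the orthonormal eigenbasis shows that for $T \gtrsim \log_p n$ every entry is strictly positive, so every pair of vertices is connected by a walk of length $T$. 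The chromatic number estimate then follows from Hoffman's inequality,
\[
  \XX(G) \;\ge\; 1 + \frac{\lambda_{\max}}{|\lambda_{\min}|} \;\ge\; 1 + \frac{d}{2\sqrt{d-1}} \;\ge\; \tfrac{1}{2}\sqrt{d},
\]
completing the three items. The combinatorial/number-theoretic parts are within reach of a self-contained treatment, but the Ramanujan bound itself would be cited rather than reproved.
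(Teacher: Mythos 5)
Your outline is, in essence, the proof of Lubotzky--Phillips--Sarnak itself, which is exactly what the paper does here: the lemma is cited as a black box from \cite{lubotszky1988} with no proof given, and your reconstruction (quaternion generators from Jacobi's theorem, the Cayley graph over the projective group mod $q$, girth via lifting a short relation to a quaternion of norm $p^L$ whose non-scalar coordinates are forced to be multiples of $q$, and diameter and chromatic number as spectral corollaries of the Ramanujan bound obtained through Jacquet--Langlands and Deligne, with Hoffman's bound in place of the independence-number bound LPS use) matches that source step for step. The only caveat is the one you already hedge on: in the quadratic-residue case the connected LPS graph is the Cayley graph of $PSL_2(\mathbb{F}_q)$ on $q(q^2-1)/2$ vertices --- the count the paper itself uses later in \cref{lem:trees:cheatgraph-family} --- so the $n = q(q^2-1)$ in the statement should be read modulo that factor of two.
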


We make use of a result in number theory.

\begin{lemma}[Dirichlet's theorem on arithmetic progressions]\label{lem:dirichlet-progression}
    Let \(a,d\) two coprime positive integers.
    There are infinitely many primes \(p\) of the form \(p = dm+a\) for some integer \(m \ge 0\).
\end{lemma}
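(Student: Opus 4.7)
The plan is to prove Dirichlet's theorem on arithmetic progressions, a classical result from 1837 that forms the bedrock of analytic number theory. In the present context the lemma is invoked purely as a black box, in order to supply, for the Ramanujan graph construction of \cref{lem:trees:high-girth}, infinitely many primes $p$ satisfying the congruence conditions relative to $q$ (specifically $p \equiv 1 \pmod 4$ and $p$ a quadratic residue mod $q$). Consequently, I would simply cite a standard reference (e.g., Apostol or Davenport) rather than reprove it in a distributed-computing paper. What follows is only the textbook outline.

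First I would set up the machinery of Dirichlet characters modulo $d$: these are the group homomorphisms $\chi\colon (\intg/d\intg)^\ast \to \mathbb{C}^\ast$, extended to all of $\intg$ by declaring $\chi(n) = 0$ whenever $\gcd(n,d) > 1$. The key orthogonality relation
\[
  \frac{1}{\varphi(d)} \sum_\chi \chi(n)\,\overline{\chi(a)} = \begin{cases} 1 & \text{if } n \equiv a \pmod d, \\ 0 & \text{otherwise,}\end{cases}
\]
valid for $\gcd(a,d) = 1$, then reduces the task of counting primes in a fixed residue class to understanding a linear combination of character sums of the form $\sum_p \chi(p)\, p^{-s}$.

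Next I would introduce the Dirichlet $L$-series $L(s,\chi) = \sum_{n \ge 1} \chi(n)/n^s$ together with its Euler product $L(s,\chi) = \prod_p (1 - \chi(p) p^{-s})^{-1}$ on $\operatorname{Re} s > 1$. Taking logarithms and isolating the leading contribution gives $\log L(s,\chi) = \sum_p \chi(p) p^{-s} + O(1)$, so orthogonality yields
\[
  \sum_{p \equiv a \pmod d} \frac{1}{p^s} = \frac{1}{\varphi(d)} \sum_\chi \overline{\chi(a)}\, \log L(s,\chi) + O(1).
\]
The principal character $\chi_0$ contributes a term that diverges as $s \to 1^+$, since $L(s,\chi_0)$ inherits the simple pole of $\zeta(s)$ at $s=1$ (after stripping finitely many Euler factors corresponding to primes dividing $d$). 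To conclude that the left-hand side itself diverges---hence that infinitely many primes lie in the progression---it suffices to show that $\log L(s,\chi)$ stays bounded as $s \to 1^+$ for every non-principal $\chi$, equivalently that $L(1,\chi) \ne 0$.

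The main obstacle, and the historically celebrated core of Dirichlet's argument, is precisely this non-vanishing. For complex (non-real) characters one pairs $\chi$ with $\overline{\chi}$ inside the product $\prod_\chi L(s,\chi)$ and exploits the fact that this product has non-negative Dirichlet coefficients, forcing it to stay bounded away from $0$ near $s=1$ and thereby ruling out a zero of any single factor. The genuinely delicate case is that of a real (quadratic) character, which requires either the class-number formula or Landau's contour-integration trick to exclude a real zero at $s=1$. None of this is specific to the distributed-computing setting of the current paper, so my ``proof'' in context amounts to importing the statement from the literature and applying it with parameters $d = 4q$ and suitable $a$ to feed the Ramanujan graph construction.
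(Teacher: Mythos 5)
The paper treats this lemma exactly as you propose: it is stated without proof as the classical Dirichlet theorem, so citing it as a black box is the same approach, and your textbook sketch (characters, orthogonality, Euler products, and the non-vanishing $L(1,\chi)\neq 0$ including the delicate real-character case) is the standard correct argument. One small contextual nit: downstream the paper applies the lemma with modulus $4p$ to the progression $1+4pm$ to produce primes $q$ (and then uses quadratic reciprocity), rather than with modulus $4q$ to produce $p$, but this does not affect the statement or its proof.
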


We use the above result to prove the following.
\begin{corollary}\label{cor:trees:infinitely-many}
    The following statements hold:
    \begin{enumerate}
        \item There exists a constant \(\lambda\) such that, for each \(n \ge 1\), there is a prime \(p \in [n+1:(1+\lambda)n]\) such that \(p \equiv 1 \pmod{4}\);
        \item If \(p\) is a prime such that \(p \equiv 1 \pmod{4}\), there exist infinitely many primes \(q\) such that \(q \equiv 1 \pmod{4}\) and \(p \pmod{q}\) is a non-zero perfect square;
    \end{enumerate}
\end{corollary}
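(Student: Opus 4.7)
The plan is to prove the two items separately: part (1) is a density statement that follows from the prime number theorem in arithmetic progressions, while part (2) is a combination of \cref{lem:dirichlet-progression} (Dirichlet's theorem) with quadratic reciprocity.

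For part (1), I would invoke the prime number theorem for arithmetic progressions applied to the residue class $1 \pmod{4}$, which yields $\pi(x; 4, 1) \sim \frac{x}{2\ln x}$ as $x \to \infty$. In particular, for any fixed $\delta > 0$, the count of primes $p \equiv 1 \pmod 4$ in the interval $(n, (1+\delta)n]$ behaves like $\frac{\delta n}{2\ln n}$, which is positive for all sufficiently large $n \ge n_0(\delta)$. For the finitely many $n < n_0$ that remain, I would argue by direct inspection that some finite $\lambda_1$ suffices---for example, for $n = 1$ the prime $5$ forces $\lambda_1 \ge 4$. Setting $\lambda = \max(\delta, \lambda_1)$ then works uniformly for every $n \ge 1$.

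For part (2), fix a prime $p \equiv 1 \pmod 4$. The objective is to produce infinitely many primes $q \equiv 1 \pmod 4$, $q \neq p$, such that $p$ is a nonzero quadratic residue modulo $q$, i.e.\ $\left(\tfrac{p}{q}\right) = 1$. Because $p \equiv 1 \pmod 4$, quadratic reciprocity gives $\left(\tfrac{p}{q}\right) = \left(\tfrac{q}{p}\right)$ for every odd prime $q$, so it suffices that $q$ be a quadratic residue modulo $p$. Pick any nonzero quadratic residue $b$ modulo $p$ (e.g.\ $b = 1$). By the Chinese Remainder Theorem, there is an integer $a$ with $a \equiv 1 \pmod 4$ and $a \equiv b \pmod p$, and clearly $\gcd(a, 4p) = 1$. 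Applying \cref{lem:dirichlet-progression} to the arithmetic progression $\{a + 4pm : m \ge 0\}$ yields infinitely many primes $q \equiv a \pmod{4p}$. Each such $q$ automatically satisfies $q \equiv 1 \pmod 4$ and $q \equiv b \pmod p$, whence $\left(\tfrac{q}{p}\right) = 1$ and therefore $\left(\tfrac{p}{q}\right) = 1$; at most one of these primes equals $p$, and discarding it (if needed) leaves the desired infinite family.

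I expect the only mild subtlety to be in part (1), where the asymptotic from PNT for APs must be supplemented by a finite case check for small $n$ to obtain a constant $\lambda$ valid for \emph{all} $n \ge 1$; the rest of the argument amounts to a direct application of reciprocity and CRT on top of \cref{lem:dirichlet-progression}.
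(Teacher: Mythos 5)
Your proposal is correct and follows essentially the same route as the paper: part (1) via the prime number theorem for arithmetic progressions (the paper absorbs the small-$n$ cases by letting $\lambda$ grow until its threshold $n_\lambda$ reaches $1$, which is the same idea as your explicit finite check), and part (2) via quadratic reciprocity plus Dirichlet on the progression modulo $4p$, where your choice $b=1$ reduces exactly to the paper's progression $1+4pm$. No gaps.
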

\begin{proof}
    For the first result, we follow the same approach of \cite{tonellicueto2023}.
    Let \(a \in \nat, b \in \nat_+\).
    Consider the arithmetic progression defined by \(s_n = a+bn\), for \(n \ge 1\).
    Let  
    \[
        \pi_{a,b}(x) = \abs{\left\{p \ \st \ p \le x,\, p \text{ is prime},\, p \equiv a \! \pmod{b} \right\}}.
    \]
    By the prime number theorem for arithmetic progression, we know that 
    \[
          \lim_{n \to +\infty} \frac{\pi_{a,b} (n)}{\frac{n}{\eulerT{b} \log n}} = 1,
    \]
    where \(\eulerT{x}\) is the Euler's totient function.
    Let now
    \[
        \rho_{a, b}(n) = \abs{\left\{ k \le n \ \st \ s_k \text{ is prime} \right\}}.
    \]
    It holds that \(\rho_{a,b}(n) = \pi_{a,b}(a + bn) - \pi_{a,b}(a)\).
    Then,
    \[
        \lim_{n \to +\infty} \frac{\rho_{a,b}(n)}{\frac{a + bn}{\eulerT{b}\log 
        (a + bn)}  -\pi_{a,b}(a)} = \lim_{n \to +\infty} \frac{\rho_{a,b}(n)}{\frac{bn}{\eulerT{b}\log 
        (bn)} }= 1.
    \]
    Let \(\lambda > 0\) be any constant: then,
    \[
        \lim_{x \to +\infty} {\rho_{a,b}((1 + \lambda)x)} - {\rho_{a,b}(x)} = +\infty.
    \]
    Define \(n_\lambda\) to be the smallest natural number such that \({\rho_{a,b}((1 + \lambda)n)} - {\rho_{a,b}(n)} > 0\) for all \(n \ge n_\lambda\).
    Notice that the function \(\lambda \mapsto n_\lambda\) is monotone non-increasing and tends to (actually, reaches) 1  as \(\lambda\) tends to \(+\infty\):
    by choosing \(\lambda\) large enough, one obtains \(n_\lambda = 1\).
    Hence, there is a value \(\lambda = \lambda(a,b)\) such that, for each \(n \ge 1\), there is a prime \(p \in [n+1:(1+\lambda)n]\) that is congruent to \(a\) modulo \(b\).
    Setting \(a = 1\), \(b = 4\) yields the thesis.   

    Let us focus on the second claim: we use the same approach as in \cite{magidin2023}.
    By the law of quadratic reciprocity \cite[Theorem 98]{hardy2008}, \(p \pmod{q}\) is a perfect square if and only if \(q \pmod{p}\) is a perfect square.
    Consider the arithmetic progression \(1 + m(4p)\) for \(m \ge 1\).
    By \cref{lem:dirichlet-progression},there exist infinitely many \(m\) such that \(1 + 4pm\) is a prime.
    Set \(q = 1+4pm\).
    Now, \(q \equiv 1 \pmod{4}\) and \(q \equiv 1 \pmod{p}\), which implies that \(p \equiv 1 \pmod{q}\).
\end{proof}

Through \cref{lem:trees:high-girth,cor:trees:infinitely-many}, we can show that the family of trees admits a cheating graph for the \(c\)-coloring problem.

\begin{lemma}\label{lem:trees:cheatgraph-family}
    Let \(c \ge 2\) be an integer.
    For every \(N \ge 1\) and infinitely many \(n \in \nat\),
    there exists a value
    \[
        T = \myTheta{\log_c \frac n N}
    \]
    such that the family \(\FF\) of all trees of size \(n\) of height at most \(2T+3\) admits an \((n,k = n/N,N,T)\)-cheating graph for the \(c\)-coloring problem.
\end{lemma}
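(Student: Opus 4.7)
The plan is to follow the classical approach of \textcite{linial1992} and build the cheating graph from a Ramanujan graph. Fix $c$ and apply part~(1) of \cref{cor:trees:infinitely-many} to choose a prime $p \equiv 1 \pmod{4}$ with $p + 1 > 4c^2$; then any Ramanujan graph of degree $d = p + 1$ has chromatic number at least $\sqrt{d}/2 > c$, and $p$ depends only on $c$. By part~(2) of the same corollary, infinitely many primes $q \equiv 1 \pmod{4}$ satisfy the quadratic-residue hypothesis of \cref{lem:trees:high-girth}; for each such $q$ the lemma yields a $d$-regular graph $G_q$ on $q(q^2 - 1)$ vertices with $\girth{G_q} > 2 \log_p q$ and $\XX(G_q) > c$. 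Given $N \ge 1$, I set $n = N \cdot q(q^2 - 1)$ (so $\abs{V(G_q)} = n/N$) and $T = \floor{\log_p q} - 1$, which guarantees that every ball of radius $T + 1$ in $G_q$ is a tree and yields $T = \Theta(\log_p (n/N)) = \Theta(\log_c (n/N))$, since $p$ depends only on $c$.

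I claim $G_q$ is the required $(n, n/N, N, T)$-cheating graph. The subgraph cover consists of the closed $1$-neighborhoods $G_q^{(v)} = G_q[\NN_1(v)]$, one per vertex $v \in V(G_q)$, giving exactly $k = \abs{V(G_q)} = n/N$ subgraphs and trivially satisfying \cref{def:cheating-graph}(ii)(a), since the checking radius of $c$-coloring is $1$. Property~(i) of \cref{def:cheating-graph} holds because $\XX(G_q) > c$. For property~(ii)(b), fix $\xx_N = (x_1, \dots, x_N) \in [k]^N$ and build $H_{\xx_N}$ as follows: introduce a fresh root $r$; for each $j \in [N]$, take a disjoint copy of the tree $G_q[\NN_{T + 1}(v_{x_j})]$ and attach it to $r$ by a single edge joining $r$ to some vertex $b_j$ of the copy at distance exactly $T + 1$ from the image of $v_{x_j}$; finally pad with additional leaves of $r$ until $\abs{V(H_{\xx_N})} = n$. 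The size budget is ample: each ball has $\myO{d^{T + 1}} = \myO{q}$ vertices, so the $N$ copies together contribute only $\myO{Nq}$, far below $n = \Theta(Nq^3)$.

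The graph $H_{\xx_N}$ is a tree because each attached component is already a tree and $r$ adds no cycles; moreover, since each ball has diameter at most $2(T+1)$ and is attached to $r$ through $b_j$, every vertex lies at distance at most $1 + 2(T + 1) = 2T + 3$ from $r$, so $H_{\xx_N} \in \FF$. Because $r$ sits at distance $T + 2$ from each $v_{x_j}$ and thus at distance $T + 1$ from the closest vertex of $\tilde H_j = \NN_1(v_{x_j})$, $r$ lies outside the $T$-neighborhood of $\tilde H_j$; consequently $H_{\xx_N}[\NN_T(\tilde H_j)]$ coincides with the embedded ball $G_q[\NN_{T + 1}(v_{x_j})]$, and the $T$-neighborhoods of different $\tilde H_j$ remain pairwise disjoint. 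As $q$ ranges over infinitely many values, so does $n$, establishing the claim for infinitely many $n$ for each fixed $N$. The main technical subtlety is choosing the attachment vertex at distance \emph{exactly} $T + 1$ from the center, so that $r$ is placed just outside the $T$-neighborhood of the inner part; once this is arranged, the high-girth property of $G_q$ does the rest by making all relevant balls tree-like.
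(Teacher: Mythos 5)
Your proposal is correct and follows essentially the same route as the paper: a Ramanujan graph with $\XX > c$ (degree $d=p+1=\Theta(c^2)$ chosen via the Dirichlet-type corollary), the cover by closed $1$-neighborhoods with $k=n/N$, the girth bound making the radius-$(T+1)$ balls trees for $T\approx\log_p q$, and a tree $H_{\xx_N}$ of height at most $2T+3$ assembled from disjoint copies of these balls. Your only deviation is that you spell out the assembly of $H_{\xx_N}$ (a fresh root attached to each ball at a vertex at distance exactly $T+1$ from its center, plus padding leaves), which the paper leaves implicit but which matches its intent.
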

\begin{proof}
    Let \(p\) be the smallest prime congruent to 1 modulo 4 such that \(\frac{1}{2}\sqrt{p+1} > c\).
    \cref{lem:trees:high-girth,cor:trees:infinitely-many} imply that there are infinitely many \(q\) such that there is a \((p+1)\)-regular graph \(G_{p,q}\) on \(q(q^2 - 1)/2\) nodes that respect properties \cref{lem:trees:high-girth}.(1) to (3).  
    Choose any \(q\) that is at least \(p^2\) (notice that this choice is a proof's artifact---smaller values of \(q\) work too).
    Clearly, \(G_{p,q}\) is not \(c\)-colorable as \(\XX(G_{p,q}) \ge \frac 12 \sqrt{p+1} > c\).
    Let \(d = p+1\), \(k = q(q^2 - 1)/2\).
    The subgraph cover we consider is \(\{G_{p,q}[\NN_1(v)]\}_{v \in V(G_{p,q})}\): we enumerate such subgraphs by \(G_1 = G_{p,q}[\NN_1(v_1)], \dots, G_k = G_{p,q}[\NN_1(v_k)]\).
    Let \(T = \floor{\log_p q}-1\): notice that \(T \in [\frac 13 \log _d (k) - 1 : \log_d (k-1)]\).
    As \(2T+2 < \girth{G_{p,q}}\), then \(G_{p,q}[\NN_T(G_i)]\) is a \(d\)-regular tree of height \(T+1\) and, hence, is \(2\)-colorable.
    Furthermore, if \(G_{p,q}[\NN_T(G_i)]\) is rooted in \(v_i\), it has \(d^{T+1} \ge k^ \frac{1}{3}\) leaves at distance \(T+1\) from \(v_i\).
    Let \(\xx_N \in [k]^N\).
    Then, one can always take a tree \(H_{\xx_N}\) on \(n = kN\) nodes of height at most \(2T+3\) such that it contains a subgraph whose \(T\)-neighborhood is isomorphic to the disjoint union \( \sqcup_{i \in [N]}G_{p,q}[\NN_{T}(G_{\xx_i})]\).
    Notice that \(k = n/N\), hence \(T = \myTheta{\log_d n/N}\).
    The thesis follows by observing that \cref{cor:trees:infinitely-many} implies \(p \le (1+\lambda)c^2 \) for some constant \(\lambda\), hence \(c^2 \le d = (1+\lambda){c^2}\).
\end{proof}

Now we are ready to state our final result.

\thmTreesLb*
\begin{proof}
    By contradiction, assume 
    \[
        T = \myTheta{\log _c k} 
    \]
    as given by \cref{lem:trees:cheatgraph-family}, with \(n = Nk\).
    Let \(N = k \log \frac{1}{\varepsilon}\) (hence, \(k \sim \sqrt{n / \log \varepsilon^{-1}} \)).
    By \cref{lem:trees:cheatgraph-family,thm:lb-technique}, there exists a tree \(H\) on \(n\) nodes and height at most \(2T+3\) such that the success probability of any outcome \(\outcome\) with locality \(T\) on \(H\) is at most 
    \[
        \left(1 - \frac{1}{k}\right)^N \le e^{- \frac{N}{k}} \le \varepsilon. 
        \qedhere
    \]
\end{proof}
 \section*{Acknowledgments}

We would like to thank Dennis Olivetti for pointing out the work by \textcite{bogdanov2013} on highly chromatic graphs with small local chromatic number.
We are grateful to Sebastian Brandt and Michele Pernice for investigating and contributing to the understanding of topological properties of some structures used in preliminary versions of our results.
Furthermore, we thank Sameep Dahal for helping dealing with dependencies arisen in the lower bound technique for the non-signaling model, and we thank Darya Melnyk, Shreyas Pai, Chetan Gupta, Alkida Balliu, Amirreza Akbari, Yannic Maus, and all participants of the Distributed Graph Algorithms Workshop (February 2023, Freiburg, Germany) for the helpful discussions during the long development of this work.
We are also thankful to Armin Biere, Austin Buchanan, Bill Cook, Stefano Gualandi, and Bernardo Subercaseaux for discussions and advice related to our computational investigations of local and global chromatic numbers.
We also thank Elie Wolfe, Harry Buhrman and Paolo Perinotti for helpful discussions on the relation between the \nslclFull model and physical principles. 
We made use of e.g.\ Plingeling \cite{biere2017}, Treengeling \cite{biere2017}, Gimsatul \cite{fleury2022}, MapleSAT \cite{liang2016}, and PySAT \cite{ignatiev2018} in our computational experiments, and we also wish to acknowledge CSC – IT Center for Science, Finland, for computational resources.

This work was supported in part by the Research Council of Finland, Grant 333837 and by the German Research Foundation (DFG), Grant  491819048.
François Le Gall was supported by JSPS KAKENHI grants Nos.~JP20H05966,
JP20H04139 and MEXT Q-LEAP grant No.~JPMXS0120319794.
Augusto Modanese is supported by the Helsinki Institute of Information
Technology (HIIT).
Marc-Olivier Renou was supported by INRIA in the Action Exploratoire project DEPARTURE.
Xavier Coiteux-Roy was supported by a Postdoc.Mobility fellowship from the Swiss
National Science Foundation (SNSF).

\printbibliography

\end{document}